\documentclass[12pt]{article}

\usepackage[american]{babel}
\usepackage[round]{natbib}
\usepackage{microtype}
\usepackage{graphicx}
\usepackage{xcolor}
\usepackage{subcaption}
\usepackage{booktabs}
\usepackage{array}
\usepackage{multirow}
\usepackage{adjustbox}
\usepackage{float}
\usepackage{mathtools}
\usepackage{amsmath,amssymb}
\usepackage{amsthm}
\usepackage[margin=1.25in]{geometry}
\usepackage{times}
\usepackage{hyperref}
\usepackage[capitalize,noabbrev]{cleveref}
\usepackage{common}
\usepackage{algorithm2e}
\usepackage{autonum}
\usepackage{enumitem}
\usepackage{tikz}
\usetikzlibrary{positioning, arrows.meta}
\colorlet{revisionblue}{black}
\colorlet{REVISIONBLUE}{revisionblue}

\newcommand{\rev}[1]{{#1}}
\newcommand{\revheading}[1]{\texorpdfstring{\rev{#1}}{#1}}
\newenvironment{revblock}{\begingroup}{\endgroup}
\newif\ifarxiv
\arxivtrue

\theoremstyle{plain}
\newtheorem{theorem}{Theorem}
\newtheorem{proposition}[theorem]{Proposition}

\theoremstyle{definition}
\newtheorem{definition}[theorem]{Definition}

\theoremstyle{remark}
\newtheorem{remark}[theorem]{Remark}

\title{Improving online FDR procedures \\ via online analogs of e-closure and compound e-values}
\author{
    Ziyu Xu\thanks{Department of Statistics and Data Science, Carnegie Mellon University. \texttt{xzy@cmu.edu}}
    \and
    Lasse Fischer\thanks{Competence Center for Clinical Trials Bremen, University of Bremen. \texttt{fischer1@uni-bremen.de}}
    \and
    Aaditya Ramdas\thanks{Departments of Statistics and Data Science, and Machine Learning, Carnegie Mellon University. \texttt{aramdas@cmu.edu}}
}
\date{}

\begin{document}
\renewcommand{\thefootnote}{\fnsymbol{footnote}}
\maketitle
\setcounter{footnote}{0}
\renewcommand{\thefootnote}{\arabic{footnote}}

\begin{abstract}
In many scientific applications, hypotheses are generated and tested continuously in a stream. We develop a framework for improving online multiple testing procedures with false discovery rate (FDR) control under arbitrary dependence. Our approach is two-fold: we construct methods via the online e-closure principle, as well as a novel formulation of online compound e-values that is defined through donations. This yields strict power improvements over state-of-the-art e-value and p-value procedures while retaining FDR control. We further derive algorithms that compute the decision at time $t$ in $O(\log t)$ time, and we demonstrate improved empirical performance on synthetic and real data.
 \end{abstract}
\tableofcontents

\section{Introduction}

Large-scale hypothesis testing has become prevalent in multiple industries, such as A/B testing, genomics, clinical trials, neuroscience, \rev{and online monitoring}, where the scientist wishes to filter for hypotheses where a true discovery is made and further investigation or action is merited. \rev{Online error control is particularly important in platform clinical trials \citep{robertson2023onlineplatform,zehetmayer2022online} and streaming anomaly detection \citep{lavin_evaluating_real-time_2015}, where evidence arrives sequentially and testing decisions may trigger costly follow-up actions.} Automated methods for doing so have become particularly prevalent as of late, with the rapidly improving capabilities of large language models (LLMs) to either act as an autonomous agent for generating hypotheses that it can then investigate, or as part of a human-in-the-loop system where the human scientist generates hypotheses and the model helps to triage them. In either case, hypotheses are often formulated in a sequential manner where the scientist (or LLM agent) generates some candidate hypotheses, gathers or analyzes data in the context of hypotheses, and then continues to generate more hypotheses to further elucidate their model of the world. Thus, one must provide statistical guardrails to such a system to ensure that the agent or system does not make too many false discoveries, and overfit their conclusions to the data at hand.

This motivates the problem of online multiple hypothesis testing \citep{foster_a-investing_procedure_2008}, where we assume that there is an infinite stream of hypotheses $H_1, H_2, \dots$, and a new hypothesis arrives at each time step. As each hypothesis arrives, we assume we must make the decision whether to reject or accept the null hypothesis. Let $\mathcal{N} \subset \naturals \coloneqq \{1, 2, \dots\}$ denote the set of null hypotheses, i.e., the subset of indices where the null hypothesis is true. Consequently, we output a monotonically growing sequence of discovery sets $\emptyset= R_0 \subseteq R_1 \subseteq R_2, \dots$ for each time step $t \in \naturals$, where $R_t$ contains $t$ if and only if we reject the $t$th null hypothesis. 

\paragraph{Error metrics to control.} The false positive criterion we wish to control is an online version of the false discovery rate (FDR) \citep{benjamini_controlling_false_1995}, an error criterion that has been central to statistical methodology in the offline multiple testing setting, where the number of hypotheses is known beforehand, for several decades. 
We define the FDR, along with the false discovery proportion (FDP), as follows:
\begin{align}
    \FDP_S(R) \coloneqq \frac{|S \cap R|}{|R| \vee 1}, \qquad \FDR(R) \coloneqq \expect\left[\FDP_{\Ncal}(R)\right].
\end{align}
In the above notation for FDP, $S$ is a candidate set of null hypotheses and $R$ is the discovery set the error metric is being evaluated on. The FDR is defined as the expectation of the FDP on the true null hypotheses $\Ncal$. In the online setting, we output a novel discovery set at each time step $t$, which motivates the following online error metric that was recently proposed by \citet{fischer_online_generalization_2025}:
\begin{align}
    \SupFDR(\mathbf{R})\coloneqq \expect\left[\sup_{t \in \naturals} \FDP(R_t)\right].
\end{align}

\begin{figure*}[t] 
\centering
\begin{tikzpicture}[
    node distance = 2.4cm and 1.1cm,
    box/.style = {draw, rounded corners, align=center, minimum width=4cm, minimum height=1cm, font=\footnotesize},
    title/.style = {font=\bfseries\footnotesize, align=center}
]

\node[title] (title1) {Existing offline result};
\node[title, right=of title1, xshift=0.7cm] (title2) {Our online extension};
\node[title, right=of title2, xshift=0.85cm] (title3) {Our online shortcut};

\node[box, below=0.2cm of title1] (b1) {FDR e-Closure Principle \\
\citep{xu_bringing_closure_2025a}};
\node[box, right=of b1] (b2) {Online SupFDR e-Closure \\ Principle
(Theorem~\ref{fact: online supfdr e-closure})};
\node[box, right=of b2] (b3) {Dynamic programming \\ (Theorem~\ref{theo: closed_elond_supfdr})};

\node[box, below=0.2cm of b1] (a1) {FDR control with \\ compound e-values \\ \citep{ignatiadis_asymptotic_compound_2025}};
\node[box, right=of a1] (a2) {SupFDR control with \\ $\gamma$-online compound e-values \\ (Proposition~\ref{prop: online-weighted-sc-compound})};
\node[box, right=of a2] (a3) {$\boldsymbol{\gamma}$-weighted donations \\ (Theorem~\ref{thm:donation-elond})};

\draw[->, thick] (a1) -- (a2);
\draw[->, thick] (a2) --  (a3);

\draw[->, thick] (b1) -- (b2);
\draw[->, thick] (b2) -- (b3);

\coordinate (paperSW) at ([xshift=-0.25cm,yshift=-0.5cm]a2.south west);
\coordinate (paperNE) at ([xshift=1cm,yshift=0.1cm]title3.north east);
\draw[dashed, rounded corners] (paperSW) rectangle (paperNE);
\node[anchor=south west, font=\footnotesize, fill=white, inner sep=1pt]
    at ([xshift=0.08cm,yshift=0.02cm]paperSW) {(this paper)};

\end{tikzpicture}
\caption{Summary of the paper's main technical contributions.\label{fig: contribution diagram}}
\end{figure*}
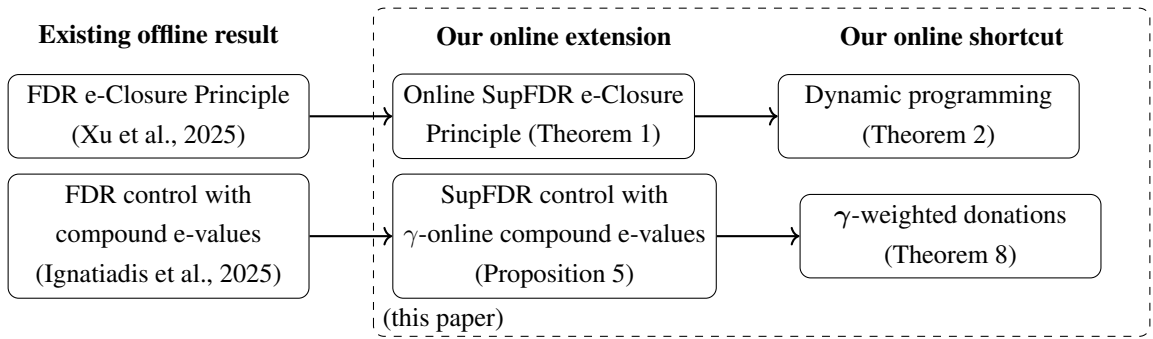

Online multiple testing with false discovery rate (FDR) control has been studied extensively in recent work \citep{xu_online_multiple_2023}. In this problem, we receive hypotheses in a stream, along with some data associated with those hypotheses. We then wish to produce a discovery set with control of the FDR at a fixed level $\delta \in (0,1]$, and maximize the number of discoveries that we make. An interesting challenge that the online setting presents, in contrast to the classical offline multiple testing setting where one possesses a fixed, known number of hypotheses beforehand, is how to define the notion of FDR, now that one will output multiple discovery sets (usually one per time step).

\citet{xu_online_multiple_2023} initiated a line of study in the use of e-values for onlineFDR control, a weaker form of FDR control for the online multiple testing setting (although \citet{fischer_online_generalization_2025} later showed that their methods controlled SupFDR as well), and assumed that the dependence between the data collected for different hypotheses was unknown. E-values are nonnegative random variables with expectation at most 1 under the null hypothesis, and have been shown to be useful for multiple testing in both offline and online settings --- see \citet{ramdas_hypothesis_testing_2025} for an overview. Practically, many settings where online multiple testing is utilized (e.g., platform clinical trials \citep{robertson2023onlineplatform, zehetmayer2022online}) involve adaptive and sequential data collection, where e-values are more natural to construct than p-values \citep{ramdas_admissible_anytime-valid_2020}. Further, offline e-closure ideas \citep{xu_bringing_closure_2025a} suggest a unified way to characterize and improve e-value procedures, which we now adapt to the online SupFDR setting.
\begin{revblock}
\paragraph{A brief overview of multiple testing.} Multiple testing methods provide error guarantees when many hypotheses are tested together, and classically it is assumed that one knows the finite number of hypotheses a priori, i.e., offline multiple testing. The benchmark method for this setting is the Benjamini-Hochberg (BH) procedure \citep{benjamini_controlling_false_1995}, which is valid under independence or positive dependence among the p-values for each hypothesis, while the Benjamini-Yekutieli (BY) correction \citep{benjamini_control_false_2001} provides an analog that is valid under arbitrary dependence among the p-values. Another natural criterion for multiple testing is family-wise error rate (FWER) control, which controls the probability of making even one false discovery. FDR methods became popular as a less conservative alternative to FWER methods, since FWER control can be overly stringent in many applications \citep{benjamini_controlling_false_1995,lehmann_generalizations_2005}. In online testing, hypotheses arrive sequentially and decisions must be made as the stream unfolds. In this setting, e-LOND is the standard e-value LOND procedure under arbitrary dependence, and r-LOND is a reshaped p-value LOND procedure that achieves arbitrary dependence validity by using reshaping functions. We recall both procedures formally below before presenting our improvements.
\end{revblock}

\paragraph{Contributions} In this paper we introduce a framework that improves the power of existing online multiple testing procedures with FDR control. We first introduce an online e-closure principle for SupFDR control and apply it to improve a wide variety of online e-value and calibrated p-value procedures. \rev{This is not a direct reuse of the offline e-closure construction: the online setting requires an increasing e-collection over an infinite stream, validity uniformly over time, and test levels that can be recomputed sequentially using only past data.} This methodology yields strict improvements over the status quo, but the resulting closed procedures are computationally expensive and may require $O(t^2)$ time to compute the rejection decision at time $t$. This quickly becomes costly in long streams.

Thus, another of the key contributions of this paper is deriving a practical algorithm that can improve power while being computationally tractable in the online setting. We provide a novel framework for generally improving e-value based online multiple testing procedures with FDR control, and show that it has practical power improvements over existing methods while requiring only $O(\log t)$ time to compute the rejection decision at time $t$. Our method is based on a novel notion of \emph{online compound e-values}, which generalizes the notion of compound e-values for offline multiple testing \citep{ignatiadis_asymptotic_compound_2025} to the online setting. We show that online compound e-values can be used to generally improve online multiple testing procedures with FDR control, and demonstrate its empirical performance on both synthetic and real data. We also provide a user-friendly implementation of our methods in the accompanying code, and demonstrate that online compound e-value based methods are also computationally more efficient in practice.

Our primary contributions are as follows, and are also visually summarized in \Cref{fig: contribution diagram}.
\begin{enumerate}
    \item \emph{Closure-based strict power improvements for standard online testing.} We introduce an online e-closure principle for SupFDR control and use it to formulate strict improvements over e-LOND \citep{xu_online_multiple_2023} and r-LOND \citep{zrnic_asynchronous_online_2021}. We derive novel explicit formulations for the next test level, which involves an optimization problem over subsets of the first $t - 1$ hypotheses. Since naive optimization would require computation that is exponential in $t$, we also provide dynamic programming decomposition that only requires $O(t^2)$ time to compute the next test level.
    \item \emph{A computationally efficient donation framework for strict improvement.} Since $O(t^2)$ computation remains inefficient for moderately large values of $t$ (i.e., in the thousands), we introduce a novel framework based on donations and online compound e-values, and show that it can also strictly improve e-LOND and r-LOND while remaining computationally tractable, requiring only $O(\log t)$ computation per time step. A central technical ingredient is our notion of \emph{online compound e-values}, which powers the donation framework.
    \item \emph{Extensions beyond the standard online multiple testing setting.} Our donation framework is not restricted to only improving standard online multiple testing algorithms. We extend our framework to variants of the online multiple testing problem such as the acceptance-to-rejection setting of \citet{fischer_online_generalization_2025} and the decision deadlines setting of \citet{fisher_online_control_2022}, where we also derive strict improvements of existing algorithms. Lastly, we show that we can construct an efficient version of eBH \citep{wang_false_discovery_2022} that is strictly more powerful for offline multiple testing, and is computationally more efficient than the $\CeBH$ procedures of \citet{xu_bringing_closure_2025a}; for $m$ hypotheses, we require $O(m\log m)$ computation as opposed to the $O(m^2 \log m)$ required by $\CeBH$.
\end{enumerate}
We then demonstrate the empirical performance of our method on both synthetic and real data, where we see that both the power and computational improvements of the methods in this paper are nontrivial. \rev{\Cref{tab:method-summary} gives a compact summary of all improved procedures and their computational costs.} By default, theorem proofs are deferred to \Cref{sec:deferred-proofs}.

\begin{table}[t]
\centering
\begin{revblock}
\footnotesize
\setlength{\tabcolsep}{1pt}
\renewcommand{\arraystretch}{0.95}
\newcolumntype{L}[1]{>{\raggedright\arraybackslash}p{#1}}
\begin{adjustbox}{max width=\columnwidth}
\begin{tabular}{@{}L{0.15\columnwidth}L{0.17\columnwidth}L{0.14\columnwidth}L{0.23\columnwidth}L{0.24\columnwidth}@{}}
\toprule
Setting & Base & Type & Method & Runtime \\
\midrule
Standard &
\ELOND{} &
Closed &
$\CeLOND$ &
$O(t^2)$ \\[1.5pt]
 &  &
Donation &
donation \ELOND{} &
$O(\log t)$ \\[15pt]
 & \rLOND{} &
Closed &
$\overline{\textnormal{r-LOND}}$ &
$O(t^2)$ \\[1.5pt]
 &  &
Donation &
donation \rLOND{} &
$O(\log t)$ \\
\midrule
ARC &
online \eBH{} &
Donation &
donation online \eBH{} &
$O(t\log t)$ \\
\midrule
Decision deadlines &
e-TOAD &
Donation &
donation e-TOAD &
$O(t\log t)$ \\
\midrule
Offline &
\eBH{} &
Closed &
$\CeBH$ &
$O(m^2 \log m)$ \\[1.5pt]
 &  &
Donation &
donation \eBH{} &
$O(m \log m)$ \\
\bottomrule
\end{tabular}
\end{adjustbox}
\caption{\rev{Method summary of improved procedures grouped by setting and base procedure. For online methods, the ``Runtime'' column gives the asymptotic complexity of the $t$th online decision; cumulative costs over $m$ online hypotheses are obtained by summing these per-decision costs. For offline methods, $m$ is the number of hypotheses in the batch.}}
\label{tab:method-summary}
\end{revblock}
\end{table}

\section{Improving e-value based procedures}

Each arriving hypothesis $H_t$ is paired with either an e-value $E_t$ or a p-value $P_t$, depending on the procedure under consideration. Formally, the $t$th e-value and p-value satisfy the following properties, respectively:
\begin{align}
    &\expect[E_t] \leq 1\text{ if }t \in \mathcal{N}. \label{eq:evalue-def}\\
    &\prob(P_t \leq s) \leq s \text{ for all }s \in [0, 1]\text{ if }t \in \mathcal{N}.
\end{align}

Online multiple testing algorithms can be viewed as producing a sequence of test levels $\boldsymbol{\alpha}$, where $\alpha_t$ is the level used at time $t$. For e-value procedures, we reject $H_t$ when $E_t \geq \alpha_t^{-1}$; for p-value procedures, we reject $H_t$ when $P_t \leq \alpha_t$. Thus, the induced discovery sets are:
\begin{align}
    R_t &\coloneqq 
    \begin{cases}
    \{i \in [t] : E_i \geq \alpha_i^{-1}\} & \text{ for e-values}\\
    \{i \in [t] : P_i \leq \alpha_i\}& \text{ for p-values}
    \end{cases},
\end{align}
where $[t] \coloneqq \{1, \ldots, t\}$. We say that a procedure with discovery sets $\mathbf{R}$ \emph{strictly improves} another procedure with discovery sets $\mathbf{R}'$ if $R_t \supseteq R_t'$ for all $t \in \naturals$ almost surely, and there exists a data-generating distribution such that $\prob(\exists t \in \naturals : R_t \supset R_t') > 0$.
Under unknown dependence between e-values, e-LOND controls SupFDR \citep{xu_online_multiple_2023,fischer_online_generalization_2025}. The e-LOND procedure uses a sequence of test levels $\boldsymbol{\alpha}$ defined by
\begin{align}
    \alpha_t &\coloneqq \delta \gamma_t (|R_{t - 1}| \rev{+ 1}), \label{eq: elond alpha}
\end{align}
where $\boldsymbol{\gamma}$ is a fixed nonnegative sequence of user-chosen constants such that $\sum_{t \in \naturals} \gamma_t \leq 1$.
Similarly, \citet{zrnic_asynchronous_online_2021,javanmard_online_rules_2018} showed that the following sequence of test levels $\boldsymbol{\alpha}$ ensures FDR control for p-values with arbitrary dependence:
\begin{align}
    \alpha_t &\coloneqq \delta \gamma_t \beta_t(|R_{t - 1}| + 1),\label{eq: rlond alpha}
\end{align} where $(\beta_t)$ is a sequence of reshaping functions \citep{blanchard_two_simple_2008}. A function $\beta: [0, \infty) \rightarrow [0,\infty)$ is a \emph{reshaping function} if $\beta$ can be written as $\beta(r) = \int_0^r x\ d\nu(x)$ where $\nu$ is any probability measure on $[0, \infty)$. A typical choice, which is the online analog of the \citet{benjamini_control_false_2001} correction for offline multiple testing under arbitrary dependence, is $\beta_t(r) = (\lfloor r \rfloor \wedge t) / \ell_t$, where $\ell_t \coloneqq \sum_{i \in [t]} i^{-1}$ is the $t$th harmonic number.

We will demonstrate how to improve on both of these methods in the online setting in this paper, among others.
\subsection{The online SupFDR e-closure principle}\label{subsec:supfdr-e-closure}

We develop an online e-closure principle that improves on existing methods in the online setting. 
Recently, \citet{xu_bringing_closure_2025a}  proposed an e-Closure Principle and used this to improve the e-BH and the BY procedure for offline FDR control. 
\citet{fischer2024online} generalized the classical Closure Principle for FWER control \citep{marcus1976closed} to the online setting by using increasing families of local tests. In the following, we extend these ideas to introduce a SupFDR e-Closure Principle for the online setting.

Let $(\Fcal_t)_{t \in \naturals}$ be a filtration where $\Fcal_t$ denotes information available by time $t$. Let $\sigma(X)$ denote the sigma-algebra formed by a set $X$. Thus, we define each element of the filtration as the sigma-algebra $\Fcal_t = \sigma(\{E_i\}_{i \in [t]})$ if one is working with e-values, and $\Fcal_t = \sigma(\{P_i\}_{i \in [t]})$ if one is working with p-values. For the purposes of this framework, we consider a more general type of online procedure that outputs a collection $\Ccal_t \subseteq 2^{[t]}$ of candidate rejection sets at each time $t$, with $\Ccal_t$ being measurable w.r.t.\ $\Fcal_t$.
An e-collection is a family $(E_S)_{S \subset \naturals,\ |S|<\infty}$, with $E_\emptyset=0$, such that each $E_S$ is an e-value for the intersection null $H_S = \cap_{i \in S} H_i$ when $S$ is nonempty. In our online setting, we require \emph{increasing} e-collections:
\[
E_S \leq E_{S \cup S'}\text{ for all }S,S' \subset \naturals\text{ s.t. }\min S' > \sup S.
\]
Given an increasing e-collection, define:
\begin{align}
    \Ccal_t \coloneqq \left\{R \subseteq [t]:  E_S \geq \frac{\FDP_S(R)}{\delta} \text{ for all }S \subseteq [t]\right\}.\label{eq:online-eclosure-col}
\end{align}
When $(E_S)$ is increasing, the collections are nested ($\Ccal_1 \subseteq \Ccal_2 \subseteq \dots$), which is needed to obtain SupFDR control.

\begin{theorem}[Online SupFDR e-closure]\label{fact: online supfdr e-closure}
    Let $(E_S)_{S \subset \naturals,\ |S|<\infty}$ be an increasing e-collection.
    Assume $E_S$ is measurable with respect to $\Fcal_{\sup(S)}$ for all finite nonempty $S$.
    Then the associated e-closure collections $(\Ccal_t)_{t \in \naturals}$ in \eqref{eq:online-eclosure-col} form an online procedure that satisfies
    \[
        \expect\left[\sup_{t \in \naturals}\sup_{R \in \Ccal_t} \FDP_{\mathcal{N}}(R)\right] \leq \delta.
    \]
    Consequently, any discovery sequence $\mathbf{R}$ with $R_t \in \Ccal_t$ for all $t \in \naturals$ controls SupFDR at level $\delta$.
\end{theorem}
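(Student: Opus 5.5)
Fix $t$ and any $R \in \Ccal_t$. The plan is to take $S = \Ncal \cap [t]$ in the defining condition \eqref{eq:online-eclosure-col}; this is a legitimate subset of $[t]$. Since $R \subseteq [t]$, we have $\FDP_{\Ncal}(R) = \FDP_{\Ncal\cap[t]}(R)$, and so
\[
\FDP_{\Ncal}(R) \le \delta\, E_{\Ncal \cap [t]}.
\]
If $\Ncal\cap[t]=\emptyset$, the left side is $0$, which is consistent with the convention $E_\emptyset=0$. Taking the supremum over $R\in\Ccal_t$ and then over $t$ gives
\[
\sup_{t}\sup_{R\in\Ccal_t}\FDP_\Ncal(R) \le \delta \sup_t E_{\Ncal\cap[t]}.
\]
It therefore suffices to show $\expect[\sup_t E_{\Ncal\cap[t]}]\le 1$.

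\textbf{Monotonicity step.} Let $S_t = \Ncal\cap[t]$. For $s<t$, the set $S_t\setminus S_s$ consists of null indices larger than $s \ge \sup S_s$. Hence $\min(S_t\setminus S_s) > \sup S_s$, and the increasing property gives $E_{S_s}\le E_{S_t}$. The process $t\mapsto E_{S_t}$ is therefore nondecreasing, and $\sup_t E_{S_t} = \lim_t E_{S_t}$. Some care is needed when $S_s=\emptyset$: there $E_\emptyset=0 \le$ anything, so monotonicity is trivial.

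\textbf{Limit step.} This is the main obstacle. If $\Ncal$ is finite, then $S_t$ is eventually equal to $\Ncal$, and $\expect[E_{\Ncal}]\le 1$ because $E_\Ncal$ is an e-value for the intersection null $H_\Ncal$, which is true. In general, each $S_t$ is a finite set of true nulls, so $\expect[E_{S_t}]\le 1$ for every $t$. The monotone convergence theorem, applied to the nonnegative nondecreasing sequence $E_{S_t}$, then yields
\[
\expect\left[\sup_t E_{S_t}\right] = \lim_t \expect[E_{S_t}] \le 1.
\]
Thus, thanks to monotonicity, no optional-stopping or martingale argument is needed. The role of measurability w.r.t.\ $\Fcal_{\sup S}$ is only to ensure that $\Ccal_t$ is $\Fcal_t$-measurable, so that it is a genuine online procedure. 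The nestedness $\Ccal_t\subseteq\Ccal_{t+1}$ is noted as a side consequence: for $S\subseteq[t+1]$, write $S = (S\cap[t]) \cup S'$ and use the increasing property together with $\FDP_S(R)=\FDP_{S\cap[t]}(R)$ for $R\subseteq[t]$.

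\textbf{Conclusion.} Combining the steps gives
\[
\expect\left[\sup_t\sup_{R\in\Ccal_t}\FDP_\Ncal(R)\right]\le\delta.
\]
For any $\mathbf{R}$ with $R_t\in\Ccal_t$ for all $t$, we have $\sup_t\FDP_\Ncal(R_t)$ bounded pointwise by the left-hand integrand. This gives $\SupFDR(\mathbf{R})\le\delta$.
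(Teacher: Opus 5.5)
Your proof is correct and follows essentially the same route as the paper. Both choose $S=\mathcal{N}\cap[t]$ to get $\FDP_{\mathcal{N}}(R)\leq\delta E_{\mathcal{N}\cap[t]}$, use the increasing property to make this bound nondecreasing in $t$, and conclude by monotone convergence. The only difference is immaterial: the paper applies monotone convergence to the finite-horizon suprema $\sup_{t\leq T}$ of the FDP, whereas you apply it directly to $E_{\mathcal{N}\cap[t]}$.
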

Proof details are deferred to \Cref{subsec:proof-online-e-closure}.
When working with a stream of e-values, one explicit increasing e-collection is
\begin{align}
E_S = \sum_{i \in S} \gamma_{|S \cap [i]|} E_i \text{ for all finite }S\subset \naturals.\label{eq: reset closure evalue}
\end{align}
For example, $E_{\{2,3\}} = \gamma_1 E_2 + \gamma_2 E_3$.
This e-collection is increasing: if we append indices larger than $\max S$, the existing summands stay unchanged and we only add nonnegative terms.
As a result, we get the following constraint on $E_t$ for making a discovery at time $t$, i.e., for $R_{t - 1} \cup \{t\}$ to be in $\Ccal_t$:
\begin{align}
    \FDP_S(R_{t - 1} \cup \{t\}) \leq \delta E_S \text{ for all }S \in 2^{[t]}.
\end{align}
With this closure framework in place, we now begin our first construction: applying the above increasing e-collection to derive a procedure that strictly improves over e-LOND.
\begin{revblock}
To see the resulting level, the constraints with $t\notin S$ are already implied by $R_{t-1}\in \Ccal_{t-1}$. For the constraints with $t\in S$, write $S=S'\cup\{t\}$ with $S'\subseteq [t-1]$ and solve the inequality
\[
\frac{1+|S'\cap R_{t-1}|}{|R_{t-1}|+1}
\leq \delta\left(E_{S'}+\gamma_{|S'|+1}E_t\right)
\]
for $E_t$. Taking the largest required lower bound over $S'$ yields the following test levels; we call the resulting procedure \emph{$\textCeLOND$ (closed e-LOND)}.
\end{revblock}
\begin{align}
\alpha_t
&=\min_{\substack{S \subseteq [t - 1]:\\ D_t(S)>0}}
\frac{\delta \gamma_{|S| + 1} (|R_{t - 1}| + 1)}{D_t(S)},\label{eq:closed-elond-alpha}\\
D_t(S)
&\coloneqq 1 +  |S \cap R_{t - 1}| - \delta E_S (|R_{t - 1}| + 1).
\end{align}

\begin{theorem}\label{theo: closed_elond_supfdr}
    The $\textCeLOND$ procedure defined by the test levels in \eqref{eq:closed-elond-alpha} ensures SupFDR control at level $\delta$ under arbitrary dependence between e-values, and strictly improves over e-LOND when $\boldsymbol{\gamma}$ is nonincreasing.\label{thm: closed elond}
\end{theorem}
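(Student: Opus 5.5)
The plan is to derive SupFDR control from Theorem~\ref{fact: online supfdr e-closure}. It suffices to check two things: that the e-collection \eqref{eq: reset closure evalue} meets the hypotheses of that theorem, and that the $\textCeLOND$ discovery sets satisfy $R_t \in \Ccal_t$ for every $t$, which I will prove by induction.

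\textbf{Validity of the e-collection.} Fix a finite nonempty $S \subseteq \Ncal$ and list it as $i_1<\dots<i_k$. Then $|S\cap[i_j]| = j$, so
\[
\expect[E_S]=\sum_{j=1}^k \gamma_j \expect[E_{i_j}] \le \sum_{j}\gamma_j\le 1 .
\]
This holds under arbitrary dependence, since only linearity of expectation is used. Measurability of $E_S$ with respect to $\Fcal_{\sup S}$ is immediate, and monotonicity was already noted in the text.

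\textbf{Induction $R_t\in\Ccal_t$.} The base case is $R_0=\emptyset$, which lies in every collection. Assume $R_{t-1}\in\Ccal_{t-1}$ and write $r=|R_{t-1}|$.
\begin{itemize}
\item \emph{If $H_t$ is not rejected}, then $R_t=R_{t-1}$. Constraints with $t\notin S$ hold by hypothesis. For $S=S'\cup\{t\}$, we have $\FDP_S(R_{t-1})=\FDP_{S'}(R_{t-1})\le\delta E_{S'}\le \delta E_S$, using that the collection is increasing.
\item \emph{If $H_t$ is rejected}, take first $t\notin S$. Then $\FDP_S(R_{t-1}\cup\{t\})=|S\cap R_{t-1}|/(r+1)\le \FDP_S(R_{t-1})\le\delta E_S$.
\item \emph{If $H_t$ is rejected and $t\in S$}, write $S=S'\cup\{t\}$ with $S'\subseteq[t-1]$. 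Then $E_S=E_{S'}+\gamma_{|S'|+1}E_t$, and the constraint is exactly $\delta\gamma_{|S'|+1}E_t(r+1)\ge D_t(S')$.
 \begin{itemize}
 \item When $D_t(S')\le 0$, this is trivial because $E_t\ge0$.
 \item When $D_t(S')>0$, it follows from $E_t\ge 1/\alpha_t$ together with $\alpha_t$ being the minimum in \eqref{eq:closed-elond-alpha}.
 \end{itemize}
 If no $S'$ has $D_t(S')>0$, I use the convention $\alpha_t=\infty$, and rejection is always admissible.
\end{itemize}
Applying Theorem~\ref{fact: online supfdr e-closure} then gives SupFDR $\le\delta$.

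\textbf{Domination of e-LOND.} Let $R'_t$ denote the e-LOND sets. I will show by induction that $R_t\supseteq R'_t$, which reduces to showing $\alpha_t\ge \delta\gamma_t(r+1)$, since then $\alpha_t \ge \delta\gamma_t(|R'_{t-1}|+1)$ as well.

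This is the step I expect to need the most care. For each $S\subseteq[t-1]$ with $D_t(S)>0$, I need
\[
\frac{\gamma_{|S|+1}}{D_t(S)}\ge\gamma_t .
\]
Because $|S|+1\le t$ and $\boldsymbol\gamma$ is nonincreasing, $\gamma_{|S|+1}\ge\gamma_t$. It therefore suffices to show $D_t(S)\le 1$, that is, $\delta E_S(r+1)\ge |S\cap R_{t-1}|$. This follows from the inductive fact $R_{t-1}\in\Ccal_{t-1}$: if $S\cap R_{t-1}\neq\emptyset$, then $r\ge1$ and $\delta E_S\ge |S\cap R_{t-1}|/r$, and otherwise the inequality is trivial.

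\textbf{Strictness.} I will use a two-step example with $\delta\gamma_1 E_1\ge1$, so that $H_1$ is rejected by both procedures.
\begin{itemize}
\item Taking $S=\{1\}$ gives $D_2(\{1\})=2-2\delta\gamma_1E_1\le 0$, so this constraint is inactive, and the $S=\emptyset$ term gives $\alpha_2=2\delta\gamma_1$.
\item e-LOND uses $2\delta\gamma_2$ at time $2$.
\item If $\gamma_1>\gamma_2$, pick $E_2$ with positive probability in $[1/(2\delta\gamma_1),\,1/(2\delta\gamma_2))$. Then $\textCeLOND$ rejects $H_2$ while e-LOND does not, so $R_2\supset R'_2$ strictly with positive probability.
\item If $\boldsymbol\gamma$ is constant on the first two indices, I would instead take $E_1$ slightly smaller, so that $0<D_2(\{1\})<1$. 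This again yields $\alpha_2>2\delta\gamma_2$, and the same construction applies.
\end{itemize}
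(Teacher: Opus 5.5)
Your arguments for SupFDR control (validity of the e-collection, then the induction $R_t\in\mathcal{C}_t$), for domination (via $D_t(S)\le 1$ and $\gamma_{|S|+1}\ge\gamma_t$), and for strictness when $\gamma_1>\gamma_2$ are essentially the paper's proof, with the case analysis written out in more detail. The step that fails is your final bullet for $\gamma_1=\gamma_2$.

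\textbf{Why the final bullet fails.} If you lower $E_1$ below $(\delta\gamma_1)^{-1}$, then $H_1$ is no longer rejected by either procedure. So $|R_1|=0$ and $D_2(\{1\})=1-\delta\gamma_1E_1$. A rejected $H_1$ together with $D_2(\{1\})>0$ is impossible. More fundamentally, you overlooked that $S=\emptyset$ always has $D_t(\emptyset)=1>0$, so your convention $\alpha_t=\infty$ is never used. This set contributes the candidate $\delta\gamma_1(|R_{t-1}|+1)$ to the minimum in \eqref{eq:closed-elond-alpha}. Hence $\alpha_2\le 2\delta\gamma_1=2\delta\gamma_2$ whenever $\gamma_1=\gamma_2$, and $\alpha_2>2\delta\gamma_2$ can never happen. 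Combined with your own domination bound, if $\gamma_1=\dots=\gamma_t$ then closed e-LOND and e-LOND have identical levels and identical rejection sets through time $t$ on every sample path. So no choice of $E_1$ can give strictness at $t=2$ in this case.

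\textbf{How to repair it.} Work at the first strict drop of $\boldsymbol{\gamma}$.
If $\boldsymbol{\gamma}\equiv 0$, both procedures never reject, so strict improvement is impossible; this degenerate case must be excluded from the statement.
Otherwise $\gamma_1>0$, and summability forces a first index $j$ with $\gamma_1=\dots=\gamma_j>\gamma_{j+1}$.
Take an event where $E_1,\dots,E_j\ge(\delta\gamma_1)^{-1}$. On it, both procedures reject $H_1,\dots,H_j$.
For every nonempty $S\subseteq[j]$ we have $E_S=\gamma_1\sum_{i\in S}E_i$, hence $D_{j+1}(S)\le 1+|S|-(j+1)|S|=1-j|S|\le 0$.
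Only $S=\emptyset$ survives in the minimum, giving $\alpha_{j+1}=\delta\gamma_1(j+1)>\delta\gamma_{j+1}(j+1)$, which is the e-LOND level.
Placing positive probability on $E_{j+1}$ between the two thresholds finishes the argument.
For $j=1$ this is exactly your example, and the paper's. The paper's own proof also treats only $\gamma_1>\gamma_2$, so this construction is what covers the general nonincreasing case.
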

\begin{revblock}
\begin{proof}
The preceding derivation gives the exact threshold required for $R_{t-1}\cup\{t\}$ to satisfy all closure constraints involving the new index $t$. The constraints not involving $t$ are already satisfied because $R_{t-1}\in\Ccal_{t-1}$. Thus, whenever $\CeLOND$ rejects at time $t$, we have $R_t=R_{t-1}\cup\{t\}\in\Ccal_t$; if it does not reject, then $R_t=R_{t-1}\in\Ccal_t$ by monotonicity of the closure constraints. By induction, $R_t\in\Ccal_t$ for all $t$, so SupFDR control follows from \Cref{fact: online supfdr e-closure}.

It remains to compare the test levels with e-LOND. Since $R_{t-1}\in\Ccal_{t-1}$, for every $S\subseteq[t-1]$ we have $\FDP_S(R_{t-1})\leq\delta E_S$, and hence
\[
    |S\cap R_{t-1}|-\delta E_S|R_{t-1}|\leq 0 .
\]
For any nonvacuous constraint in \eqref{eq:closed-elond-alpha}, the denominator is therefore at most one, while the nonincreasingness of $\boldsymbol{\gamma}$ gives $\gamma_{|S|+1}\geq\gamma_t$. Each candidate level is consequently at least $\delta\gamma_t(|R_{t-1}|+1)$, the e-LOND level in \eqref{eq: elond alpha}. Hence $\CeLOND$ dominates e-LOND.

Strict improvement is possible already at $t=2$. On an event with positive probability where $E_1=(\delta\gamma_1)^{-1}$, both methods reject $H_1$, and the $S=\{1\}$ constraint in \eqref{eq:closed-elond-alpha} is vacuous because
\[
    1+|S\cap R_1|-\delta E_S(|R_1|+1)=2-2\delta\gamma_1E_1=0.
\]
Thus $\alpha_2^{\CeLOND}=2\delta\gamma_1>2\delta\gamma_2=\alpha_2^{\textnormal{e-LOND}}$ when $\gamma_1>\gamma_2$. A two-point construction with positive probability on this value of $E_1$, and with $E_2$ falling between the two rejection thresholds with positive probability, gives an explicit distribution on which $\CeLOND$ rejects $H_2$ while e-LOND does not.
\end{proof}
\end{revblock}

\paragraph{Computation via dynamic programming.} Although we have shown that $\CeLOND$ strictly improves over e-LOND, it is expensive to compute. While \citet{xu_bringing_closure_2025a} developed computational shortcuts for the offline e-closure principle, it is not obvious how to adapt such shortcuts to the online setting due to the weighted merging of e-values involved in solving the maximization problem in \eqref{eq:closed-elond-alpha}. A naive computation would require exponential time in $t$, while dynamic programming reduces this to a worst-case $O(t^2)$ computation per time step. To carry out this computation, we define:
\begin{align}
v_t(i, k) \coloneqq \max_{S \subseteq [i]: |S| = k} |S \cap R_{t - 1}| - \delta E_S  (|R_{t - 1}| + 1).
\end{align} The $\CeLOND$ choice of $\alpha_t$ in \eqref{eq:closed-elond-alpha} can be computed as
\begin{align}
    \alpha_t = \min_{\substack{k \in \{0\}\cup [t - 1]:\\ 1 + v_t(t - 1, k)>0}} \frac{\delta \gamma_{k + 1}(|R_{t - 1}| + 1)}{1 + v_t(t - 1, k)}.
\end{align} We can compute $v_t(i, k)$ for $i \in [t - 1]$ and $k \in \{0\}\cup [t - 1]$ using the following dynamic programming formula (for $k \geq 1$, and with $v_t(i, 0) = 0$):
\begin{align}
    &v_t(i, k) = \max\left\{v_t(i - 1, k), \right.\\
    &\left. v_t(i - 1, k - 1) + \ind\{i \in R_{t - 1}\} - \delta \gamma_{k} E_i (|R_{t - 1}| + 1)\right\}.
\end{align}
Thus, computing the dynamic program requires $O(t^2)$ time. However, this quickly becomes computationally costly in practice, motivating the need for more efficient methods.
\begin{remark}
One minor drawback of the e-collection in \eqref{eq: reset closure evalue} is that showing strict improvement in \Cref{thm: closed elond} requires $\boldsymbol{\gamma}$ to be nonincreasing. To avoid this restriction, we can define
\begin{align}
E_S = \sum_{i \in S} \gamma_{i, S} E_i,
\end{align} 
where $\gamma_{i, S} \coloneqq \sum_{j \in N_S(i)} \gamma_j$ and
$N_S(i) \coloneqq \{\max(S \cap [i - 1]) + 1, \dots, i\}$, with the convention $\max(\emptyset)=0$.
Intuitively, this is still a weighted sum of the e-values in $S$, but each weight aggregates $\gamma$-mass between consecutive selected indices in time order. This requires different computational shortcuts for test-level computation; see \Cref{sec: online gamma comp}.
\end{remark}

\subsection{Compound e-values via donation}\label{subsec:compound-evalues-donation}

So far, online multiple testing methods have primarily treated e-values either as direct inputs or as intermediaries for improving p-value procedures. However, we will instead leverage the notion of \emph{compound e-values} to substantially improve power. 
In offline multiple testing with a fixed $m \in \naturals$, one calls nonnegative random variables $(\tilde{E}_1, \dots, \tilde{E}_m)$ \emph{compound e-values} if $\sum_{i \in \mathcal{N}} \expect[\tilde{E}_i] \leq m$ \citep{ignatiadis_asymptotic_compound_2025}. This relaxes the usual e-value condition in \eqref{eq:evalue-def}. We now introduce the online, $\boldsymbol{\gamma}$-weighted analog together with donation sequences used to construct it.
Fix a nonnegative sequence $\boldsymbol{\gamma} = (\gamma_t)_{t \in \naturals}$ with $\sum_{t \in \naturals}\gamma_t \leq 1$.
\begin{definition}[$\boldsymbol{\gamma}$-online compound e-values and $\boldsymbol{\gamma}$-weighted donations]\label{def:online-compound-and-donation}
For this fixed $\boldsymbol{\gamma}$:
\begin{enumerate}[label=(\roman*)]
    \item $\tilde{\mathbf{E}} = (\tilde{E}_t)_{t \in \naturals}$ is a stream of \emph{$\boldsymbol{\gamma}$-online compound e-values} if $\sum_{i \in \mathcal{N}} \gamma_i \expect[\tilde{E}_i] \leq 1$.
    \item $\mathbf{B} = (B_t)_{t \in \naturals}$ is a \emph{$\boldsymbol{\gamma}$-weighted donation} if $\sum_{i \in [t]} \gamma_i B_i \leq 0$ for all $t \in \naturals$, and $B_t \geq -(E_t \wedge 1)$ for all $t \in \naturals$.
\end{enumerate}
\end{definition}

We first note that weighted self-consistency applied to online compound e-values has valid SupFDR control. Define the following collection of weighted self-consistent discovery sets:
\begin{align}
    \Ccal \coloneqq \left\{R \subset \naturals: \tilde{E}_t \geq \frac{1}{\delta \gamma_t |R|}\text{ for all }t \in R\right\} \label{eq:online weighted self-consistency}
\end{align}

\begin{proposition}\label{prop: online-weighted-sc-compound}
    Let $\tilde{E}_1, \tilde{E}_2, \dots$ be a stream of $\boldsymbol{\gamma}$-online compound e-values. Then, $$\expect\left[\sup_{R \in \boldsymbol{\Ccal}} \FDP(R)\right] \leq \delta.$$    
\end{proposition}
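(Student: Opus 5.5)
The plan is to bound the FDP of every self-consistent set pointwise by a single quantity that does not depend on $R$, and then take expectations. Fix an arbitrary $R \in \Ccal$ with $R \neq \emptyset$; the case $R = \emptyset$ gives $\FDP = 0$ and is trivial. For each $t \in R \cap \mathcal{N}$, self-consistency in \eqref{eq:online weighted self-consistency} gives $\delta \gamma_t |R| \tilde{E}_t \geq 1$. Hence
\[
\FDP_{\mathcal{N}}(R) = \frac{|R \cap \mathcal{N}|}{|R|} = \sum_{t \in R \cap \mathcal{N}} \frac{1}{|R|} \leq \sum_{t \in R \cap \mathcal{N}} \delta \gamma_t \tilde{E}_t \leq \delta \sum_{t \in \mathcal{N}} \gamma_t \tilde{E}_t .
\]
The last step uses nonnegativity of the $\tilde{E}_t$ and $\gamma_t$, which lets us extend the sum from $R \cap \mathcal{N}$ to all of $\mathcal{N}$.

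The right-hand side no longer depends on $R$. This handles the supremum over the possibly uncountable, data-dependent collection $\Ccal$, including infinite $R$. For infinite $R$ one interprets $\FDP$ via $|R| = \infty$, in which case the constraint $\tilde{E}_t \geq 1/(\delta\gamma_t|R|)$ is vacuous; there I would restrict $\Ccal$ to finite sets, or note that the bound should be checked under whatever convention the paper adopts for infinite sets. This is the only delicate point, and I expect it to be the main, if minor, obstacle. Since the discovery sets $R_t \subseteq [t]$ in the online procedures are finite, restricting to finite $R$ is what the subsequent theorems require.

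Taking the supremum over $R \in \Ccal$ and then expectations, the monotone convergence theorem (Tonelli for nonnegative series) lets us exchange the expectation with the countable sum:
\[
\expect\Bigl[\sup_{R \in \Ccal} \FDP(R)\Bigr] \leq \delta \sum_{t \in \mathcal{N}} \gamma_t \expect[\tilde{E}_t] \leq \delta,
\]
by the definition of $\boldsymbol{\gamma}$-online compound e-values in \Cref{def:online-compound-and-donation}(i). Measurability of the supremum can be sidestepped by reading the expectation as an outer expectation, since the bound above holds pointwise. No dependence assumptions are needed, which is consistent with the arbitrary-dependence claim.
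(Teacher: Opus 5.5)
Your proof is correct and is essentially the paper's argument: the paper uses the same pathwise bound $\FDP_{\mathcal{N}}(R)\le \delta\sum_{t\in\mathcal{N}}\gamma_t\tilde E_t$ and then takes expectations. The only cosmetic difference is that the paper first restricts to sets in $[T]$ and lets $T\to\infty$ by monotone convergence, which implicitly assumes $\Ccal$ contains only finite sets (as you make explicit), whereas you bound by the full series and interchange sum and expectation via Tonelli; also, since the finite subsets of $\naturals$ form a countable family, the supremum is measurable and no outer expectation is needed.
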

A full proof is provided in \Cref{subsec:proof-online-weighted-sc-compound}.
Now that we have shown that online compound e-values can be used to control SupFDR, we introduce a construction that, when combined with the test levels of e-LOND, strictly dominates e-LOND applied only to the original e-values. This construction is computationally efficient, requiring only $O(\log t)$ time to compute the compound e-value and hence the rejection decision at each time step. Further, it is robust to unknown dependence between e-values.
Now, we will define how to construct online compound e-values via \emph{donation}.
Let $\mathbf{B}$ be any $\boldsymbol{\gamma}$-weighted donation sequence as in \Cref{def:online-compound-and-donation}. Note that $\mathbf{B}$ can be arbitrarily dependent with $\mathbf{E}$. We first note the following property.
\begin{proposition}\label{prop: donation compound}
    Let $\tilde{E}_t = E_t + B_t$. Then, for all $t \in \naturals$, $\tilde{\mathbf{E}}$ is a valid sequence of $\boldsymbol{\gamma}$-online compound e-values. 
\end{proposition}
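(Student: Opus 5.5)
The plan is to verify the two requirements of \Cref{def:online-compound-and-donation}(i) directly: that each $\tilde{E}_t$ is nonnegative, and that the $\boldsymbol{\gamma}$-weighted null sum of expectations is at most one. Nonnegativity is immediate. The donation constraint gives $B_t \geq -(E_t\wedge 1) \geq -E_t$, so $\tilde{E}_t = E_t + B_t \geq 0$.

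For the sum condition, I will work with a finite horizon $t$ and pass to the limit at the end. The plan is to show that for every $t\in\naturals$,
\[
\sum_{i\in\mathcal{N}\cap[t]} \gamma_i \expect[\tilde{E}_i] \leq \sum_{i\in[t]}\gamma_i \leq 1 .
\]
Since every $\tilde{E}_i \geq 0$, the left-hand side is nondecreasing in $t$. Letting $t\to\infty$ then yields $\sum_{i\in\mathcal{N}}\gamma_i\expect[\tilde{E}_i]\leq 1$ by monotone convergence for series of nonnegative terms.

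To prove the finite-horizon bound, I split the null sum into the original e-values and the donations:
\[
\sum_{i\in\mathcal{N}\cap[t]}\gamma_i\tilde{E}_i=\sum_{i\in\mathcal{N}\cap[t]}\gamma_i E_i+\sum_{i\in\mathcal{N}\cap[t]}\gamma_i B_i .
\]
The key step handles the donation sum over nulls by rewriting it as the full donation sum minus the non-null part. The donation constraint gives $\sum_{i\in[t]}\gamma_i B_i\leq 0$, and for non-nulls we have $-B_i\leq E_i\wedge 1\leq 1$. Together these give, pathwise,
\[
\sum_{i\in\mathcal{N}\cap[t]}\gamma_i B_i = \sum_{i\in[t]}\gamma_iB_i-\sum_{i\in[t]\setminus\mathcal{N}}\gamma_iB_i \leq \sum_{i\in[t]\setminus\mathcal{N}}\gamma_i .
\]
Each $B_i$ is bounded below by $-1$, and this sum is bounded above pathwise, so all terms are integrable or quasi-integrable. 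Taking expectations is therefore legitimate, using linearity over the finitely many indices in $[t]$. Combined with $\expect[E_i]\leq 1$ for $i\in\mathcal{N}$, this gives
\[
\sum_{i\in\mathcal{N}\cap[t]}\gamma_i\expect[\tilde{E}_i]\leq \sum_{i\in\mathcal{N}\cap[t]}\gamma_i+\sum_{i\in[t]\setminus\mathcal{N}}\gamma_i=\sum_{i\in[t]}\gamma_i ,
\]
which is the desired bound.

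The only delicate point is integrability and the interchange of expectation with summation, since $B_i$ may be arbitrarily dependent on $\mathbf{E}$ and could be large and positive. I handle this in two ways. First, I work at a finite horizon and take expectations only of the pathwise upper bound. Second, I use that $\expect[\tilde{E}_i]\in[0,\infty]$ is well defined because $\tilde{E}_i\geq 0$. If some $\expect[B_i]$ were $+\infty$ for a null $i$, the pathwise inequality would contradict itself, since the right-hand side is bounded. So no infinities arise, and the argument needs no independence assumptions.
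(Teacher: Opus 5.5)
Your proposal is correct and follows the paper's own argument. You split off the null donation sum, bound it pathwise by $-\sum_{i\in[t]\setminus\mathcal{N}}\gamma_iB_i\leq\sum_{i\in[t]\setminus\mathcal{N}}\gamma_i$ using $\sum_{i\in[t]}\gamma_iB_i\leq 0$ and $-B_i\leq E_i\wedge 1\leq 1$, and conclude $\sum_{i\in\mathcal{N}\cap[t]}\gamma_i\expect[\tilde{E}_i]\leq\sum_{i\in[t]}\gamma_i\leq 1$ for every $t$. The nonnegativity check, the $t\to\infty$ limit and the integrability bookkeeping are details the paper leaves implicit; the one nit is that your ``no infinities'' remark fails for null indices with $\gamma_i=0$, where $B_i$ is unconstrained above, but this is harmless under the convention $0\cdot\infty=0$.
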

Proof details are deferred to \Cref{subsec:proof-donation-compound}.

As a result, we can choose any $\boldsymbol{\gamma}$-weighted donation $\mathbf{B}$ to construct online compound e-values $\tilde{\mathbf{E}}$ for e-LOND while retaining SupFDR control. Furthermore, we can take a supremum over choices of $\mathbf{B}$ and still retain FDP control. Let $\Ccal(\mathbf{B})$ be the collection of discovery sets defined by the online weighted self-consistency condition in \eqref{eq:online weighted self-consistency} with online compound e-values $\tilde{E}_t = E_t + B_t$ for a specific choice of $\mathbf{B}$. For a stream of arbitrarily dependent e-values $\mathbf{E}$, let $\mathcal{B}$ be the set of all valid $\boldsymbol{\gamma}$-weighted donations. Then we have:
\begin{proposition}\label{prop: sup donation}
    For any stream of e-values $E_1, E_2, \dots$, we have that 
    \begin{align}
        \expect\left[\sup_{\mathbf{B} \in \mathcal{B}} \sup_{R \in \Ccal(\mathbf{B})} \FDP(R)\right] \leq \delta.
    \end{align}
\end{proposition}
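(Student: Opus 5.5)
The plan is to bound the FDP of every set in every $\Ccal(\mathbf{B})$ pointwise by a single random variable that does not depend on $\mathbf{B}$, and whose expectation is at most $\delta$. This is the same mechanism that presumably underlies \Cref{prop: online-weighted-sc-compound}. The key steps are a pointwise inequality for one fixed donation, followed by showing that the resulting bound survives the supremum over $\mathcal{B}$.

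First, fix $\mathbf{B}\in\mathcal{B}$ and a nonempty $R\in\Ccal(\mathbf{B})$ (the case $R=\emptyset$ gives $\FDP=0$). For each $t\in R$ the self-consistency condition gives $\delta\gamma_t|R|\tilde E_t\ge 1$, so
\begin{align}
\FDP(R)=\frac{1}{|R|}\sum_{t\in R\cap\mathcal{N}}1\le \delta\sum_{t\in R\cap\mathcal{N}}\gamma_t\tilde E_t=\delta\sum_{t\in R\cap\mathcal{N}}\gamma_t(E_t+B_t).
\end{align}
The donation terms are not sign-constrained, and we cannot simply drop or extend them. Only partial sums over all of $[t]$ are nonpositive, not sums over $R\cap\mathcal{N}$. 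I would therefore bound things more cleverly. Nulls outside $R$ contribute nothing, and since $\tilde E_t\ge E_t-(E_t\wedge 1)\ge 0$, we can enlarge the sum to all of $\mathcal{N}$. We can also restrict to $t\le T:=\max R$ when $R$ is finite. For infinite $R$ the FDP is $\limsup$-controlled via finite truncations, and I would handle that by working with $R\cap[T]$ and a limiting argument, or note that infinite $R$ forces $|R|=\infty$ and the condition degenerates. This gives
\begin{align}
\FDP(R)\le\delta\sum_{t\in\mathcal{N}\cap[T]}\gamma_t(E_t+B_t).
\end{align}

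The main obstacle is removing the dependence on $\mathbf{B}$ from this bound. Here I would use the donation constraints. The decision for non-null $t$ must also be accounted for: for $t\notin\mathcal{N}$ we need a lower bound $\gamma_tB_t\ge-\gamma_t(E_t\wedge 1)$. Then $\sum_{t\in\mathcal{N}\cap[T]}\gamma_tB_t=\sum_{t\in[T]}\gamma_tB_t-\sum_{t\in[T]\setminus\mathcal{N}}\gamma_tB_t\le\sum_{t\notin\mathcal{N}}\gamma_t(E_t\wedge1)$. This leaves a $\mathbf{B}$-free bound, but the non-null term is unfortunately not small in expectation. So I would instead mirror the proof of \Cref{prop: donation compound}. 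That result presumably shows $\sum_{i\in\mathcal{N}}\gamma_i\expect[E_i+B_i]\le1$ by arguing that $\sum_{i\in\mathcal{N}}\gamma_i B_i\le\sum_{i\in\mathcal{N}}\gamma_i(\ldots)$ is absorbed using $\sum_i\gamma_i\le1$. I would extract from that proof a pointwise $\mathbf{B}$-free dominating variable $U$ with $\sum_{t\in\mathcal{N}}\gamma_t(E_t+B_t)\le U$ for every $\mathbf{B}\in\mathcal{B}$ and $\expect[U]\le1$.

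If that works, the conclusion is immediate: $\sup_{\mathbf{B}}\sup_{R\in\Ccal(\mathbf{B})}\FDP(R)\le\delta U$, and taking expectations gives the claim. Measurability of the supremum is avoided by bounding it by the measurable $\delta U$, which requires only outer expectation. If no universal $U$ is available from the proof of \Cref{prop: donation compound}, I would fall back to a pathwise argument. On each sample path I would choose the maximizing pair $(\mathbf{B}^\ast,R^\ast)$ measurably, or $\epsilon$-approximately. Since $\mathcal{B}$ imposes no measurability or predictability requirement on $\mathbf{B}$, $\mathbf{B}^\ast$ is itself a valid (data-dependent) donation. Applying \Cref{prop: online-weighted-sc-compound} with $\tilde E=E+B^\ast$, via \Cref{prop: donation compound}, then gives the bound, and letting $\epsilon\to0$ finishes the proof. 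I expect this selection step to be the part most needing care, but it is the most robust route.
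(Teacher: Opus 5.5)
Your middle step already contains the paper's entire proof, but you discard it on a false premise. From $\FDP(R)\le\delta\sum_{t\in\Ncal\cap[T]}\gamma_t(E_t+B_t)$ and your inequality $\sum_{t\in\Ncal\cap[T]}\gamma_tB_t\le\sum_{t\in[T]\setminus\Ncal}\gamma_t(E_t\wedge1)$, you get, for every $\mathbf{B}\in\mathcal{B}$ and every finite $R\in\Ccal(\mathbf{B})$,
\[
\FDP(R)\le\delta U,\qquad U\coloneqq\sum_{t\in\Ncal}\gamma_tE_t+\sum_{t\notin\Ncal}\gamma_t(E_t\wedge1).
\]
This is exactly the $\mathbf{B}$-free dominating variable you later hope to ``extract'' from \Cref{prop: donation compound}.

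Your objection that the non-null term is not small in expectation misreads what is needed. That term only has to fit in the rest of the budget. Since $E_t\wedge1\le1$, its expectation is at most $\sum_{t\notin\Ncal}\gamma_t$. The null term has expectation at most $\sum_{t\in\Ncal}\gamma_t$ because each null $E_t$ is an e-value. Hence $\expect[U]\le\sum_t\gamma_t\le1$ (Tonelli handles the infinite sums), and the proposition follows in one line.

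This is precisely the paper's argument. It is also the same bound $-\gamma_iB_i\le\gamma_i(E_i\wedge1)\le\gamma_i$ used in the proof of \Cref{prop: donation compound}. Keep your truncation at $T=\max R$ and your use of $\tilde E_t\ge0$ to enlarge $R\cap\Ncal$ to $\Ncal\cap[T]$. These are slightly more careful than the paper, which sums $\gamma_iB_i$ over all of $\Ncal$ even though the donation constraint only controls finite partial sums.

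As written, though, the argument you actually commit to is the fallback. Its crucial step is only asserted: an $\epsilon$-maximizing donation $\mathbf{B}^*$, chosen measurably in $\omega$, so that \Cref{prop: donation compound} and \Cref{prop: online-weighted-sc-compound} apply to $E+\mathbf{B}^*$. This can be repaired. There are countably many finite $R$, and for each one, whether some $\mathbf{B}\in\mathcal{B}$ places $R$ in $\Ccal(\mathbf{B})$ is decided by checking partial sums of the coordinatewise-minimal admissible donation up to $\max R$, which is a measurable event. But that is real extra work, and the direct bound makes it unnecessary: $\delta U$ is measurable, does not depend on $\mathbf{B}$ or $R$, and dominates the double supremum pointwise.
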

See \Cref{subsec:proof-sup-donation} for the proof.

As a result, we can define an algorithm that is equivalent to choosing the following test levels. Define the following ``wealth'' quantity for each $t \in \naturals$:\begin{align}
    \bar{W}_t &\coloneqq \sum_{i \in R_{t - 1}} \gamma_i\left(\left(E_i - \frac{1}{\delta \gamma_i (|R_{t - 1}| + 1)}\right) \wedge 1\right)\\
    & + \sum_{i \in [t - 1] \setminus R_{t - 1}} \gamma_i(E_i \wedge 1),
\end{align}
where $R_{t-1}=\{i\in [t-1]: E_i\geq \alpha_i^{-1}\}$ is defined by the following test levels:
\begin{align}
    \alpha_t \coloneqq \frac{\delta \gamma_t (|R_{t - 1}| + 1)}{1 - (\delta (|R_{t-1}| + 1)\bar{W}_t \wedge 1)}. \label{eq:donation-alpha}
\end{align} We refer to this procedure as \emph{donation e-LOND}. \rev{The convention $R_0=\emptyset$ makes $\bar{W}_1=0$ and hence $\alpha_1=\delta\gamma_1$. Intuitively, $\bar{W}_t$ is the largest $\gamma$-weighted amount of past evidence that can be shifted to $E_t$ while keeping all previous discoveries in $R_{t-1}$ valid and respecting the donation budget. Thus, previous e-values that exceed what is needed for their current rejection can donate excess mass, and unrejected e-values can donate up to $E_i\wedge 1$.}

\rev{Note that $(1 - (\delta (|R_{t-1}| + 1)\bar{W}_t \wedge 1))^{-1} \geq 1$, so donation e-LOND always has test levels at least as large as e-LOND. The two improvements should be viewed as complementary: neither donation e-LOND nor $\CeLOND$ dominates the other pointwise, although both dominate e-LOND.} We now have the following result.
\begin{theorem}\label{thm:donation-elond}
    Donation e-LOND controls the SupFDR at level $\delta$, and strictly improves over e-LOND.
\end{theorem}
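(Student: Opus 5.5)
The plan is to reduce SupFDR control to \Cref{prop: sup donation}. For each realization I will build a valid $\boldsymbol{\gamma}$-weighted donation $\mathbf{B}$ and check that the donation e-LOND set satisfies $R_t \in \Ccal(\mathbf{B})$ for every $t$. Then
\[
\sup_t \FDP(R_t) \leq \sup_{\mathbf{B}\in\mathcal{B}}\sup_{R\in\Ccal(\mathbf{B})}\FDP(R),
\]
and taking expectations finishes the validity claim. One subtlety is that $\Ccal(\mathbf{B})$ is defined with $|R|$ for a fixed $\mathbf{B}$. Since $|R_t|$ grows over time, I would apply the argument at each fixed $t$ separately: construct a $t$-dependent donation $\mathbf{B}^{(t)}\in\mathcal{B}$ with $R_t\in\Ccal(\mathbf{B}^{(t)})$. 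This is enough because the supremum over $\mathbf{B}$ sits inside the expectation.

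\textbf{Construction of $\mathbf{B}^{(t)}$.} Fix $t$, write $r=|R_t|$, and order the rejections $\tau_1<\dots<\tau_r$. Consider a rejection $\tau_k$ made at time $s=\tau_k$, where $|R_{s-1}|+1=k$. By \eqref{eq:donation-alpha} it satisfies
\[
E_s \geq \frac{1-(\delta k \bar W_s\wedge 1)}{\delta\gamma_s k},
\quad\text{i.e.}\quad
\gamma_s E_s + \bar W_s \geq \frac{1}{\delta k} .
\]
The deficit of $E_s$ relative to $1/(\delta\gamma_s k)$ is therefore covered by the $\gamma$-weighted amount $\bar W_s$. I would set $B_s$ to $1/(\delta\gamma_s k)-E_s$ if this is positive and $0$ otherwise. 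The negative donations are then taken from earlier indices $i<s$: a rejected index gives up at most its excess $E_i-1/(\delta\gamma_i k)$, capped at $1$, and a non-rejected index gives up at most $E_i\wedge 1$.

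\textbf{Key monotonicity.} Because the number of rejections only grows, the requirement $1/(\delta\gamma_i|R|)$ for an earlier rejection $i$ decreases over time. Donations made at time $s$ keep earlier rejections self-consistent with the enlarged denominator $k$, and they remain valid for the final $r\ge k$. Since every donor index precedes the receiving index, the prefix sums satisfy $\sum_{i\le u}\gamma_iB_i\le 0$ for all $u$. The lower bound $B_i\ge -(E_i\wedge 1)$ holds by the caps. Finally, since $\tilde E_{\tau_k}\ge 1/(\delta\gamma_{\tau_k}k)\ge 1/(\delta\gamma_{\tau_k} r)$, we get $R_t\in\Ccal(\mathbf{B}^{(t)})$.

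\textbf{Main obstacle: bookkeeping of cumulative donations.} $\bar W_s$ is recomputed at each $s$ as if nothing had been donated yet. Excess mass spent to cover a deficit at time $s$ must therefore not be counted again at a later time $s'$. I would handle this inductively, maintaining the invariant
\[
\sum_{i<s}\gamma_i\bigl(\tilde E_i - \text{required}_i\bigr)_{\text{donatable}} = \bar W_s - (\text{already-used deficits}).
\]
If this invariant fails, I would instead use a cleaner global argument. The goal is to show directly that
\[
\sum_{k\le r}\Bigl(\frac{1}{\delta r}-\gamma_{\tau_k}E_{\tau_k}\Bigr)_+ \le \sum_{i\le t,\,i\notin R_t}\gamma_i(E_i\wedge 1)+\sum_{i\in R_t}\gamma_i\Bigl(E_i-\frac{1}{\delta\gamma_i r}\Bigr)_+\wedge\gamma_i,
\]
with the prefix ordering handled by a greedy earliest-donor assignment. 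The global inequality should follow from the single condition at the last rejection time, combined with the fact that lower deficits at earlier times only help. I expect this step to require the most care, and it may need the $\wedge 1$ truncations to be reconciled precisely.

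\textbf{Strict improvement.} Domination follows from $\alpha_t$ being at least the e-LOND level, as noted before the theorem. Inductively, $R_{t-1}\supseteq R_{t-1}^{\text{e-LOND}}$ gives larger $|R_{t-1}|$ and hence a larger level. Since $\bar W_t\ge0$, the donation level is at least e-LOND's level computed with the same $R_{t-1}$.

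For strictness, take $E_1\in(0,(\delta\gamma_1)^{-1})$ with positive probability, so $H_1$ is not rejected and $\bar W_2=\gamma_1(E_1\wedge1)>0$. Then $\alpha_2>\delta\gamma_2$. Choosing $E_2$ with positive probability in $[\alpha_2^{-1},(\delta\gamma_2)^{-1})$ gives a rejection by donation e-LOND but not by e-LOND.
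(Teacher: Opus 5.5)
\textbf{Verdict: genuine gap.} Your reduction matches the paper's: for each $t$, exhibit one valid donation $\mathbf{B}^{(t)}$ with $R_t\in\Ccal(\mathbf{B}^{(t)})$, then invoke \Cref{prop: sup donation}. Your strictness example with $H_1$ unrejected is also correct.

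\textbf{Where it breaks.} You never show that your $\mathbf{B}^{(t)}$ satisfies the budget constraint $\sum_{i\le u}\gamma_iB_i\le 0$ for every prefix $u$, and that is the step that carries the theorem. The invariant you write down is never verified. Your fallback inequality does not follow from the rejection condition at the last rejection time. At $s=\tau_r$ it reads $\bar W_s+\gamma_s\wedge(\gamma_sE_s-1/(\delta r))\ge 0$. When $\gamma_s(E_s-1)>1/(\delta r)$, the condition $\gamma_sE_s+\bar W_s\ge 1/(\delta r)$ says nothing about $\bar W_s+\gamma_s$. The remark that ``lower deficits at earlier times only help'' is not an argument for it either.

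\textbf{The obstacle comes from a misreading of $\bar W_s$.} For $i\in R_{s-1}$, the summand $\gamma_i\wedge(\gamma_iE_i-1/(\delta(|R_{s-1}|+1)))$ is \emph{not} truncated at zero. An earlier discovery that still needs donated mass therefore enters with a negative sign. So $\bar W_s$ is already the net surplus after all outstanding deficits (at the current count) are covered, and nothing is counted twice. The flip side is that $\bar W_t\ge 0$, which you treat as obvious in the dominance step, is not automatic.

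\textbf{What is missing: the invariant $\bar W_u\ge 0$ for all $u$, by induction.}
\begin{itemize}
\item Base case: $\bar W_1=0$.
\item A non-rejection at time $u$ adds $\gamma_u(E_u\wedge 1)\ge 0$.
\item At a rejection with $k=|R_{u-1}|+1$, the old summands can only increase when the denominator becomes $k+1$. The rejection rule gives $\gamma_uE_u+\bar W_u\ge 1/(\delta k)$, so
\[
\bar W_{u+1}\ \ge\ \min\Bigl\{\bar W_u+\gamma_u,\ \tfrac{1}{\delta k}-\tfrac{1}{\delta(k+1)}\Bigr\}\ \ge\ 0 .
\]
\end{itemize}

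\textbf{How this closes the proof.} With the invariant, no multi-step flow of donations is needed. As in the paper, at a rejection time $t$ set:
\begin{itemize}
\item $\gamma_iB_i^{(t)}$ equal to minus the $i$th summand of $\bar W_t$ for $i<t$;
\item $\gamma_tB_t^{(t)}=\bar W_t$;
\item $B_i^{(t)}=0$ for $i>t$.
\end{itemize}
Reuse this vector until the next rejection. Self-consistency of $R_t$ and $B_i^{(t)}\ge-(E_i\wedge 1)$ are then immediate, and the total sum is zero. For $u<t$, we have $|R_t|\ge |R_u|+1$ and the summands are nondecreasing in the count, so $-\sum_{i\le u}\gamma_iB_i^{(t)}\ge\bar W_{u+1}\ge 0$.

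Your global-inequality-plus-greedy-assignment route amounts to the same prefix statements. It can be completed, but only through this induction.
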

\begin{revblock}
\begin{proof}[Proof sketch]
Suppose at time $t$ that we want the enlarged rejection set $R_{t-1}\cup\{t\}$ to be weighted self-consistent for some valid $\boldsymbol{\gamma}$-weighted donation $\mathbf{B}$. This requires
\begin{align}
    E_t+B_t
    &\geq \frac{1}{\delta\gamma_t(|R_{t-1}|+1)}, \label{eq:donation-derive-new}\\
    E_i+B_i
    &\geq \frac{1}{\delta\gamma_i(|R_{t-1}|+1)}
    \qquad\text{for each }i\in R_{t-1}. \label{eq:donation-derive-old}
\end{align}
The donation budget gives $\gamma_tB_t\leq -\sum_{i<t}\gamma_iB_i$. To make rejection of $H_t$ as easy as possible, we maximize the donation available to $t$ by choosing the smallest feasible $B_i$ for each past index. For $i\in R_{t-1}$, \eqref{eq:donation-derive-old} and $B_i\geq -(E_i\wedge 1)$ imply
\begin{align}
    B_i^{\min}
    =
    \left(\frac{1}{\delta\gamma_i(|R_{t-1}|+1)}-E_i\right)\vee (-1),
\end{align}
so
\begin{align}
    -\gamma_iB_i^{\min}
    =
    \gamma_i\wedge\left(\gamma_iE_i-\frac{1}{\delta(|R_{t-1}|+1)}\right).
\end{align}
For $i\notin R_{t-1}$, there is no rejection-preserving constraint, so $B_i^{\min}=-(E_i\wedge 1)$ and $-\gamma_iB_i^{\min}=\gamma_i(E_i\wedge 1)$. Summing these maximal past contributions yields exactly $\bar{W}_t$, and hence the largest feasible donation to the new hypothesis is $\gamma_tB_t=\bar{W}_t$, up to the point where the rejection constraint is already vacuous.
Substituting $B_t=\bar{W}_t/\gamma_t$ into \eqref{eq:donation-derive-new} gives
\begin{align}
    E_t
    &\geq
    \frac{1}{\delta\gamma_t(|R_{t-1}|+1)}-\frac{\bar{W}_t}{\gamma_t}\\
    &=
    \frac{1-\delta(|R_{t-1}|+1)\bar{W}_t}
    {\delta\gamma_t(|R_{t-1}|+1)}.
\end{align}
When $\delta(|R_{t-1}|+1)\bar{W}_t\geq 1$, the right-hand side is nonpositive, so any nonnegative e-value satisfies the constraint. This is captured by replacing the numerator with $1-(\delta(|R_{t-1}|+1)\bar{W}_t\wedge 1)$, which is equivalent to the test level in \eqref{eq:donation-alpha}. SupFDR control then follows from \Cref{prop: sup donation}; the strict improvement over e-LOND follows from the displayed test level being at least the e-LOND level, with strict inequality possible when past evidence can donate positive mass.
\end{proof}
\end{revblock}
The proof appears in \Cref{subsec:proof-donation-elond}.
\paragraph{Efficient donation computation.}
To efficiently compute the test levels in \eqref{eq:donation-alpha}, we need to update $\bar{W}_t$ efficiently at each time step. The only nontrivial component to compute is the summation over terms in $R_{t - 1}$, i.e., our term of interest is
\begin{align}
    \bar{W}^R_t &\coloneqq \sum_{i \in R_{t - 1}} \gamma_i\left(\left(E_i - \frac{1}{\delta \gamma_i (|R_{t - 1}| + 1)}\right) \wedge 1\right)\\
    &=\sum_{i \in R_{t - 1}}\gamma_i \wedge \left(\gamma_iE_i - \frac{1}{\delta(|R_{t - 1}| + 1)}\right).
\end{align}
Define 
\begin{align}
&\bar{w}_t^{(i)} \coloneqq \gamma_i \wedge \left(\gamma_iE_i - \frac{1}{\delta(|R_{t - 1}| + 1)}\right)\\
&=\begin{cases}
    \gamma_i E_i - \frac{1}{\delta(|R_{t - 1}| + 1)} & \text{ if }\gamma_i (E_i  - 1)\leq \frac{1}{\delta(|R_{t - 1}| + 1)}\\
    \gamma_i & \text{ otherwise.}
\end{cases}
\end{align}
As a result, we need to threshold on the value of $\gamma_i (E_i - 1)$ for each $i \in [t]$ to determine what value $\bar{w}_t^{(i)}$ takes on. To compute the sum of $\bar{w}_t^{(i)}$ efficiently, we maintain an augmented binary search tree where the key is $\gamma_i (E_i - 1)$ for each $i \in R_{t - 1}$. We augment each node with sums of $\gamma_i E_i$, $\gamma_i$ and a count of nodes for all nodes $i$ that are in the tree. Therefore, when split on $\gamma_i (E_i - 1)$, we have already computed our desired quantities. Consequently, the computation is simply $O(\log(|R_{t - 1}|)) \leq O(\log t)$ whenever we make a discovery, i.e., the insertion cost into the augmented binary search tree.
\section{Improving p-value based procedures}\label{sec:improving-pvalue-procedures}

Using the above results for improving e-value based procedures, we can also improve the r-LOND procedure for p-value based procedures. \citet{xu_online_multiple_2023} observed that the r-LOND procedure was equivalent to applying e-LOND where the e-values were defined by $E_t = f_{t}(P_t)$, where we let $f_{t}$ be the following calibrator similar to the calibrator for r-LOND \citet{xu_online_multiple_2023} for each $t \in \naturals$:
\begin{align}
    f_{t}(p) = \frac{\ind\{p \leq \delta \gamma_t t / \ell_t\}}{\delta \gamma_t \lceil (p \ell_t / (\delta \gamma_t)) \vee 1 \rceil}. \label{eq: by calibrator}
\end{align}
As a result, we can apply online e-closure principle to p-value based procedures and achieve SupFDR control. Notably, unlike the formulation of r-LOND as the application of e-LOND to calibrated p-values $f_t(P_t)$ for $t \in \naturals$, we instead construct a new e-collection. \rev{The calibrator index below is the rank $|S\cap [i]|$ of hypothesis $i$ within the subset $S$, rather than its global time index $i$. This rank-based choice is what makes the dynamic program in \Cref{subsec:closed-rlond-computation} depend only on the subset size, giving an $O(t^2)$ computation.} Let
\begin{align}
E_S &= \sum_{i \in S} \gamma_{|S \cap [i]|} f_{|S \cap [i]|}(P_i) \\
    &= \frac{1}{\delta}\sum_{i \in S}\frac{\ind\{P_i \leq \delta \gamma_{|S \cap [i]|} |S \cap [i]| / \ell_{|S \cap [i]|}\}}{\lceil (P_i \ell_{|S \cap [i]|} / (\delta \gamma_{|S \cap [i]|})) \vee 1 \rceil}\label{eq: closed rlond ecol}.
\end{align}
Consequently, we define \emph{$\overline{\text{r-LOND}}$ (closed r-LOND)} as
\begin{align}
\alpha_t
=\min_{\substack{S \subseteq [t - 1]:\\ \Delta E_S(t) > 0}}
    \frac{\delta \gamma_{|S| + 1}}{\ell_{|S| + 1}}
    \left\lfloor
   (\delta \Delta E_S(t))^{-1}\wedge (|S| + 1)\right\rfloor \qquad \label{eq:closed-r-lond}\\
\text{where } \Delta E_S(t) \coloneqq \frac{1 + |S \cap R_{t - 1}|}{\delta(|R_{t - 1}| + 1)} - E_S.
\end{align}
Note that the constraint set is never empty since we can always select $S = \emptyset$.
\begin{theorem}\label{thm: closed rlond}
    $\overline{\text{r-LOND}}$ controls the SupFDR at level $\delta$ for arbitrarily dependent p-values. Further, when $i \gamma_i /\ell_i$ is nonincreasing in $i$ for $i \in \naturals$, $\overline{\text{r-LOND}}$ strictly improves over r-LOND.
\end{theorem}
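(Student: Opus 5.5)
The argument has three parts: check that the e-collection in \eqref{eq: closed rlond ecol} satisfies the hypotheses of \Cref{fact: online supfdr e-closure}; show by induction that closed r-LOND always outputs $R_t \in \Ccal_t$; and compare its levels with those of r-LOND, in parallel with the proof of \Cref{thm: closed elond}.

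\textbf{Validity of the e-collection.} For a fixed finite nonempty $S$, list its elements as $i_1<\dots<i_{|S|}$. Then $E_S=\sum_{k}\gamma_k f_k(P_{i_k})$. Each $f_k$ is the r-LOND calibrator, so $f_k(P)$ is an e-value whenever $P$ is a null p-value. I will verify this directly: $\expect[f_k(P)]\le \sum_{j=1}^{k}\frac{1}{\delta\gamma_k j}\cdot\frac{\delta\gamma_k}{\ell_k}=1$, summing over the level sets $\{\lceil P\ell_k/(\delta\gamma_k)\rceil=j\}$ and using super-uniformity. It follows that $\expect[E_S]\le\sum_k\gamma_k\le 1$ under $H_S$, with no assumption on the dependence among the p-values. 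Measurability with respect to $\Fcal_{\max S}$ is immediate. The collection is increasing because appending indices larger than $\max S$ leaves each existing rank $|S\cap[i]|$ unchanged and only adds nonnegative summands.

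\textbf{Closure and the test level.} Following the derivation for $\CeLOND$: the constraints with $t\notin S$ hold by the induction hypothesis $R_{t-1}\in\Ccal_{t-1}$. For $S=S'\cup\{t\}$, adding $t$ to the rejection set requires
\[
\gamma_{|S'|+1}f_{|S'|+1}(P_t)\ge \Delta E_{S'}(t).
\]
The function $f_k$ is a nonincreasing step function taking the values $1/(\delta\gamma_k j)$ for $j\le k$. Hence $\gamma_k f_k(P_t)\ge\Delta$ with $\Delta>0$ holds exactly when $j\le \lfloor(\delta\Delta)^{-1}\wedge k\rfloor$. That condition is equivalent to $P_t\le \frac{\delta\gamma_k}{\ell_k}\lfloor(\delta\Delta)^{-1}\wedge k\rfloor$, which gives \eqref{eq:closed-r-lond} after minimizing over $S'$. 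When $\Delta\le 0$ the constraint is vacuous. If closed r-LOND does not reject at time $t$, then $R_t=R_{t-1}\in\Ccal_t$ by nestedness. So $R_t\in\Ccal_t$ for every $t$, and SupFDR control follows from \Cref{fact: online supfdr e-closure}.

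\textbf{Dominance over r-LOND.} This is the step I expect to be the main obstacle, because of the floor function. The r-LOND level is $\frac{\delta\gamma_t}{\ell_t}(|R_{t-1}|+1)$, using $|R_{t-1}|+1\le t$. Fix $S$ with $\Delta>0$ and let $k=|S|+1\le t$. From $R_{t-1}\in\Ccal_{t-1}$ I get $|S\cap R_{t-1}|\le \delta E_S|R_{t-1}|$, which gives $\delta\Delta\le 1/(|R_{t-1}|+1)$. Therefore
\[
\lfloor(\delta\Delta)^{-1}\wedge k\rfloor\ge (|R_{t-1}|+1)\wedge k,
\]
where I used that $|R_{t-1}|+1$ is an integer. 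If $|R_{t-1}|+1\le k$, then $\gamma_k/\ell_k\ge\gamma_t/\ell_t$ is needed. I plan to obtain it from the assumption that $i\gamma_i/\ell_i$ is nonincreasing, but this does not give $\gamma_k/\ell_k\ge\gamma_t/\ell_t$ directly. Instead I will split into two cases:
\[
\frac{\gamma_k}{\ell_k}\bigl((|R_{t-1}|+1)\wedge k\bigr)=
\begin{cases}
\dfrac{k\gamma_k}{\ell_k}\ge \dfrac{t\gamma_t}{\ell_t}\ge \dfrac{\gamma_t(|R_{t-1}|+1)}{\ell_t}, & k\le |R_{t-1}|+1,\\[2ex]
\dfrac{\gamma_k}{\ell_k}(|R_{t-1}|+1), & \text{otherwise}.
\end{cases}
\]
The second case requires $\gamma_k/\ell_k\ge\gamma_t/\ell_t$ for $k\le t$. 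If the hypothesis alone does not yield this, I would fall back on writing $(|R_{t-1}|+1)\gamma_k/\ell_k$ and comparing through $k\gamma_k/\ell_k\ge t\gamma_t/\ell_t$ together with $k\ge|R_{t-1}|+1$. Handling this case carefully is the point I would double-check.

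\textbf{Strict improvement.} I would use a two-hypothesis example, as in \Cref{thm: closed elond}. Take $P_1$ small enough that $\Delta E_{\{1\}}(2)\le 0$, so the $S=\{1\}$ constraint is vacuous and the $S=\emptyset$ constraint gives level $\delta\gamma_1\lfloor\cdot\rfloor$, which exceeds the r-LOND level for suitable $\gamma$. Then place positive probability on $P_2$ lying between the two thresholds.
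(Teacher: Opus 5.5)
Your proposal follows the paper's proof essentially step by step. The shared steps are: validity and monotonicity of the rank-indexed e-collection; deriving the threshold in \eqref{eq:closed-r-lond} from $\gamma_k f_k(P_t)\geq \Delta E_S(t)$; the bound $(\delta\Delta E_S(t))^{-1}\geq |R_{t-1}|+1$ obtained from $R_{t-1}\in\mathcal{C}_{t-1}$; and the $t=2$ example, where $P_1\leq\delta\gamma_1$ makes the $S=\{1\}$ constraint exactly vacuous. Two local points need fixing.

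\textbf{The calibrator check.} Your direct verification that $f_k$ is a calibrator is wrong as justified. Write $c=\delta\gamma_k/\ell_k$. Super-uniformity does not bound the probability of each level set $\{(j-1)c<P\leq jc\}$ by $c$. For example, a null $P$ equal to $2c$ with probability $2c$ and to $1$ otherwise is super-uniform, yet its level set $j=2$ has probability $2c$. What you actually computed is the expectation of $f_k(U)$ for $U$ uniform. To finish, use that $f_k$ is nonincreasing and $P$ stochastically dominates $U$, so $f_k(P)$ has expectation at most that of $f_k(U)$, which equals $\sum_{j\leq k}(j\ell_k)^{-1}=1$. Equivalently, sum by parts against $\Pr(P\leq jc)\leq jc$. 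The paper simply cites validity of the calibrator, so this is only a repair of your added verification.

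\textbf{The dominance step.} The obstacle you anticipate is not there. Since $\gamma_i/\ell_i=(i\gamma_i/\ell_i)\cdot i^{-1}$ is a product of two nonnegative nonincreasing sequences, it is itself nonincreasing; this is exactly what the paper invokes. Your fallback is the same computation, $(|R_{t-1}|+1)\gamma_k/\ell_k\geq (|R_{t-1}|+1)(t/k)\gamma_t/\ell_t$. The inequality doing the work there is $t\geq k$; the condition $k\geq |R_{t-1}|+1$ only identifies which case you are in.

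\textbf{From levels to discovery sets.} Comparing levels at a common history does not by itself give $R_t\supseteq R'_t$. Add the induction hypothesis $R_{t-1}\supseteq R'_{t-1}$ and use that the r-LOND level is nondecreasing in $|R_{t-1}|$. The paper also leaves this step implicit.
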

The proof is deferred to \Cref{subsec:proof-closed-rlond}.
We also elaborate on computational details for $\overline{\rLOND}$ in \Cref{subsec:closed-rlond-computation}, but they are similar to that of $\textCeLOND$, and it consequently requires only $O(t^2)$ computation at each time step.

Define the donation excess-wealth term $\bar{W}_t$ as we do in \eqref{eq:donation-alpha}, but using the calibrated e-values derived via $E_t = f_t(P_t)$. We then obtain the following test levels for arbitrarily dependent p-values:
\begin{align}
\alpha_t
=
\frac{\delta \gamma_t}{\ell_t}
\left(
    \left\lfloor
        \frac{|R_{t - 1}| + 1}{1 - (\delta (|R_{t-1}| + 1)\bar{W}_t \wedge 1)}
    \right\rfloor
    \wedge t
\right). \label{eq:donation-r-lond}
\end{align}
\begin{theorem}\label{thm: donation rlond}
Donation r-LOND controls the SupFDR at level $\delta$ for arbitrarily dependent p-values, and can strictly improve over r-LOND.
\end{theorem}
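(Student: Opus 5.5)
Donation r-LOND is exactly donation e-LOND run on the calibrated e-values $E_t = f_t(P_t)$, with its rejection event rewritten in p-value form. So I will not prove SupFDR control from scratch; I will reduce it to \Cref{thm:donation-elond}, or more directly to \Cref{prop: sup donation}.

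\textbf{Step 1: the calibrated quantities are e-values.} For each $t$ and any null $P_t$, I check that $f_t(P_t)$ is a valid e-value. The quantity $\gamma_t f_t(p)$ is a step function: it equals $1/(\delta k)$ on $p \in ((k-1)\delta\gamma_t/\ell_t,\ k\delta\gamma_t/\ell_t]$ for $k \in [t]$. Hence, for superuniform $P_t$,
\[
\expect[f_t(P_t)] \leq \sum_{k=1}^t \frac{\delta\gamma_t}{\ell_t}\cdot\frac{1}{\delta\gamma_t k} = 1 .
\]
This bound is the standard BY/reshaping calibrator bound and is already noted by \citet{xu_online_multiple_2023}. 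Each $f_t(P_t)$ is $\Fcal_t$-measurable, and there is no restriction on the dependence among them.

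\textbf{Step 2: the p-value threshold gives exactly the e-value rejections.} Let $\alpha^e_t$ denote the donation e-LOND level from \eqref{eq:donation-alpha}, computed with these e-values, and set $c_t \coloneqq (|R_{t-1}|+1)/(1-(\delta(|R_{t-1}|+1)\bar W_t\wedge 1))$, so that $\alpha^e_t = \delta\gamma_t c_t$. I claim that $P_t \leq \alpha_t$, with $\alpha_t$ as in \eqref{eq:donation-r-lond}, holds if and only if $f_t(P_t) \geq (\alpha^e_t)^{-1} = 1/(\delta\gamma_t c_t)$.

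By Step 1, $f_t(P_t) \geq 1/(\delta\gamma_t c_t)$ holds exactly when $P_t \leq k\delta\gamma_t/\ell_t$ for some integer $k \leq t$ with $1/k \geq 1/c_t$, i.e.\ $k \leq c_t$. The largest admissible $k$ is $\lfloor c_t\rfloor \wedge t$, which reproduces \eqref{eq:donation-r-lond}. The degenerate case $\bar W_t \geq 1/(\delta(|R_{t-1}|+1))$ gives $c_t = \infty$ and the cap $k = t$. That cap matches the support of $f_t$, since $f_t(P_t) > 0$ requires $P_t \leq \delta\gamma_t t/\ell_t$.

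By induction on $t$, the discovery sets and the wealth terms $\bar W_t$ of the two procedures then coincide pathwise. SupFDR control follows from \Cref{thm:donation-elond}, because that theorem allows arbitrary dependence among the e-values.

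\textbf{Step 3: comparison with r-LOND.} Dominance follows from $c_t \geq |R_{t-1}|+1$ together with monotonicity of $\lfloor\cdot\rfloor\wedge t$, which gives $\alpha_t \geq \delta\gamma_t\beta_t(|R_{t-1}|+1)$ at every step. The rejection sets are then nested by induction: a larger $R_{t-1}$ only increases the level, and $\bar W_t$ is nonnegative.

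For strict improvement I use an explicit example at $t=2$, with $\gamma_1,\gamma_2>0$ and $P_1$ tiny, say $P_1 \leq \delta\gamma_1/\ell_1$. Then $f_1(P_1) = 1/(\delta\gamma_1)$, so $H_1$ is rejected by both procedures, and its donation term is $\gamma_1\bigl((1/(\delta\gamma_1) - 1/(2\delta\gamma_1))\wedge 1\bigr) > 0$. The factor $c_2$ therefore exceeds $2$.

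The main obstacle is the floor. For the donation to actually help, I need $\lfloor c_2\rfloor \wedge 2 > 2$, which is impossible since $t=2$ caps the level. I will therefore instead take $t$ large enough and a configuration in which $\delta(|R_{t-1}|+1)\bar W_t \geq 1/2$, so that $c_t \geq 2(|R_{t-1}|+1)$. For example, take many early non-rejected hypotheses with $P_i$ just above their thresholds but still giving $f_i(P_i) \geq 1$, so each donates $\gamma_i$. Then $\lfloor c_t\rfloor \wedge t$ strictly exceeds $|R_{t-1}|+1$. Placing $P_t$ with positive probability between the two thresholds gives a rejection by donation r-LOND but not by r-LOND.
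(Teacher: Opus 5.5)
Your route matches the paper's. You calibrate with $f_t$, match the p-value threshold to the donation e-LOND threshold by the floor/ceiling argument, get dominance from $c_t\geq r_t$ with $r_t\coloneqq|R_{t-1}|+1$, and get strictness from an event with $r_t<t$ and $\lfloor c_t\rfloor>r_t$. Two steps need repair.

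\textbf{The strict-improvement example.} The configuration you sketch does not deliver the needed wealth.

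Every past index contributes at most $\gamma_i$ to the wealth, so $\bar W_t\leq\sum_{i<t}\gamma_i\leq 1$. The condition $\lfloor c_t\rfloor>r_t$ is equivalent to $\bar W_t\geq 1/(\delta r_t(r_t+1))$. With no prior rejections ($r_t=1$) this needs $\bar W_t\geq 1/(2\delta)>1$ whenever $\delta<1/2$, so non-rejected donors alone never help. Your target $\delta r_t\bar W_t\geq 1/2$ is sufficient but stronger than needed.

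What makes the example work is prior \emph{rejections}. They raise $r_t$, and when $\gamma_iE_i$ is well above $1/(\delta r_t)$ they donate their full $\gamma_i$. Concretely, take $\delta=0.1$ and $\gamma_i=1/(i(i+1))$. Give positive probability to the event $\{P_i\leq\delta\gamma_i/\ell_i \text{ for } i\leq 3,\ P_4>4\delta\gamma_4/\ell_4,\ 4\delta\gamma_5/\ell_5<P_5\leq 5\delta\gamma_5/\ell_5\}$. On this event:
\begin{itemize}
\item For $i\leq 3$, $\gamma_i f_i(P_i)=1/\delta$, so both procedures reject $H_1,H_2,H_3$.
\item At $t=4$ both procedures use level $4\delta\gamma_4/\ell_4$, so both accept $H_4$, and $f_4(P_4)=0$.
\item At $t=5$, $r_5=4$ and $\bar W_5=\gamma_1+\gamma_2+\gamma_3=3/4$, so $c_5=40/7$. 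Donation r-LOND uses level $5\delta\gamma_5/\ell_5$ against r-LOND's $4\delta\gamma_5/\ell_5$, and only donation r-LOND rejects $H_5$.
\end{itemize}
The paper's proof only asserts that such events are easy to construct, so this example fills that in.

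\textbf{The degenerate case in Step 2.} When $\delta r_t\bar W_t\geq 1$, the donation e-LOND threshold $(\alpha^e_t)^{-1}$ equals $0$. Donation e-LOND run on $f_t(P_t)$ then rejects even when $f_t(P_t)=0$, i.e.\ when $P_t>\delta\gamma_t t/\ell_t$, whereas donation r-LOND does not. Your remark that the cap matches the support of $f_t$ therefore does not give the ``if and only if'' in this case. The two paths can diverge, so \Cref{thm:donation-elond} cannot be quoted verbatim. The paper's proof makes the same slip.

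The fix is the direct route you already mention. Along donation r-LOND's own path, every rejection at time $t$ satisfies $f_t(P_t)\geq(\alpha^e_t)^{-1}$, with $\alpha^e_t$ computed from that path's history. Hence each $R_t$ is either empty or lies in $\Ccal(\mathbf{B}^{(s)})$, where $s\leq t$ is the last rejection time and $\mathbf{B}^{(s)}$ is the donation from the proof of \Cref{thm:donation-elond}. \Cref{prop: sup donation} then gives SupFDR control.
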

A proof of the above is deferred to \Cref{subsec:proof-donation-rlond}.

\section{Simulations}\label{sec: simulations}

We compare our procedures in a local dependence setup inspired by \citet{zrnic_asynchronous_online_2021}. Each hypothesis $t \in [m]$ produces a single Gaussian observation with $\mu_0 = 0$ under the null and $\mu_1 = 3$ under the alternative, where $\pi_1$ denotes the probability of the alternative being true; we set $m = 200$. For a fixed lag $L = 100$, samples within distance $L$ of one another share Gaussian copula dependence: the covariance matrix satisfies $\Sigma_{i, j} = 0.5^{|i - j|}$ for $|i - j| \leq L$ and zero otherwise, and we verify positive semidefiniteness numerically for each simulated dimension. The resulting e-values and p-values are based on Gaussian likelihood ratios and the Gaussian c.d.f. --- see \Cref{sec:local-dep-details} for details.
Each design is averaged over $n = 200$ trials. For all methods, we use the sequence where $\gamma_t = (t(t + 1))^{-1}$ for each $t \in \naturals$ unless otherwise stated. We also consider additional simulation settings in \Cref{sec: additional simulations}.

We see in \Cref{fig:local-dep-simulation} the results of applying the base procedure of e-LOND and r-LOND, as well as the power gain our procedures offer over both e-LOND and r-LOND. Both donation e-LOND and $\CeLOND$ outperform e-LOND, with $\CeLOND$'s power differential increasing over donation e-LOND as the proportion of non-nulls increases. Similarly, both donation r-LOND and $\overline{\text{r-LOND}}$ also outperform r-LOND, with $\overline{\text{r-LOND}}$'s power differential increasing over donation r-LOND as the proportion of non-nulls increases. Thus, we see that our frameworks for improving the power of both procedures have practical results.
\begin{figure*}[th]
    \centering
    \begin{subfigure}{.49\textwidth}
        \centering
        \includegraphics[width=\textwidth,trim={0 1.5cm 0 1cm},clip]{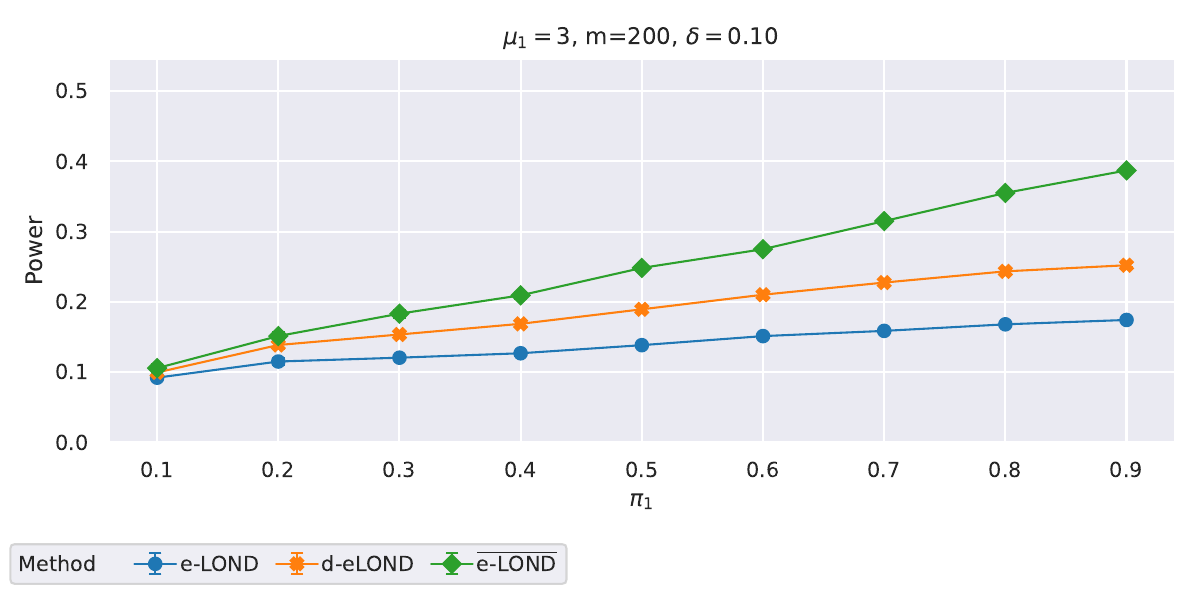}\end{subfigure}\hfill
    \begin{subfigure}{.49\textwidth}
        \centering
        \includegraphics[width=\textwidth,trim={0 1.5cm 0 1cm},clip]{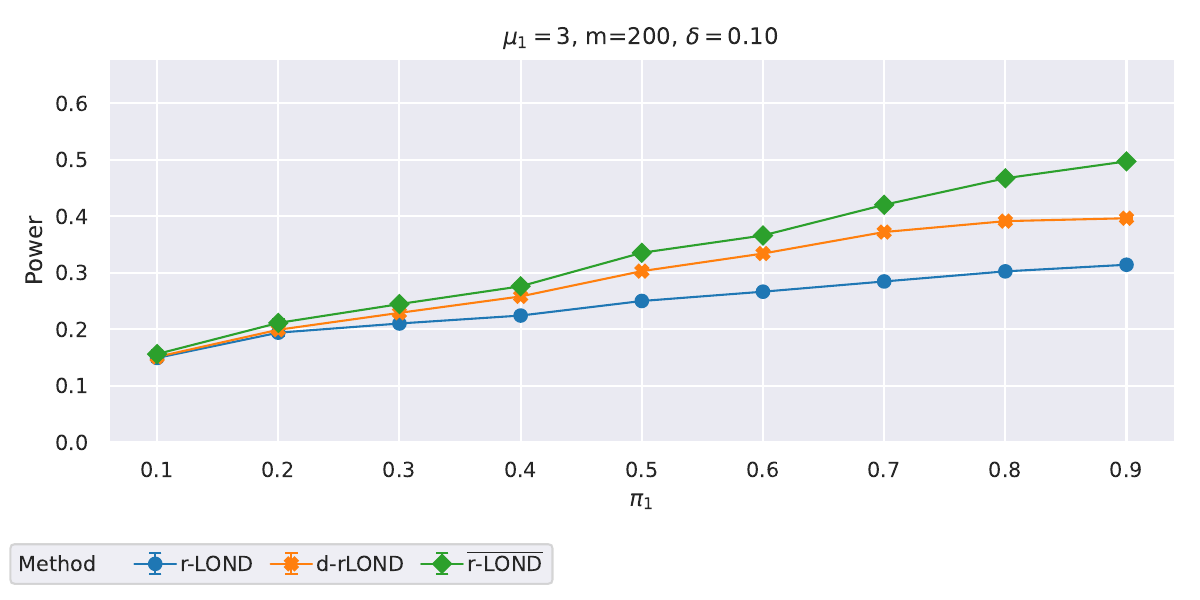}\end{subfigure}

    \begin{subfigure}{.49\textwidth}
        \centering
        \includegraphics[width=.85\textwidth,trim={0 0.5cm 504pt 1cm},clip]{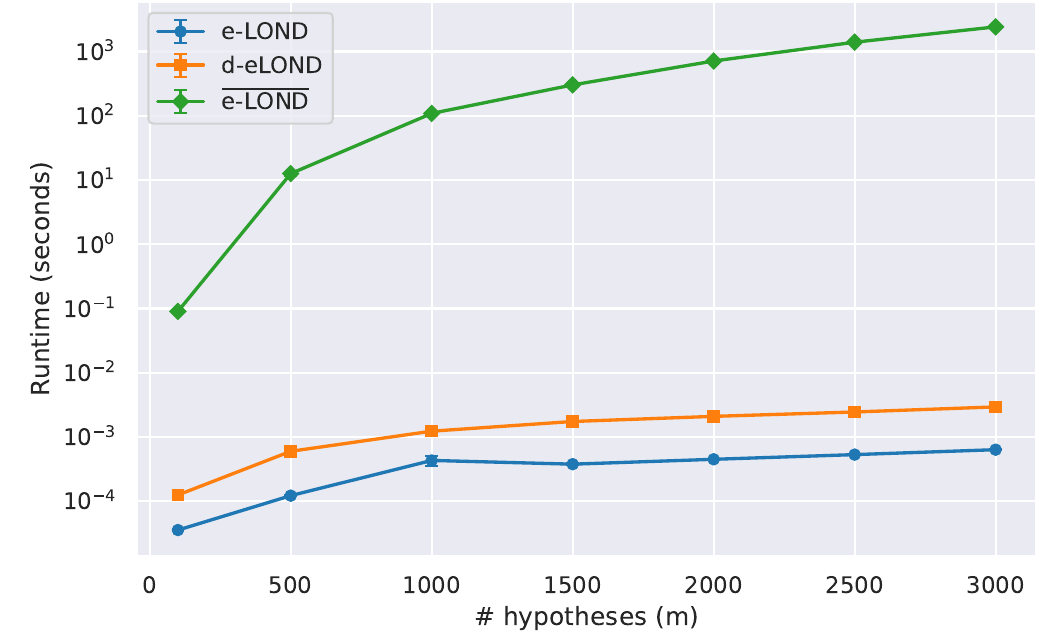}
        \caption{E-value procedures.}
    \end{subfigure}\hfill
    \begin{subfigure}{.49\textwidth}
        \centering
        \includegraphics[width=.85\textwidth,trim={504pt 0.5cm 0 1cm},clip]{figures/runtime_only_1x2.pdf}
        \caption{P-value procedures.}
    \end{subfigure}

    \caption{Local dependence simulation summary over 200 trials. The left column shows e-value procedures and the right column shows p-value procedures. The top row reports power as the non-null fraction $\pi_1$ increases (with $\mu = 3$ and $\delta = 0.1$), while the bottom row reports mean wall-clock runtime (log scale) as the number of hypotheses increases. Donation and closed variants improve power over the corresponding baselines, and donation variants remain computationally practical compared with closed variants.}
    \label{fig:local-dep-simulation}
    \label{fig:runtime-power-simulation}
\end{figure*}

\paragraph{Runtime comparison} To quantify the tradeoff between computation and power for closed vs.\ donation variants of each procedure, we run simulations using the same local dependence Gaussian data model as above, with a fixed setting of $\pi_1=0.3$. 
We choose hypothesis counts up to $m = 3000$ with $n = 80$ trials. We average the wall-clock runtime of each procedure over all trials. Wall-clock times were measured on a server with an AMD Ryzen 9 9950X CPU (16 cores, 32 threads, up to 5.756 GHz) running in a single-threaded fashion. We visualize the results of these measurements in \Cref{fig:runtime-power-simulation} --- note that the y-axis for runtime in the figure is on a log scale. We can see that as the number of hypotheses increases, the runtime of $\textCeLOND$ goes from milliseconds to on the order of an hour. On the other hand, standard e-LOND as well as donation e-LOND remain in the millisecond range. We see a similar pattern with the variants of r-LOND as well. In cumulative terms over $m$ online decisions, the closed variants scale as $O(m^3)$ when recomputed from scratch, while donation e-LOND and donation r-LOND scale as $O(m\log m)$ with the augmented-tree updates described above.

\section{\revheading{Real data experiment}}
\rev{We also show the effectiveness of our procedures on a real dataset involving online anomaly detection, following the NYC taxi demand experiment used by \citet{zhang_e-gai_e-value-based_2025} for procedure evaluation.}

\rev{The NYC taxi dataset \citep{lavin_evaluating_real-time_2015} records taxi demand over time. We use the likelihood-ratio e-value setup of \citet{zhang_e-gai_e-value-based_2025}; the dataset-specific e-value formulation and interpretation of the annotation windows are described in Appendix~\ref{subsec:real-data-annotation-metrics}.}
\ifarxiv
\begin{table}[H]
\else
\begin{table}[t]
\fi
\centering
\begin{tabular*}{\columnwidth}{@{\extracolsep{\fill}}ccc@{}}
\toprule
e-LOND & d-eLOND & $\textCeLOND$ \\
\midrule
8 & \textbf{16} & 10 \\
\bottomrule
\end{tabular*}
\caption{\rev{Discoveries coinciding with annotated anomalous periods for each e-value procedure on the NYC taxi dataset. Donation e-LOND and $\textCeLOND$ both improve over e-LOND, with donation e-LOND detecting the largest number of annotated anomalies.}}
\label{tab:real-data-summary}
\end{table}
\Cref{tab:real-data-summary} summarizes the method vs.\ the number of detected anomalies (rejections inside annotated anomaly windows) for variants of our procedures. \rev{These annotations are incomplete event windows rather than exact true-discovery labels. The largest detected anomaly count is bolded.} Both $\textCeLOND$ and donation e-LOND outperform e-LOND in terms of detected anomalies, with donation e-LOND having the largest gain. \rev{This result also illustrates that $\CeLOND$ and donation e-LOND do not dominate each other, despite each strictly improving over e-LOND.} We visualize the corresponding discoveries in \Cref{fig:real-data-paper}.

\ifarxiv
\begin{figure}[H]
\else
\begin{figure}[t]
\fi
\centering
\includegraphics[width=\columnwidth]{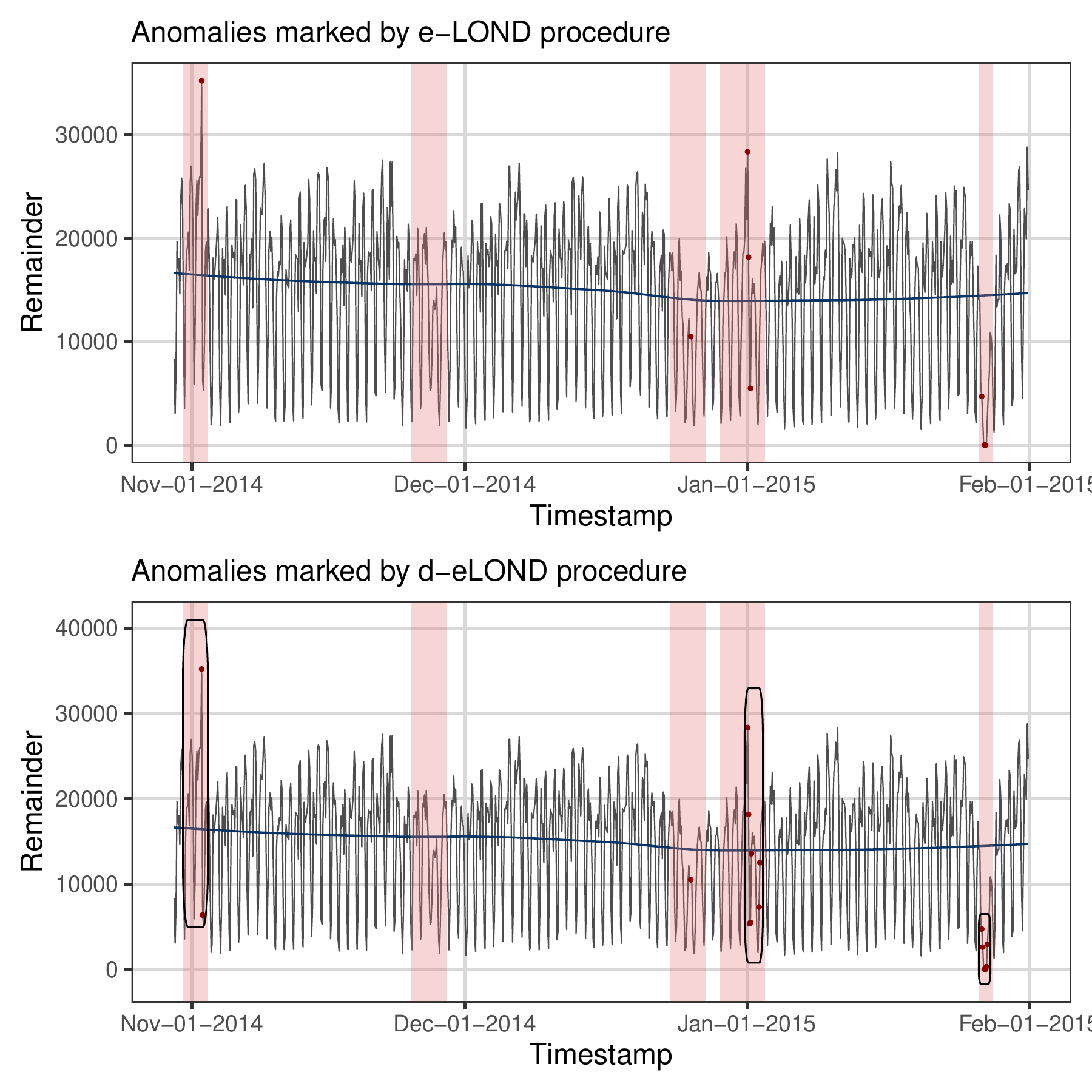}
\caption{\rev{Plot of NYC taxi usage over time, with red highlighted periods indicating anomalous periods. We can see donation e-LOND making more discoveries in these periods than e-LOND (circled in black).}}\label{fig:real-data-paper}
\end{figure}
\section{Extensions}\label{sec: extensions}

In addition to improving the aforementioned online multiple testing algorithms, the donation framework can also be used to improve the power of three other classes of algorithms.
\begin{enumerate}
    \item \textbf{Online acceptance-to-rejection (ARC) and decision deadlines.} \citet{fischer_online_generalization_2025} introduced the online ARC problem --- here, one still receives a stream of hypotheses and corresponding statistics, but no longer is forced to immediately make a decision before receiving the next hypothesis and associated statistic. 
We can use the donation to improve their online e-BH procedure in a computationally efficient manner. \citet{fisher_online_control_2022} considers an intermediate setting where each hypothesis $t$ has a (rejection) decision deadline $d_t \geq t$, and we show that we can improve the power of their procedure using donations as well. We elaborate on this in \Cref{sec: arc donation}.
    \item \textbf{Offline multiple testing.} While \citet{xu_bringing_closure_2025a} used the e-closure principle to improve the $\CeBH$ procedure for offline multiple testing, it can require quadratic time to compute the rejection set for $m$ hypotheses. Using donations, we can construct compound e-values that can be used with eBH to improve its power while computing a strictly more powerful discovery set in $O(m \log m)$ time. We elaborate on this in \Cref{sec: offline donation}.
    \item \textbf{Randomization}: \citet{xu_more_powerful_2023} introduced the notion of randomization for improving the power of multiple testing procedures, and \citet{xu_online_multiple_2023} applied it to improve both e-LOND and r-LOND. Similarly, we can apply randomization to improve the power of donation procedures. We elaborate on this in \Cref{sec: randomized donation}.
\end{enumerate}

\section{Related work}

In addition to the aforementioned works, online multiple testing with FDR control has been studied extensively in recent years. \citet{javanmard_online_rules_2018} introduced an online notion of FDR control and formulated the LORD procedure for control under independence. As a result, there has been a line of literature that has developed increasingly powerful online multiple testing algorithms under either independence or conditional validity assumption \cite{ramdas_online_control_2017a, zrnic_power_batching_2020}. \citet{ramdas_saffron_adaptive_2018} and \citet{tian_addis_adaptive_2019} formulate adaptive and discarding versions of LORD in the style of \cite{storey_direct_approach_2002} and \cite{zhao_multiple_testing_2019}, respectively, that allow the procedure to adapt to the frequency of the true null hypotheses. \rev{These LORD-type and SAFFRON-type methods are powerful under their assumptions, but they are not valid in the arbitrary dependence SupFDR setting considered in this paper, so they are not direct baselines for our main results.} These works have restricted their focus to p-value based methods. More recently \cite{zhang_e-gai_e-value-based_2025} proposed a new method under the conditional validity assumption that applies to both p-values and e-values.

A line of work has also considered designing methods under other kinds of dependency structures. In addition to defining the r-LOND procedure and considering general arbitrary dependence, \citet{zrnic_asynchronous_online_2021} considers p-value based methods for global positive dependence among test statistics and local forms of dependence. \cite{fisher_online_false_2024} show that under a different definition of positive dependence, LORD and SAFFRON also have valid FDR control. \cite{jankovic_asymptotic_online_2025} considers online FWER control for weakly dependent data, although their control is asymptotic.

Outside of \citet{xu_online_multiple_2023}, e-values have also been utilized in online multiple testing with family-wise error rate (FWER) control. \cite{fischer_admissible_online_2025} showed that e-values are necessary for constructing admissible online methods with control of FDP tail probabilities.

\paragraph{Different online multiple testing analogs of FDR.} SupFDR is a relatively new error criterion, but it is valuable in the sense that it is stronger than both of the more classical notions of FDR control that have developed for online multiple testing. The first of these is onlineFDR, which was introduced in \citet{javanmard_online_rules_2018} and is defined as the following:
\begin{align}
    \text{onlineFDR}(\mathbf{R}) \coloneqq \sup_{t \in \naturals} \FDR(R_t).
\end{align} \citet{fischer_online_generalization_2025} also considered a stopping time version of onlineFDR, which they called StopFDR and defined as:
\begin{align}
    \text{StopFDR}(\mathbf{R}) \coloneqq \sup_{\tau \in \mathcal{T}} \FDR(R_\tau),
\end{align} where $\mathcal{T}$ is the set of all stopping times with respect to the filtration generated by the data. Both onlineFDR and StopFDR control are implied by SupFDR control, as observed by \citet{fischer_online_generalization_2025}. One can see that
since $\sup_t \expect[\FDP(R_t)] \leq \expect[\sup_t \FDP(R_t)]$, and $\FDP(R_\tau) \leq \sup_{t} \FDP(R_t)$ for any stopping time $\tau$. Thus, our development of procedures for SupFDR control implies validity under previous error metrics considered in prior literature.

\section{Conclusion}
We introduced a general framework for improving several online multiple testing procedures under unknown dependence, with a focus on SupFDR control. We presented two approaches that trade off power and computational cost while both strictly improving over existing methods. Our first approach uses the online e-closure principle to produce \emph{closed} procedures that dominate their baselines, including $\CeLOND$ for e-values and $\overline{\rLOND}$ for p-values. These closed procedures can yield large power gains but are computationally expensive in long streams. Our second approach is the donation framework, which constructs online compound e-values via $\boldsymbol{\gamma}$-weighted donations and yields computationally efficient, strictly improved procedures such as donation e-LOND and donation r-LOND. Together, these results provide a principled menu of improvements: closed methods provide a strong but more expensive benchmark, while donation methods retain power improvements with efficient test-level computation and can be preferable on some instances. An interesting direction for future work is to further narrow the computational gap between closed and donation procedures, or show that there is an irreducible difference between them.
 
\bibliography{ref}

\clearpage
\appendix
\section{Methodological details of online closed procedures}
This section collects methodological details that are deferred from the main text. We first record a general weighted e-collection construction and its relation to weighted self-consistency. We then present an alternative e-collection for $\CeLOND$ (referenced in \Cref{sec: online gamma comp}) and computation details for $\overline{\rLOND}$.

\subsection{General weighted e-collections and closure enlargement}\label{subsec:weighted-closure-general}
Fix a nonnegative sequence $\boldsymbol{\gamma} = (\gamma_t)_{t \in \naturals}$ with $\sum_{t \in \naturals}\gamma_t \leq 1$.
For each finite $S \subset \naturals$, let $(\gamma_i^S)_{i \in S}$ satisfy:
\begin{enumerate}[label=(\roman*)]
    \item $\sum_{i \in S}\gamma_i^S \leq 1$ and $\gamma_i^S \geq \gamma_i$ for all $i \in S$;
    \item if $i \in S \cap T$ and $S \cap [i] = T \cap [i]$, then $\gamma_i^S = \gamma_i^T$.
\end{enumerate}
Define
\begin{align}
    E_S \coloneqq \sum_{i \in S}\gamma_i^S E_i.
\end{align}
Under $H_S$, this gives $\expect[E_S] \leq \sum_{i \in S}\gamma_i^S \leq 1$, so $(E_S)_{S \in 2^\naturals}$ is an e-collection.
Condition (ii) ensures this e-collection is increasing.

\begin{proposition}\label{prop:weighted-sc-in-closure}
    Let $(E_S)$ be defined above and let $R \subset [t]$ be nonempty.
    If $R$ is weighted self-consistent for $(\gamma_i, E_i)$, i.e.,
    \begin{align}
        \gamma_i E_i \geq \frac{1}{\delta |R|}\qquad \text{for all }i \in R,
    \end{align}
    then $R \in \Ccal_t$.
\end{proposition}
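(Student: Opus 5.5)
We need to check, for every $S \subseteq [t]$, the inequality $E_S \geq \FDP_S(R)/\delta$. Since $R \subseteq [t]$, the subset $S \cap R$ determines the numerator of $\FDP_S(R)$, and $\FDP_S(R) = |S \cap R|/|R|$ because $R$ is nonempty. So it suffices to show
\begin{align}
    \delta E_S \geq \frac{|S \cap R|}{|R|} \qquad \text{for all } S \subseteq [t].
\end{align}
If $S \cap R = \emptyset$, the right-hand side is zero. The left-hand side is nonnegative, since $E_S$ is a nonnegative combination of nonnegative e-values (and $E_\emptyset = 0$ handles $S = \emptyset$). So this case is immediate.

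For $S \cap R \neq \emptyset$, the plan is to drop the terms in $S \setminus R$ and lower bound the remaining weights using condition (i). Nonnegativity of $\gamma_i^S E_i$ gives $E_S \geq \sum_{i \in S \cap R} \gamma_i^S E_i$. Condition (i) gives $\gamma_i^S \geq \gamma_i$, and $E_i \geq 0$, so this is at least $\sum_{i \in S \cap R} \gamma_i E_i$. Applying weighted self-consistency termwise yields
\begin{align}
    \delta E_S \geq \delta \sum_{i \in S \cap R} \gamma_i E_i \geq \delta \sum_{i \in S \cap R} \frac{1}{\delta |R|} = \frac{|S \cap R|}{|R|},
\end{align}
which is exactly the required constraint. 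Hence $R \in \Ccal_t$ by the definition in \eqref{eq:online-eclosure-col}.

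There is no real obstacle here; the only point needing care is that the argument uses just the lower bound $\gamma_i^S \geq \gamma_i$ from condition (i). The budget constraint $\sum_{i \in S}\gamma_i^S \leq 1$ and condition (ii) matter only for validity and increasingness of the e-collection, not for this inclusion. I would also note explicitly that nonnegativity of the e-values is what allows discarding the indices in $S \setminus R$.
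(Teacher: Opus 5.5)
Your proposal is correct and follows essentially the same route as the paper's proof: drop the indices in $S \setminus R$ using nonnegativity, lower-bound $\gamma_i^S$ by $\gamma_i$ via condition (i), and apply weighted self-consistency termwise to get $E_S \geq |S \cap R|/(\delta|R|) = \FDP_S(R)/\delta$. Your separate treatment of the case $S \cap R = \emptyset$ is harmless but not needed, since the same chain of inequalities already covers it through the empty sum.
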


\begin{proof}
    For any $S \subseteq [t]$,
    \begin{align}
        E_S
        = \sum_{i \in S}\gamma_i^S E_i
        \geq \sum_{i \in S \cap R}\gamma_i E_i
        \geq \frac{|S \cap R|}{\delta |R|}
        = \frac{\FDP_S(R)}{\delta}.
    \end{align}
    Hence $R$ satisfies all constraints in \eqref{eq:online-eclosure-col}, so $R \in \Ccal_t$.
\end{proof}

The closure set can be strictly larger than the weighted self-consistent family.
For example, let $\gamma_1 = \gamma_2 = 1/2$,
\[
\gamma_1^{\{1\}}=\tfrac{1}{2},\quad
\gamma_1^{\{1,2\}}=\gamma_2^{\{1,2\}}=\tfrac{1}{2},\quad
\gamma_2^{\{2\}}=1,
\]
and choose $E_1 = 1/(2\delta)$, $E_2 = 3/(2\delta)$.
Then no nonempty weighted self-consistent rejection set exists, but $\{2\} \in \Ccal_2$.

\subsection{Alternative e-collection for $\CeLOND$}\label{sec: online gamma comp}
In \Cref{thm: closed elond}, we used $E_S = \sum_{i \in S} \gamma_{|S \cap [i]|}E_i$ as the e-collection to construct $\CeLOND$. That choice requires a nonincreasing $\boldsymbol{\gamma}$ sequence in the strict-improvement argument. Here we record an alternative e-collection that avoids that monotonicity requirement.

For a finite set $S = \{s_1 < \dots < s_m\}$, define $s_0 \coloneqq 0$ and
\begin{align}
    \gamma_{s_j, S} \coloneqq \sum_{\ell = s_{j - 1} + 1}^{s_j}\gamma_\ell,\qquad
    E_S \coloneqq \sum_{j = 1}^m \gamma_{s_j, S} E_{s_j}.
\end{align}
For each $t$ and $S \subseteq [t - 1]$, define
\begin{align}
    \Gamma_t(S) \coloneqq \sum_{\ell = \max(S \cup \{0\}) + 1}^{t}\gamma_\ell.
\end{align}
Then $E_{S \cup \{t\}} = E_S + \Gamma_t(S)E_t$.

\begin{theorem}\label{thm:online-gamma-comp}
    The procedure with test levels
    \begin{align}
        \alpha_t
        =\min_{\substack{S \subseteq [t - 1]:\\
        1 + |S \cap R_{t - 1}| - \delta E_S (|R_{t - 1}| + 1) > 0}}
        \frac{\delta \Gamma_t(S)(|R_{t - 1}| + 1)}
        {1 + |S \cap R_{t - 1}| - \delta E_S (|R_{t - 1}| + 1)}
    \end{align}
    controls SupFDR at level $\delta$ under arbitrary dependence, and dominates e-LOND for any nonnegative $\boldsymbol{\gamma}$ with $\sum_{i \in \naturals}\gamma_i \leq 1$. A sufficient condition for strict improvement at time $t$ is that $S = \emptyset$ attains the minimum and $\sum_{i = 1}^{t}\gamma_i > \gamma_t$.
\end{theorem}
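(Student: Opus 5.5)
We follow the proof of \Cref{thm: closed elond}, with the new e-collection in place of \eqref{eq: reset closure evalue}.

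\textbf{Step 1: the e-collection is valid and increasing.} Under $H_S$, linearity gives $\expect[E_S]\le \sum_j \gamma_{s_j,S}$. The weights $\gamma_{s_j,S}$ partition $\{1,\dots,s_m\}$ into disjoint blocks, so this sum is $\sum_{\ell\le s_m}\gamma_\ell\le 1$. Each $E_S$ is $\Fcal_{\max S}$-measurable. Appending indices larger than $\max S$ leaves every existing weight $\gamma_{s_j,S}$ unchanged, since $s_j$ and $s_{j-1}$ are unchanged, and only adds nonnegative terms. Hence the e-collection is increasing, and \Cref{fact: online supfdr e-closure} applies. This is also an instance of \Cref{subsec:weighted-closure-general}: condition (i) holds because $\gamma_{s_j,S}\ge\gamma_{s_j}$, and condition (ii) holds because $\gamma_{i,S}$ depends only on $S\cap[i]$.

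\textbf{Step 2: the test level enforces the closure constraints.} By induction we show $R_t\in\Ccal_t$.
\begin{itemize}
\item Constraints for $S\subseteq[t]$ with $t\notin S$ are inherited from $R_{t-1}\in\Ccal_{t-1}$.
\item If we do not reject, the constraints with $t\in S$ follow from monotonicity: $E_{S'\cup\{t\}}\ge E_{S'}$ and $\FDP_{S'\cup\{t\}}(R_{t-1})=\FDP_{S'}(R_{t-1})$.
\item If we reject, write $S=S'\cup\{t\}$ and use $E_{S'\cup\{t\}}=E_{S'}+\Gamma_t(S')E_t$. The constraint becomes
\[
\frac{1+|S'\cap R_{t-1}|}{|R_{t-1}|+1}\le \delta\bigl(E_{S'}+\Gamma_t(S')E_t\bigr).
\]
When the denominator $1+|S'\cap R_{t-1}|-\delta E_{S'}(|R_{t-1}|+1)$ is nonpositive, the constraint is vacuous. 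Otherwise it is equivalent to $E_t\ge 1/\alpha_t(S')$, where $\alpha_t(S')$ is the displayed candidate level.
\end{itemize}
So rejecting exactly when $E_t\ge\alpha_t^{-1}$, with $\alpha_t=\min_{S'}\alpha_t(S')$, keeps $R_t\in\Ccal_t$. SupFDR control then follows from \Cref{fact: online supfdr e-closure}.

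\textbf{Step 3: domination of e-LOND.} Since $R_{t-1}\in\Ccal_{t-1}$, every $S'\subseteq[t-1]$ satisfies $|S'\cap R_{t-1}|\le\delta E_{S'}|R_{t-1}|$. This bound is trivial when $R_{t-1}=\emptyset$. Consequently
\[
1+|S'\cap R_{t-1}|-\delta E_{S'}(|R_{t-1}|+1)\le 1-\delta E_{S'}\le 1 .
\]
The numerator uses $\Gamma_t(S')\ge\gamma_t$, because the sum defining $\Gamma_t(S')$ always contains the index $\ell=t$. This is where the monotonicity assumption on $\boldsymbol{\gamma}$ disappears. Each candidate level is therefore at least $\delta\gamma_t(|R_{t-1}|+1)$, so $\alpha_t$ is at least the e-LOND level.

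\textbf{Step 4: strict improvement.} We compare the two procedures with pathwise induction. If the past rejection sets agree, the levels are ordered as in Step 3. In general, however, the closed procedure may have more past rejections, which makes both $|R_{t-1}|+1$ and the constraints different. The clean way to handle this is the same as in \Cref{thm: closed elond}: the per-time comparison uses only $R_{t-1}\in\Ccal_{t-1}$ and the fact that $\gamma_t(|R|+1)$ increases in $|R|$. We mirror that argument.

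For the sufficient condition, suppose $S=\emptyset$ attains the minimum. Then $E_\emptyset=0$ gives denominator $1$ and numerator $\Gamma_t(\emptyset)=\sum_{i\le t}\gamma_i>\gamma_t$. Hence $\alpha_t=\delta\bigl(\sum_{i\le t}\gamma_i\bigr)(|R_{t-1}|+1)$ strictly exceeds the e-LOND level. A distribution placing positive mass on $E_t$ between the two thresholds then yields a strictly larger rejection set.

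The step I expect to need most care is this comparison with differing rejection histories in Step 4.
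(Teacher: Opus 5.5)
Your proposal is correct and follows essentially the same route as the paper. Both arguments first establish validity and increasingness of the block-weighted e-collection. They then solve the closure constraint for $S'\cup\{t\}$ via $E_{S'\cup\{t\}}=E_{S'}+\Gamma_t(S')E_t$, and bound each denominator by $1$ using $R_{t-1}\in\mathcal{C}_{t-1}$ together with $\Gamma_t(S')\geq\gamma_t$. Strict improvement is read off from $S=\emptyset$.

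The step you flag as needing care, comparing differing rejection histories, is routine. If $R_{t-1}\supseteq R^{\textnormal{e-LOND}}_{t-1}$, then $\alpha_t\geq\delta\gamma_t(|R_{t-1}|+1)\geq\delta\gamma_t(|R^{\textnormal{e-LOND}}_{t-1}|+1)$, so every e-LOND rejection is also made by the closed procedure. Your write-up makes this step explicit, whereas the paper leaves it implicit.
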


\begin{proof}
    For each finite $S = \{s_1 < \dots < s_m\}$,
    \begin{align}
        \expect[E_S]
        \leq \sum_{j = 1}^m \gamma_{s_j, S}
        = \sum_{\ell = 1}^{s_m}\gamma_\ell
        \leq 1
    \end{align}
    under $H_S$, so $(E_S)_{S \in 2^\naturals}$ is a valid e-collection.

    For $S \subseteq [t - 1]$, requiring $R_{t - 1} \cup \{t\} \in \Ccal_t$ is equivalent to
    \begin{align}
        \FDP_{S \cup \{t\}}(R_{t - 1} \cup \{t\})
        \leq \delta(E_S + \Gamma_t(S)E_t),
    \end{align}
    i.e.,
    \begin{align}
        E_t
        \geq
        \frac{\FDP_{S \cup \{t\}}(R_{t - 1} \cup \{t\}) - \delta E_S}
        {\delta \Gamma_t(S)}.
    \end{align}
    Taking the maximum over $S \subseteq [t - 1]$ yields exactly the displayed test level, so $R_t \in \Ccal_t$ for all $t$. SupFDR control then follows from \Cref{fact: online supfdr e-closure}.

    To compare with e-LOND, use $R_{t - 1} \in \Ccal_{t - 1}$ to get
    \begin{align}
        |S \cap R_{t - 1}| \leq \delta E_S |R_{t - 1}|,
    \end{align}
    hence
    \begin{align}
        1 + |S \cap R_{t - 1}| - \delta E_S (|R_{t - 1}| + 1) \leq 1.
    \end{align}
    Also $\Gamma_t(S) \geq \gamma_t$ for every $S \subseteq [t - 1]$. Therefore each candidate term in the minimum is at least $\delta\gamma_t(|R_{t - 1}| + 1)$, which is the e-LOND level, so dominance holds.

    If $S = \emptyset$ is a minimizer and $\sum_{i = 1}^{t}\gamma_i > \gamma_t$, then
    \begin{align}
        \alpha_t
        = \delta\Big(\sum_{i = 1}^{t}\gamma_i\Big)(|R_{t - 1}| + 1)
        >
        \delta\gamma_t(|R_{t - 1}| + 1),
    \end{align}
    giving strict improvement at time $t$.
\end{proof}

\paragraph{Simulation results.} We compare e-LOND, $\overline{\text{e-LOND}}$, and the alternative-$\gamma$
closed e-LOND using the same local dependence simulation setup as in the
main simulations section in \Cref{fig:supp-alt-gamma-sim-power} and plot empirical error diagnostics in \Cref{fig:supp-alt-gamma-sim-fdr}. We see that the alternative-$\boldsymbol{\gamma}$ $\textCeLOND$ has worse power than the original $\boldsymbol{\gamma}$, while still being much more powerful than the original e-LOND procedure. All methods remain below the target $\delta = 0.1$ level in these diagnostics.
\begin{figure}[t]
    \centering
    \includegraphics[width=.6\textwidth]{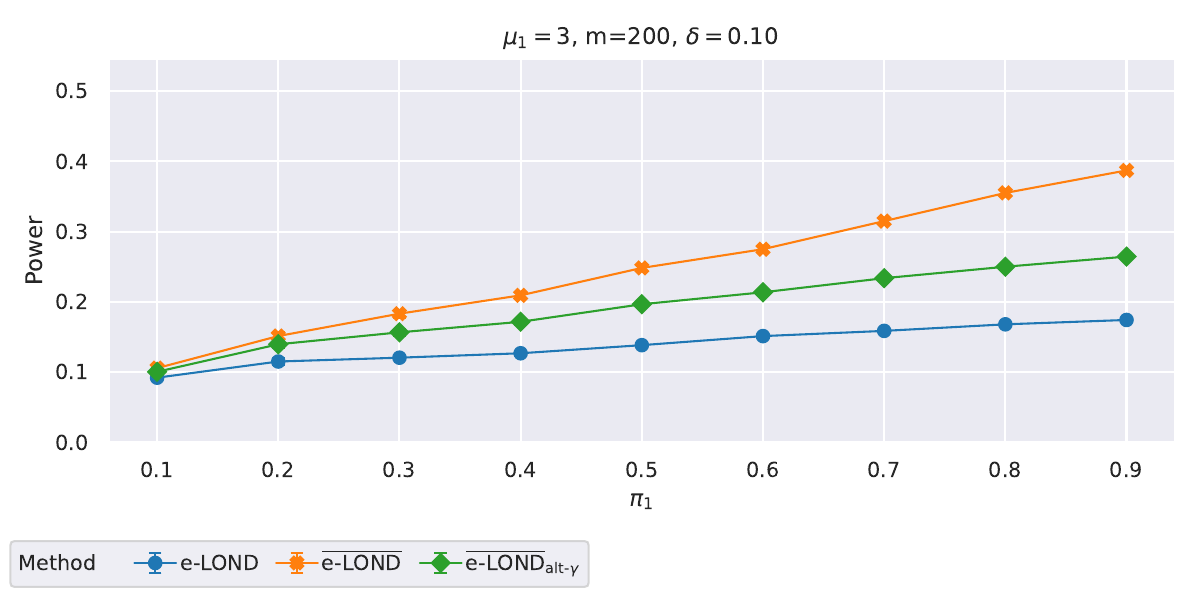}

    \includegraphics[width=.6\textwidth]{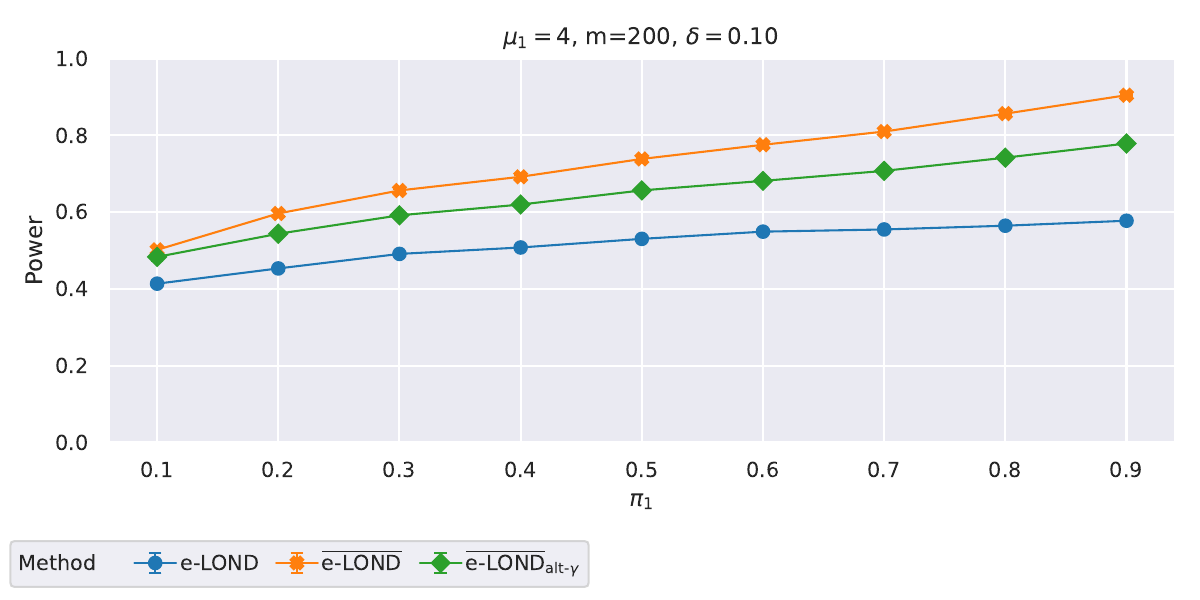}
    \caption{Power comparison for alternative choice of $\gamma$ for $\CeLOND$.}
    \label{fig:supp-alt-gamma-sim-power}
\end{figure}
\begin{figure}[t]
    \centering
    \includegraphics[width=.6\textwidth]{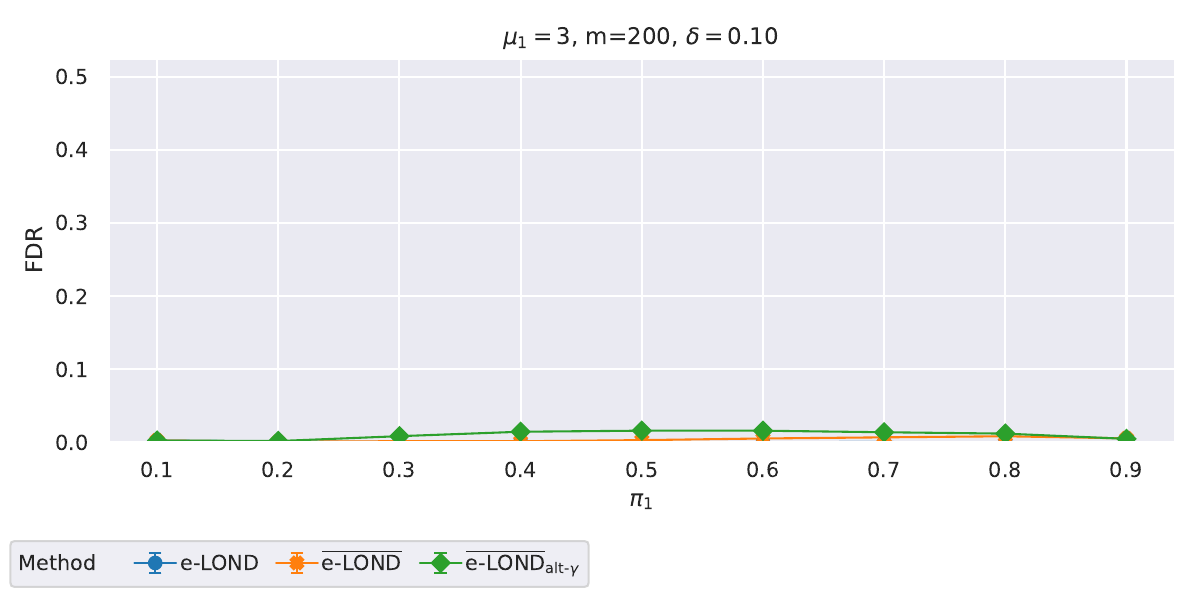}

    \includegraphics[width=.6\textwidth]{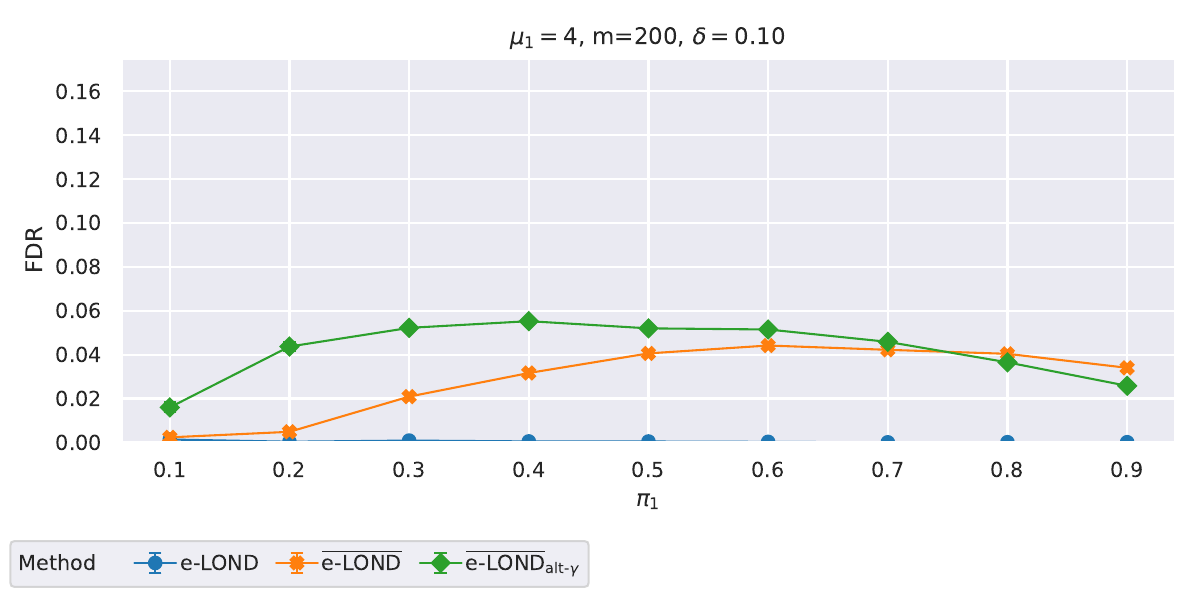}
    \caption{Empirical error diagnostics for the alternative-$\gamma$ closure comparison. All methods stay controlled at the target level $\delta = 0.1$.}
    \label{fig:supp-alt-gamma-sim-fdr}
\end{figure}

\subsection{Computation details of $\overline{\text{r-LOND}}$}\label{subsec:closed-rlond-computation}
The minimization in \eqref{eq:closed-r-lond} can be carried out in $O(t^2)$ time via a dynamic program mirroring the $\CeLOND$ case in \Cref{subsec:compound-evalues-donation}. Define for $i \in [t - 1]$ and $k \in \{0\}\cup [t - 1]$,
\begin{align}
    g_t(i, k) \coloneqq \max_{S \subseteq [i]: |S| = k} |S \cap R_{t - 1}| - \delta (|R_{t - 1}| + 1)E_S.
\end{align}
Initialize $g_t(0, 0) = 0$ and $g_t(0, k) = -\infty$ for $k > 0$. For $i \geq 1$, set $g_t(i, 0) = 0$, and for $k \in \{1, \dots, i\}$ update
\begin{align}
    g_t(i, k) = \max\Big\{ &g_t(i - 1, k),\\
        &g_t(i - 1, k - 1)
            + \ind\{i \in R_{t - 1}\}
            - (|R_{t - 1}| + 1)
                \left\lceil \left(\frac{P_i \ell_k}{\delta \gamma_k}\right) \vee 1 \right\rceil^{-1} \Big\}.
\end{align}
Then
\begin{align}
    \alpha_t^{\overline{\rLOND}}
    =
    \min\Bigg\{1,\;
    \min_{\substack{k \in \{0, \dots, t - 1\}\\ 1 + g_t(t - 1, k) > 0}}
    \frac{\delta \gamma_{k + 1}}{\ell_{k + 1}}
    \left\lfloor
    \left(
    \frac{|R_{t - 1}| + 1}{
        1 + g_t(t - 1, k)
    }\right)\wedge (k + 1)
    \right\rfloor
    \Bigg\}.\notag
\end{align}
The state space only tracks $k = |S|$, so the algorithm requires $O(t^2)$ time and $O(t)$ memory per step; in practice we restrict $i$ to indices in $R_{t - 1}$ to reduce the constant factors.

\section{Deferred proofs}\label{sec:deferred-proofs}
This section contains deferred proofs for the main theoretical results in the paper, including the closure-based improvements, the donation framework lemmas, and the p-value analogs.

\subsection{Proof of Theorem~\ref{fact: online supfdr e-closure}}\label{subsec:proof-online-e-closure}
\begin{proof}
    Since each $E_S$ is measurable with respect to $\Fcal_{\sup(S)}$, the collection $(\Ccal_t)_{t \in \naturals}$ is an online procedure.
    Fix a finite horizon $T$. For any $t\leq T$ and $R\in\Ccal_t$, let $S_t\coloneqq \Ncal\cap[t]$. Then $R\subseteq[t]$, so
    \begin{align}
        \FDP_{\Ncal}(R)=\FDP_{S_t}(R)\leq \delta E_{S_t}.
    \end{align}
    Since $S_t\subseteq \Ncal\cap[T]$ and the e-collection is increasing, $E_{S_t}\leq E_{\Ncal\cap[T]}$. Therefore
    \begin{align}
        \sup_{t\leq T}\sup_{R\in\Ccal_t}\FDP_{\Ncal}(R)
        \leq \delta E_{\Ncal\cap[T]}.
    \end{align}
    Taking expectations gives the finite-horizon bound because $E_{\Ncal\cap[T]}$ is an e-value for the finite intersection null $H_{\Ncal\cap[T]}$:
    \[
        \expect\left[\sup_{t\leq T}\sup_{R\in\Ccal_t}\FDP_{\Ncal}(R)\right]\leq\delta.
    \]
    Letting $T\to\infty$ and applying monotone convergence yields the stated SupFDR bound.
    This proves the first claim; the stated consequence for any discovery sequence with $R_t \in \Ccal_t$ is immediate.

\end{proof}

\subsection{Proof of Proposition~\ref{prop: online-weighted-sc-compound}}\label{subsec:proof-online-weighted-sc-compound}
\begin{proof}
    For a finite horizon $T$, let
    \[
        \Ccal_T\coloneqq\{R\subseteq[T]: \tilde E_i\geq(\delta\gamma_i |R|)^{-1}\text{ for all }i\in R\}.
    \]
    For every $R\in\Ccal_T$,
    \begin{align}
        \FDP_{\Ncal}(R)
        =\sum_{i\in\Ncal\cap R}\frac{1}{|R|\vee1}
        \leq \delta\sum_{i\in\Ncal\cap[T]}\gamma_i\tilde E_i.
    \end{align}
    Taking the supremum over $R\in\Ccal_T$ and then expectations yields
    \[
        \expect\left[\sup_{R\in\Ccal_T}\FDP_{\Ncal}(R)\right]
        \leq \delta\sum_{i\in\Ncal\cap[T]}\gamma_i\expect[\tilde E_i]
        \leq \delta.
    \]
    The collections $\Ccal_T$ increase to $\Ccal$, so monotone convergence gives the result.
\end{proof}

\subsection{Proof of Proposition~\ref{prop: donation compound}}\label{subsec:proof-donation-compound}
\begin{proof}
    For all $t \in \naturals$, we note that
    \begin{align}
        &\sum_{i \in \mathcal{N} \cap [t]} \gamma_i \expect[\tilde{E}_i]
        =
        \sum_{i \in \mathcal{N} \cap [t]} \gamma_i \expect[E_i + B_i]\\
        &\leq
        \sum_{i \in \mathcal{N} \cap [t]} \gamma_i \expect[E_i] - \sum_{i \in [t] \setminus \mathcal{N}} \gamma_i \expect[B_i]\\
        &\leq
       \sum_{i \in \mathcal{N} \cap [t]} \gamma_i \expect[E_i] + \sum_{i \in [t] \setminus \mathcal{N}} \gamma_i
       \leq 1.
    \end{align}
    \rev{The first inequality is due to the definition of $\mathbf{B}$ as a $\boldsymbol{\gamma}$-weighted donation. The last inequality is due to the fact that $E_i$ is an e-value for all $i \in \mathcal{N}$ and $\boldsymbol{\gamma}$ sum to at most 1.}
\end{proof}

\subsection{Proof of Proposition~\ref{prop: sup donation}}\label{subsec:proof-sup-donation}
\begin{proof}
    For any valid donation sequence $\mathbf{B}$ and any $R\in\Ccal(\mathbf{B})$, weighted self-consistency gives the pathwise bound
    \begin{align}
        \FDP_{\Ncal}(R)
        \leq
        \delta\sum_{i\in\Ncal}\gamma_i(E_i+B_i).
    \end{align}
    Taking the supremum over valid $\mathbf{B}$, the largest possible transfer into null coordinates is bounded by the amount withdrawn from nonnull coordinates, and each nonnull coordinate can withdraw at most $E_i\wedge1$. Hence
    \[
        \sup_{\mathbf{B}\in\mathcal B}\sum_{i\in\Ncal}\gamma_i(E_i+B_i)
        \leq
        \sum_{i\in\Ncal}\gamma_iE_i
        +\sum_{i\notin\Ncal}\gamma_i(E_i\wedge1).
    \]
    Taking expectations gives at most $\sum_i\gamma_i\leq1$, proving the claim.
\end{proof}

\subsection{Proof of Theorem~\ref{thm:donation-elond}}\label{subsec:proof-donation-elond}
\begin{proof}
    \rev{The proof sketch after \Cref{thm:donation-elond} derives the test level \eqref{eq:donation-alpha} by maximizing the feasible donation to the new hypothesis while keeping all previous discoveries weighted self-consistent. Equivalently, for each $t \in \naturals$, define the $\boldsymbol{\gamma}$-weighted donation sequence $\mathbf{B}^{(t)}$ by}
    \begin{align}
        B_i^{(t)} =
        \begin{cases}
        \bar{W}_t / \gamma_t & \text{ if } i = t,\\
        ((\delta\gamma_i(|R_{t - 1}| + 1))^{-1} - E_i) \vee -1 & \text{ if }i \in R_{t - 1},\\
        - (E_i \wedge 1)&\text{ if }i<t,\ i\notin R_{t - 1},\\
        0&\text{ if }i>t.
        \end{cases}
    \end{align}
    \rev{The derivation shows that $R_t\in\Ccal(\mathbf{B}^{(t)})$ exactly when $E_t\geq \alpha_t^{-1}$, with $\alpha_t$ as in \eqref{eq:donation-alpha}. Moreover, $\gamma_t B_t^{(t)} = \bar{W}_t = - \sum_{i \in [t - 1]}\gamma_i B_i^{(t)}$, so $\mathbf{B}^{(t)}$ is a valid $\boldsymbol{\gamma}$-weighted donation. Therefore, by \Cref{prop: sup donation}, donation e-LOND controls the SupFDR at level $\delta$.}

    Lastly, we show that donation e-LOND strictly improves over e-LOND. Let $R^{\textnormal{d-eLOND}}_t$ and $R^{\textnormal{e-LOND}}_t$ and $\alpha_t^{\textnormal{d-eLOND}}$ and $\alpha_t^{\textnormal{e-LOND}}$ be the discovery sets and test levels of donation e-LOND and e-LOND respectively. Then, if we assume $R_{t - 1}^{\textnormal{d-eLOND}} \supseteq R_{t - 1}^{\textnormal{e-LOND}}$ by induction, we have that
    \begin{align}
        \alpha_t^{\textnormal{d-eLOND}}
        &\rev{=}
        \frac{\delta \gamma_t (|R_{t - 1}^{\textnormal{d-eLOND}}|+ 1)}{1 - (\delta (|R_{t - 1}^{\textnormal{d-eLOND}}| + 1)\bar{W}_t \wedge 1)}\\
        &\geq
        \delta \gamma_t (|R_{t - 1}^{\textnormal{e-LOND}}| + 1)\\
        &=
        \alpha_t^{\textnormal{e-LOND}},
    \end{align}
    since $(1 - (\delta (|R_{t - 1}^{\textnormal{d-eLOND}}| + 1)\bar{W}_t \wedge 1))^{-1} \geq 1$.

    For a concrete instance where donation e-LOND make strictly more discoveries than e-LOND with positive probability, take $t = 2$ and let $\gamma_1 > \gamma_2$. If $E_1 \geq (\delta\gamma_1)^{-1}$, then both methods reject $H_1$ and $R_1 = \{1\}$, so $\alpha_2^{\textnormal{d-eLOND}} > \alpha_2^{\textnormal{e-LOND}}$. Therefore, whenever
    \begin{align}
        \prob\left(E_1 \geq (\delta\gamma_1)^{-1}\text{ and } (\alpha_2^{\textnormal{d-eLOND}})^{-1} \leq E_2 < (\alpha_2^{\textnormal{e-LOND}})^{-1}\right) > 0,
    \end{align}
    donation e-LOND rejects $H_2$ while e-LOND does not, which proves our desired result.
\end{proof}

\subsection{Proof of Theorem~\ref{thm: closed rlond}}\label{subsec:proof-closed-rlond}
\begin{proof}
    The above choice of $E_S$ in \eqref{eq: closed rlond ecol} is a valid e-value for $H_S$ since $f_t$ is a valid calibrator for all $t \in \naturals$ and we are taking a weighted mean of $(f_{|S \cap [i]|}(P_i))_{i \in S}$ with weights $(\gamma_{|S \cap [i]|})_{i \in S}$.
    Thus, since we want to have $R_t \in \Ccal_t$, we need to satisfy for all $S \subseteq [t - 1]$ that
    \begin{align}
        \delta E_{S \cup \{t\}} &= \frac{\ind\{P_t \leq \delta \gamma_{|S| + 1} (|S| + 1) / \ell_{|S| + 1}\}}{\left\lceil (P_t \ell_{|S| + 1} / (\delta \gamma_{|S| + 1})) \vee 1 \right\rceil} + \delta E_S\\
        &\geq
        \FDP_{S \cup \{t\}}(R_{t - 1} \cup \{t\}).
    \end{align}
    If we rearrange this, we get the following condition for all $S \subseteq [t - 1]$:
    \begin{align}
        \frac{\ind\{P_t \leq \delta \gamma_{|S| + 1} (|S| + 1) / \ell_{|S| + 1}\}}{\left\lceil (P_t \ell_{|S| + 1} / (\delta \gamma_{|S| + 1})) \vee 1 \right\rceil}
        \geq \Delta E_S(t).
    \end{align}
    If $\Delta E_S(t) \leq 0$, this constraint is trivially satisfied by any $P_t \in [0, 1]$. Otherwise, we can derive the following condition on $P_t$:
    \begin{align}
        P_t
        \leq
        \frac{\delta \gamma_{|S| + 1}}{\ell_{|S| + 1}}
        \left\lfloor
        \left(\delta \Delta E_S(t)\right)^{-1}\wedge(|S| + 1)
        \right\rfloor.
    \end{align}
    Thus, we get that the above is equivalent to $P_t \leq \alpha_t$ where $\alpha_t$ is given in \eqref{eq:closed-r-lond}. As a result, we get that SupFDR control is ensured by $\overline{\rLOND}$ as a result of \Cref{fact: online supfdr e-closure}.

    To show that $\overline{\rLOND}$ strictly improves over r-LOND, we first prove that the test levels of $\overline{\rLOND}$ are always at least as large as those of r-LOND.
    Because $R_{t - 1} \in \Ccal_{t - 1}$, we have $\FDP_S(R_{t - 1}) \leq \delta E_S $ for all $S \subseteq [t - 1]$. Thus, we also get that $|S \cap R_{t - 1}| \leq \delta E_S |R_{t - 1}|$. As a result
    \begin{align}
        1 + |S \cap R_{t - 1}| - \delta E_S (|R_{t - 1}| + 1) \leq 1 - \delta E_S \leq 1.
    \end{align}
    Equivalently, for any nonvacuous $S$ with $k=|S|+1$ and $r=|R_{t-1}|+1$, we have $(\delta\Delta E_S(t))^{-1}\geq r$. The candidate threshold in \eqref{eq:closed-r-lond} is therefore at least
    \[
        \frac{\delta\gamma_k}{\ell_k}(k\wedge r).
    \]
    Since $k\leq t$ and $i\gamma_i/\ell_i$ is nonincreasing, both $\gamma_i/\ell_i$ and $i\gamma_i/\ell_i$ are nonincreasing enough to imply
    \[
        \frac{\gamma_k}{\ell_k}(k\wedge r)
        \geq
        \frac{\gamma_t}{\ell_t}(t\wedge r).
    \]
    This is exactly the r-LOND level in \eqref{eq: rlond alpha} for the harmonic reshaping function.

    A concise sufficient condition to attain a strict improvement at time $t$ is that the terms being minimized over are all vacuous, i.e., that $\Delta E_S(t) \leq 0$ for all nonempty $S \subseteq [t - 1]$.
    As a result, we get that the minimization in \eqref{eq:closed-r-lond} is achieved by $S = \emptyset$ which in turn implies that $\alpha_t = \delta \gamma_1$. This can be strictly larger than the r-LOND level $\delta\gamma_t((|R_{t-1}|+1)\wedge t)/\ell_t$ under the stated monotonicity condition.

    A concrete instance of this for $t = 2$ is on the event $P_1 \leq \delta\gamma_1$. Then both $\overline{\rLOND}$ and \rLOND{} reject $H_1$ making $R_1 = \{1\}$. Then, we have that
    \begin{align}
        \Delta E_{\{1\}}(2)=\frac{1 + |\{1\}\cap R_1|}{\delta(|R_1| + 1)} - E_{\{1\}}
        = \frac{1}{\delta} - \frac{1}{\delta}=0,
    \end{align}
    because $f_1(P_1)=1/(\delta\gamma_1)$ and hence $E_{\{1\}}=\gamma_1 f_1(P_1)=1/\delta$. Now, we use $\alpha_t^{\rLOND}$ and $\alpha_t^{\overline{\rLOND}}$ to differentiate the test levels of the different procedures. We have that $\alpha_2^{\overline{\rLOND}}=\delta\gamma_1$, while $\alpha_2^{\rLOND}=2\delta\gamma_2/\ell_2$. If $i\gamma_i / \ell_i$ is strictly decreasing at $i=2$, then $\gamma_1 > 2\gamma_2 / \ell_2$, so $\alpha_2^{\overline{\rLOND}} > \alpha_2^{\rLOND}$. Consequently, any distribution that satisfies
    \begin{align}
        \prob\left(P_1 \leq \delta\gamma_1,\ \alpha_2^{\rLOND} < P_2 \leq \alpha_2^{\overline{\rLOND}}\right) > 0,
    \end{align}
    will have $\overline{\rLOND}$ reject $H_2$ while $\rLOND$ does not with positive probability, which establishes strict improvement.
\end{proof}

\subsection{Proof of Theorem~\ref{thm: donation rlond}}\label{subsec:proof-donation-rlond}
\begin{proof}
    Let $E_t = f_t(P_t)$ for each $t \in \naturals$. The calibrator $f_t$ of \citet{vovk_e-values_calibration_2021} guarantees that $\mathbf{E}$ are valid e-values even under arbitrary dependence. Let $\alpha_t^{\textnormal{d-eLOND}}$ denote the test level of donation e-LOND from \eqref{eq:donation-alpha}, and let $\alpha_t^{\textnormal{d-rLOND}}$ denote the test level of donation r-LOND from \eqref{eq:donation-r-lond}. Donation r-LOND therefore enjoys SupFDR control by \Cref{thm:donation-elond}.

    To show that rejecting when $P_t \leq \alpha_t^{\textnormal{d-rLOND}}$ is equivalent to rejecting when $f_t(P_t) \geq (\alpha_t^{\textnormal{d-eLOND}})^{-1}$, we directly compare rejection decisions. By the definition of $f_t$, the donation e-LOND condition $f_t(P_t) \geq (\alpha_t^{\textnormal{d-eLOND}})^{-1}$ is equivalent to
    \begin{align}
        P_t \leq \frac{\delta\gamma_t t}{\ell_t}
        \quad\text{and}\quad
        \left\lceil\left(P_t \ell_t / (\delta \gamma_t)\right) \vee 1 \right\rceil
        \leq
        \frac{\alpha_t^{\textnormal{d-eLOND}}}{\delta \gamma_t}.
    \end{align}
    Since the left-hand side is integer valued, the second inequality is equivalent to
    \begin{align}
        \left\lceil\left(P_t \ell_t / (\delta \gamma_t)\right) \vee 1 \right\rceil
        \leq
        \left\lfloor \frac{\alpha_t^{\textnormal{d-eLOND}}}{\delta \gamma_t} \right\rfloor.
    \end{align}
    Combining with the indicator constraint $P_t \leq \delta\gamma_t t/\ell_t$, we get the equivalent condition
    \begin{align}
        P_t
        \leq
        \frac{\delta \gamma_t}{\ell_t}
        \left(
            \left\lfloor \frac{\alpha_t^{\textnormal{d-eLOND}}}{\delta \gamma_t} \right\rfloor
            \wedge t
        \right).
    \end{align}
    Substituting \eqref{eq:donation-alpha} for $\alpha_t^{\textnormal{d-eLOND}}$ yields exactly the threshold in \eqref{eq:donation-r-lond}. Thus the rejection sets of donation r-LOND coincide with those of donation e-LOND, so the SupFDR control conclusion transfers.

    Lastly, we show that donation r-LOND can strictly improve over r-LOND. Let $\alpha_t^{\textnormal{d-rLOND}}$ and $\alpha_t^{\textnormal{r-LOND}}$ denote the test levels of donation r-LOND and r-LOND respectively.
    A concise sufficient condition for strict inequality at time $t$ is that both procedures agree up to time $t - 1$, with $r_t\coloneqq |R_{t - 1}| + 1 < t$, and the donation wealth is large enough to increase the integer factor:
    \[
        \left\lfloor
                \frac{r_t}{1 - (\delta r_t\bar{W}_t \wedge 1)}
        \right\rfloor > r_t .
    \]
    In general, using $r_t \leq t$ together with $0 \leq \delta r_t\bar{W}_t \wedge 1 \leq 1$, we have
    \begin{align}
        \alpha_t^{\textnormal{d-rLOND}}
        &=
        \frac{\delta\gamma_t}{\ell_t}
        \left(
            \left\lfloor    
                \frac{|R_{t - 1}| + 1}{1 - (\delta (|R_{t - 1}| + 1)\bar{W}_t \wedge 1)}
            \right\rfloor
            \wedge t
        \right)
        \geq
        \frac{\delta\gamma_t}{\ell_t}\big((|R_{t - 1}| + 1)\wedge t\big)\\
        &=
        \alpha_t^{\textnormal{r-LOND}}.
    \end{align}

    Strict inequality holds whenever the displayed floor is larger than $(|R_{t-1}|+1)\wedge t$. Thus, whenever for some $t$,
    \begin{align}
        \prob\left(\alpha_t^{\textnormal{r-LOND}} < P_t \leq \alpha_t^{\textnormal{d-rLOND}}\right) > 0,
    \end{align}
    donation r-LOND rejects $H_t$ while r-LOND does not, which gives strict improvement. Such events are easy to construct by placing positive mass on earlier p-values that create positive donation wealth while leaving $r_t<t$.
\end{proof}

\section{Simulation extensions}\label{sec:simulation-extensions}
We provide additional simulation results and construction details referenced in the main text.

\subsection{Additional simulations}\label{sec: additional simulations}

We also include simulations in the same setup as \Cref{sec: simulations}, although with $\mu=4$ in \Cref{fig:local-dep-simulation-mu4}. We see similar results, with the closed and donation variants of each procedure improving over the corresponding baselines as the non-null fraction $\pi_1$ increases.
\begin{figure*}[th]
    \centering
    \begin{subfigure}{.5\textwidth}
        \centering
\includegraphics[width=\textwidth,trim={0 0 0 0},clip]{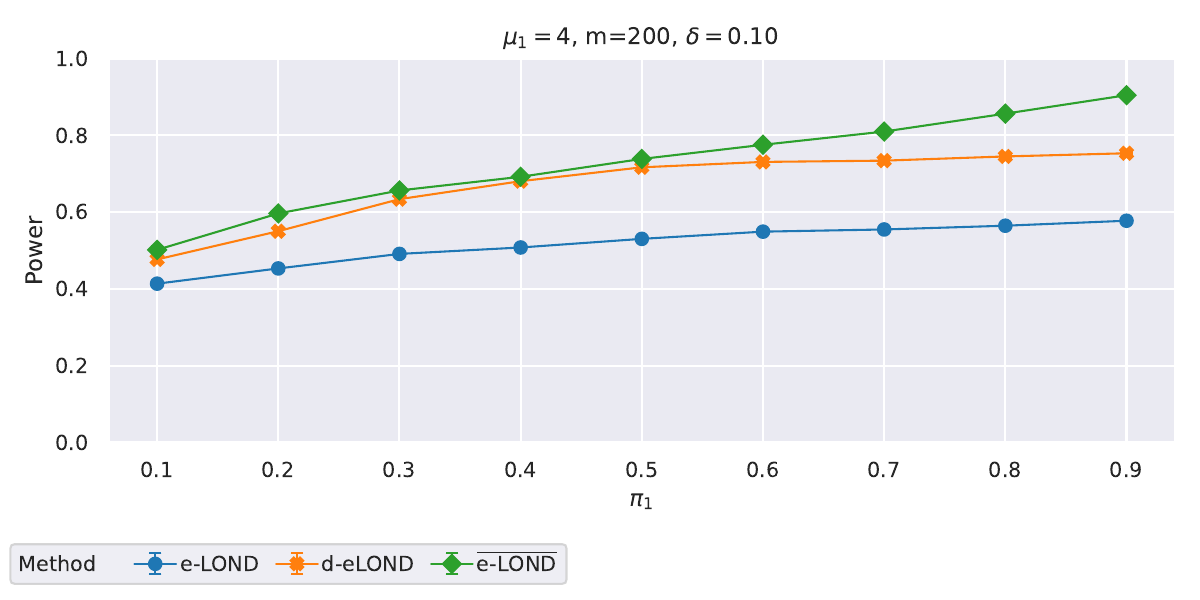}\caption{E-value procedures.}
    \end{subfigure}\begin{subfigure}{.5\textwidth}
        \centering
\includegraphics[width=\textwidth,trim={0 0 0 0},clip]{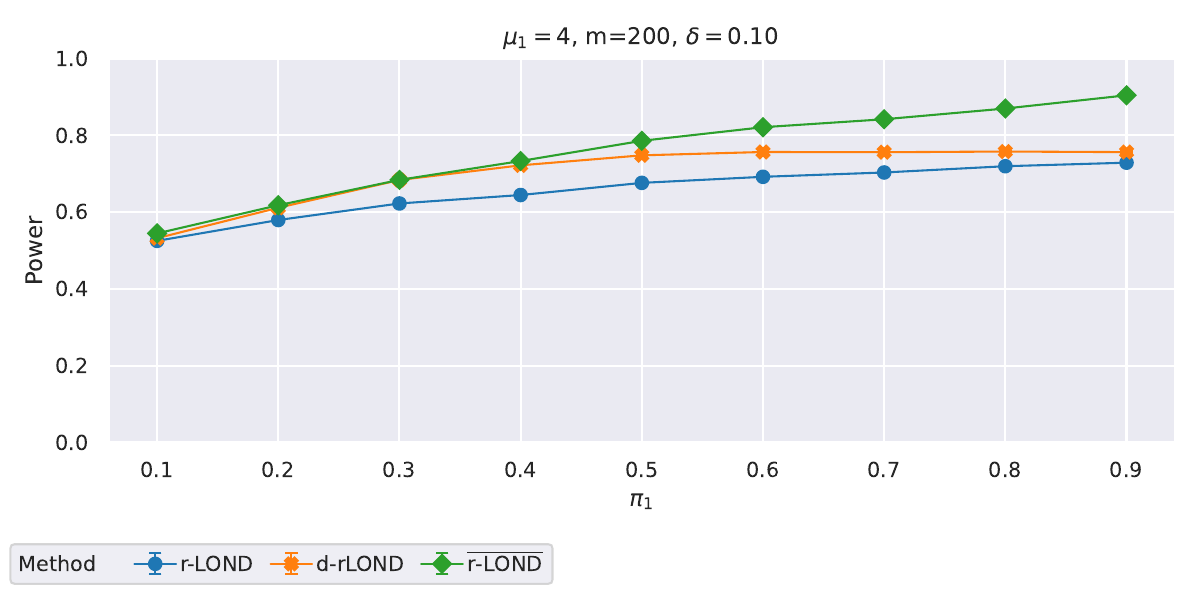}\caption{P-value procedures.}
    \end{subfigure}
    \caption{Power-only local dependence simulation results over 200 trials for alternative signals of $\mu = 4$ and $\delta = 0.1$. In both e-value and p-value families, donation and closed variants improve over the corresponding baselines as the non-null fraction $\pi_1$ increases.}
    \label{fig:local-dep-simulation-mu4}
\end{figure*}

\subsection{\revheading{Real data e-value formulation}}\label{subsec:real-data-annotation-metrics}
\begin{revblock}
The real-data experiment in \Cref{tab:real-data-summary} uses the NYC taxi dataset with annotated anomaly windows. Let $Z_t$ denote the STL residual series, let $C=\{1,\ldots,n_{\mathrm{cal}}\}$ be the calibration indices, and let $W_t=\{\max(1,t-n_{\mathrm{win}}),\ldots,t-1\}$ be the rolling alternative window. We use two density estimates. First, $f_0$ is the baseline residual density: a one-dimensional kernel density estimate fit to the calibration residuals $\{Z_i:i\in C\}$. Second, $f_t$ is a time-local alternative residual density for the $t$th test: we fit a bivariate kernel density estimate to the recent time-residual pairs $\{(i,Z_i):i\in W_t\}$, then evaluate this fitted density with the time coordinate fixed at $t$. Thus $f_t(z)$ is the local density assigned to residual value $z$ near time $t$, while $f_0(z)$ is the calibration baseline density assigned to the same residual value. The likelihood ratio e-value used by the procedures is
\[
    E_t^{\mathrm{taxi}}
    \coloneqq
    \frac{f_t(Z_t)/f_0(Z_t)}
    {|C|^{-1}\sum_{i\in C} f_t(Z_i)/f_0(Z_i)}.
\]
This normalized likelihood-ratio score is used as an empirical anomaly e-value proxy in the reproduction. A formal e-value guarantee for such fitted density-ratio scores requires additional sample-splitting or exchangeability assumptions; for example, with an independently trained $f_t/f_0$, the conformalized score
\[
    E_t^{\mathrm{conf}}
    \coloneqq
    \frac{(|C|+1) f_t(Z_t)/f_0(Z_t)}
    {f_t(Z_t)/f_0(Z_t)+\sum_{i\in C} f_t(Z_i)/f_0(Z_i)}
\]
has the usual calibration interpretation under exchangeability of the test residual and calibration residuals. We therefore interpret \Cref{tab:real-data-summary} as an empirical anomaly-detection illustration rather than as a separate verification of the formal SupFDR theorem under estimated density ratios.
\end{revblock}

\subsection{Simulation details}\label{sec:local-dep-details}
The main simulations use the Gaussian local-dependence model described in \Cref{sec: simulations}. For each hypothesis $t$, draw a latent indicator $A_t\sim\mathrm{Bernoulli}(\pi_1)$, set $\mu_t=\mu A_t$, and generate a Gaussian vector $(X_1,\ldots,X_m)$ with mean vector $(\mu_1,\ldots,\mu_m)$ and covariance matrix
\[
    \Sigma_{ij}=0.5^{|i-j|}\ind\{|i-j|\leq L\}.
\]
We use $L=100$ and verify numerically that the resulting finite covariance matrix is positive semidefinite for the simulated values of $m$. Under the null, each marginal observation is $N(0,1)$; under the alternative, each marginal observation is $N(\mu,1)$. The e-value for testing $H_t:\mu_t=0$ against the simple alternative $\mu_t=\mu$ is the Gaussian likelihood ratio
\[
    E_t=\exp\{\mu X_t-\mu^2/2\}.
\]
The p-value counterpart is the one-sided Gaussian p-value
\[
    P_t=1-\Phi(X_t).
\]
Power is reported as the realized fraction of nonnull hypotheses rejected. In simulations where we report empirical error, the SupFDR estimate is computed as the trial average of $\sup_{t\leq m}\FDP_{\Ncal}(R_t)$; some supplemental plots also include final-time FDR diagnostics for comparison with earlier online-testing literature.

\section{Improvements via donation beyond online multiple testing}\label{sec: arc donation}

We will discuss several improvements we can make to e-value procedures using the donation framework that is beyond the online multiple testing setting. In particular, we will consider how we can use the donation framework to improve methods for the acceptance-to-rejection (ARC) model of \citet{fischer_online_generalization_2025}, the decision deadlines setup of \citet{fisher_online_control_2022}, as well as the traditional offline multiple testing setting with a finite number of hypotheses.

\subsection{Online donation e-BH for acceptance-to-rejection (ARC)}
The acceptance-to-rejection (ARC) setup of \citet{fischer_online_generalization_2025} assumes that once one makes a rejection on a hypothesis, they cannot revoke it at a future time step, but allows one to make rejections on all previous unrejected hypotheses. This is equivalent to restricting the rejection set at each time step to be nested, i.e., $R_1 \subseteq R_2, \dots$. 

\paragraph{The online e-BH procedure} In the ARC setting, \citet{fischer_online_generalization_2025} showed that one can apply a weighted version of the e-BH procedure to an infinite stream of hypotheses using the weighted self-consistency framework and maintain FDR control. The concept of weighted self-consistency is a generalization of the online \eBH{} and \ELOND{} procedure, since the discovery sets of both are included in the simultaneous weighted self-consistency collection of discovery sets in \eqref{eq:online weighted self-consistency}, which guarantees control over the supremum of FDP of all discovery sets in the collection.

For a given fixed sequence $\boldsymbol{\gamma}$, the \emph{online e-BH procedure} makes the following number of discoveries at time $t$:
\begin{align}
    r_t^{\textnormal{o-eBH}} \coloneqq \max\left\{r \in [t]: \sum_{i \in [t]} \ind\left\{E_i \geq \frac{1}{\delta \gamma_i r}\right\} \geq r\right\}, \label{eq: online-ebh}
\end{align}
with $r_t^{\textnormal{o-eBH}} = 0$ if the set is empty. The rejection set is then defined as:
\begin{align}
   R_t = \left\{i \in [t]: E_i \geq \frac{1}{\delta \gamma_i r_t^{\textnormal{o-eBH}}}\right\}. 
\end{align}
 
Using the donation framework, we can improve online e-BH. First, we define the notion of weighted order e-values, i.e., let $\gamma_{(i):t}$ and $E_{(i):t}$ be the values of $\gamma_j$ and $E_j$ corresponding to the $i$th largest $\gamma_j E_j$ among $j \in [t]$. Thus, we get the \emph{online donation e-BH procedure} as follows. We can define the number of discoveries made as
\begin{align}
    r_t^{\textnormal{o-DeBH}}
    \coloneqq
    \max\Bigg\{r \in [t]:
    &\sum_{i \in [r]} (\gamma_{(i):t} E_{(i):t} - (\delta r)^{-1}) \wedge \gamma_{(i):t} \notag\\
    &\quad + \sum_{i \in \{r + 1, \dots, t\}} \gamma_{(i):t} (E_{(i):t} \wedge 1) \geq 0\Bigg\}, \label{eq: online donation ebh}
\end{align}
As a result, the discovery set $R_t$ simply rejects the $r_t^{\textnormal{o-DeBH}}$ largest indices of $\gamma_i E_i$ among $i \in [t]$. 
\begin{theorem}[Online donation e-BH controls SupFDR]\label{thm:donation-online-ebh}
    Online donation e-BH with the aforementioned rejection sets $\mathbf{R}$ satisfies $\SupFDR(\mathbf{R}) \leq \delta$ for arbitrarily dependent e-values, and strictly improves over online e-BH.
\end{theorem}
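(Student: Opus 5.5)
The plan is to reduce SupFDR control to \Cref{prop: sup donation}: for each fixed $t$, I will exhibit a valid $\boldsymbol{\gamma}$-weighted donation $\mathbf{B}^{(t)}\in\mathcal{B}$ with $R_t\in\Ccal(\mathbf{B}^{(t)})$. Then, pathwise,
\[
\sup_{t}\FDP_{\Ncal}(R_t)\leq \sup_{\mathbf{B}\in\mathcal{B}}\sup_{R\in\Ccal(\mathbf{B})}\FDP_{\Ncal}(R),
\]
and taking expectations gives $\SupFDR(\mathbf{R})\leq\delta$. This step uses neither independence nor nestedness of the $R_t$, so arbitrary dependence is covered automatically.

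\textbf{Step 1: constructing $\mathbf{B}^{(t)}$.} Fix $t$ and write $r=r_t^{\textnormal{o-DeBH}}$. Let $R_t$ be the indices of the $r$ largest values of $\gamma_iE_i$, $i\in[t]$. I mirror the construction in the proof of \Cref{thm:donation-elond}, setting $B^{(t)}_i=0$ for $i>t$ and otherwise:
\[
B^{(t)}_i=\Big(\tfrac{1}{\delta\gamma_i r}-E_i\Big)\vee(-1)+\frac{c}{\gamma_i r}\ \ (i\in R_t),\qquad B^{(t)}_i=-(E_i\wedge1)\ \ (i\in[t]\setminus R_t),
\]
where $c\geq0$ is the slack in \eqref{eq: online donation ebh}, spread evenly over $R_t$ (one could equally set $c=0$ and leave a strictly negative total).
\begin{itemize}
\item \emph{Lower bound.} For $i\in R_t$ we have $B_i^{(t)}\geq -1$ and $B_i^{(t)}\geq -E_i$, so $B_i^{(t)}\geq -(E_i\wedge 1)$. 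For $i\notin R_t$ the bound holds with equality.
\item \emph{Self-consistency.} For $i\in R_t$, $E_i+B_i^{(t)}\geq 1/(\delta\gamma_i r)$, so $R_t\in\Ccal(\mathbf{B}^{(t)})$.
\item \emph{Budget.} For $i\in R_t$, $-\gamma_iB_i^{(t)}$ (ignoring the slack term) equals $(\gamma_iE_i-(\delta r)^{-1})\wedge\gamma_i$; for $i\notin R_t$ it equals $\gamma_i(E_i\wedge1)$. The defining inequality in \eqref{eq: online donation ebh} is therefore exactly $\sum_{i\in[t]}\gamma_iB^{(t)}_i\leq0$, with equality when the slack $c$ is used.
\end{itemize}

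\textbf{Main obstacle: partial sums.} \Cref{def:online-compound-and-donation} requires $\sum_{i\in[s]}\gamma_iB_i\leq0$ for every $s$, not just $s=t$. I would first try to obtain this from the fact that \Cref{prop: sup donation} only uses $\sum_{i\in\Ncal}\gamma_i(E_i+B_i)\leq\sum_{i\in\Ncal}\gamma_iE_i+\sum_{i\notin\Ncal}\gamma_i(E_i\wedge1)$. This requires only the total budget and the lower bounds, both verified in Step 1. If the proposition must be applied literally, I would reorder the donation so that recipients in $R_t$ are charged against earlier donors; if that fails, I would re-run the argument of \Cref{prop: sup donation} directly with the finite-horizon total budget.

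\textbf{Nestedness (also delicate).} The ARC model requires $R_t\subseteq R_{t+1}$. Suppose the new index enters the top $r_t$ at time $t+1$. I would try to show that $r_t+1$ is then feasible: the common threshold $(\delta r)^{-1}$ decreases, and the newcomer's term dominates that of the displaced element because it has a larger $\gamma E$. Otherwise, $r_t$ itself stays feasible, since the tail only gains a nonnegative term $\gamma_{t+1}(E_{t+1}\wedge1)$. If this argument breaks, I would instead define the ARC sets as running unions of top-$r$ sets. This keeps SupFDR control, because each union is still realized by some donation via the same construction.

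\textbf{Strict improvement.} Let $r=r_t^{\textnormal{o-eBH}}$. The online e-BH rejection set consists of the top $r$ values of $\gamma_iE_i$, each with $\gamma_iE_i\geq(\delta r)^{-1}$. Hence every top-$r$ term $(\gamma_iE_i-(\delta r)^{-1})\wedge\gamma_i$ is nonnegative, as is every tail term, so $r$ satisfies \eqref{eq: online donation ebh}. This gives $r_t^{\textnormal{o-DeBH}}\geq r_t^{\textnormal{o-eBH}}$, and the top-$r$ structure gives containment of the rejection sets. For strictness, take $t=2$ with $E_1$ large (e.g.\ $\gamma_1E_1\geq(\delta)^{-1}+\gamma_1$) and $\gamma_2E_2$ slightly below $(2\delta)^{-1}$. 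Then online e-BH rejects only $H_1$, while the donation inequality at $r=2$ holds. A two-point distribution putting positive mass on this event completes the proof.
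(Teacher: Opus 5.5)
Your arguments for the two claims the theorem makes, SupFDR control and dominance, are correct and follow the paper's proof. You use the same per-time donation $\mathbf{B}^{(t)}$ (your slack term $c$ is optional) and the same three checks: the lower bound, $R_t\in\Ccal(\mathbf{B}^{(t)})$, and the time-$t$ budget being exactly the condition in \eqref{eq: online donation ebh}. You then apply \Cref{prop: sup donation}, and dominance follows because $r=r_t^{\textnormal{o-eBH}}$ makes every term of \eqref{eq: online donation ebh} nonnegative.

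Your partial-sums worry is a real point that the paper passes over. If, say, $1\in R_t$ with $\gamma_1E_1<(\delta r_t)^{-1}$, then every donation realizing $R_t$ has $\gamma_1B_1>0$. So the prefix constraint of \Cref{def:online-compound-and-donation} fails at $s=1$, and no reordering can fix that. Your first route is the right fix. The argument behind \Cref{prop: sup donation} uses only $B_i\ge-(E_i\wedge1)$ and the total budget, and it yields $\sup_t\FDP_{\Ncal}(R_t)\le\delta Y$ for the single variable $Y=\sum_{i\in\Ncal}\gamma_iE_i+\sum_{i\notin\Ncal}\gamma_i(E_i\wedge1)$, which has $\expect[Y]\le1$.

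The nestedness paragraph, however, is wrong in both branches, and you should delete it. Neither $\SupFDR(\mathbf{R})$ nor the dominance relation requires nested $R_t$, and the theorem does not assert nestedness. Here is a counterexample:
\begin{itemize}
\item Take $\delta=1/2$, so the thresholds are $(\delta r)^{-1}=2/r$. Let $\gamma_1=4/5$, $\gamma_2=\gamma_3=1/10$, $\gamma_i=0$ for $i\ge4$, and $(E_1,E_2,E_3)=(10,2,3)$. Then $(\gamma_iE_i)_{i\le3}=(8,1/5,3/10)$.
\item At $t=2$, $r=2$ gives $4/5+(1/5-1)=0$, so $R_2=\{1,2\}$.
\item At $t=3$ the order is $1,3,2$. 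Then $r=3$ gives $4/5+(3/10-2/3)+(1/5-2/3)=-1/30<0$. Meanwhile $r=2$ gives $4/5+(3/10-1)+\tfrac{1}{10}(2\wedge1)=1/5\ge0$.
\item Hence $R_3=\{1,3\}\not\supseteq R_2$.
\end{itemize}

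Your heuristic fails for the following reason. Lowering the threshold from $(\delta r)^{-1}$ to $(\delta(r+1))^{-1}$ raises the $r$ old top terms by at most $(\delta(r+1))^{-1}$ in total, and not at all on capped terms (here index $1$). The newcomer, by contrast, only has to beat the old $r$-th value, which may itself sit far below threshold. So its term can be close to $-(\delta(r+1))^{-1}$.

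The running-union fallback fails too. The set $\{1,2,3\}$ at $t=3$ is realizable by a donation only if that same $r=3$ sum is nonnegative. The zero weights $\gamma_i$, $i\ge4$, rule out help from later indices. So $\{1,2,3\}$ lies in no $\Ccal(\mathbf{B})$, and your SupFDR argument would not cover the modified procedure.
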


\subsubsection{Simulation results}
We compare online e-BH and donation online e-BH in the ARC setting using the
same local dependence simulation setup as in the main simulations section.
\begin{figure}[t]
    \centering
    \includegraphics[width=.6\textwidth]{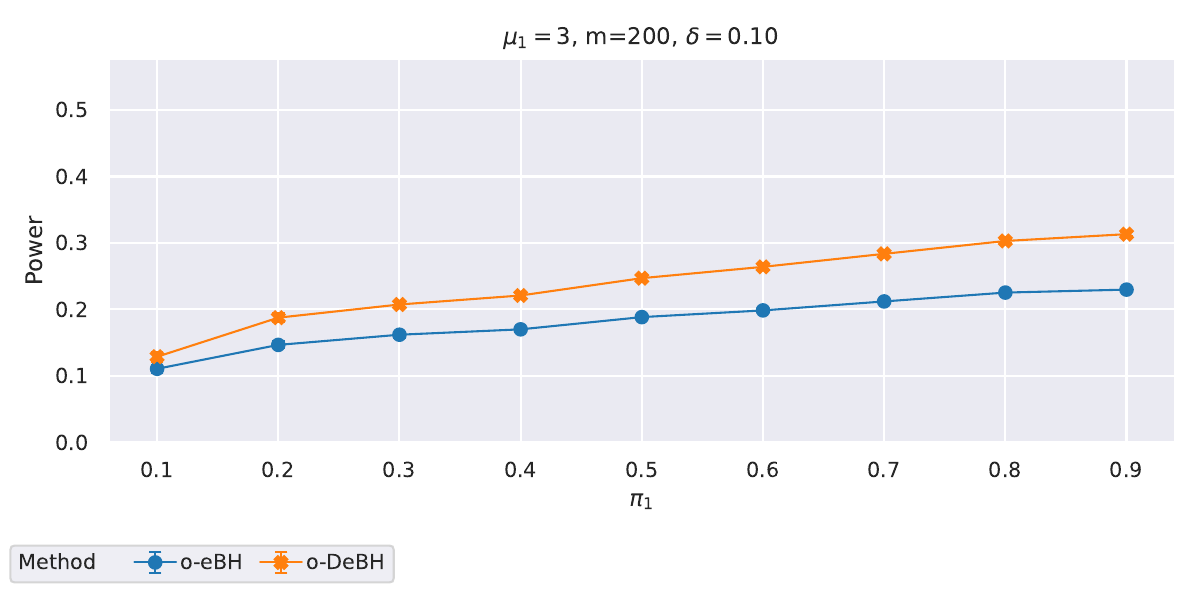}

    \includegraphics[width=.6\textwidth]{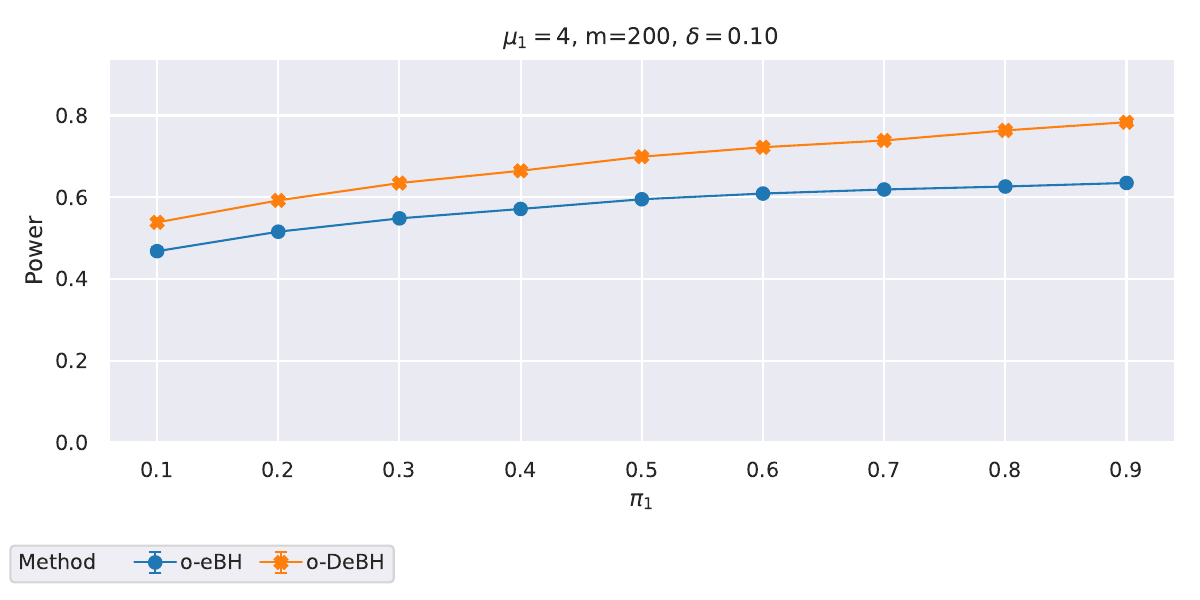}
    \caption{Power for online e-BH and donation online e-BH in ARC under the same setup as the main simulations section.}
    \label{fig:supp-arc-ebh-sim-power}
\end{figure}
\begin{figure}[t]
    \centering
    \includegraphics[width=.6\textwidth]{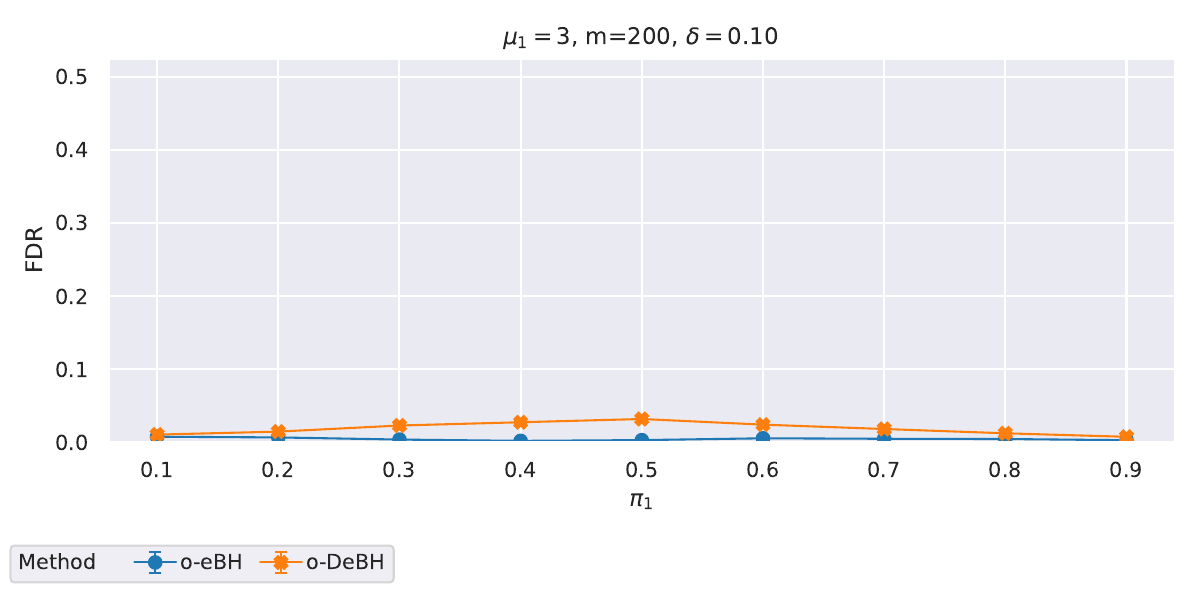}

    \includegraphics[width=.6\textwidth]{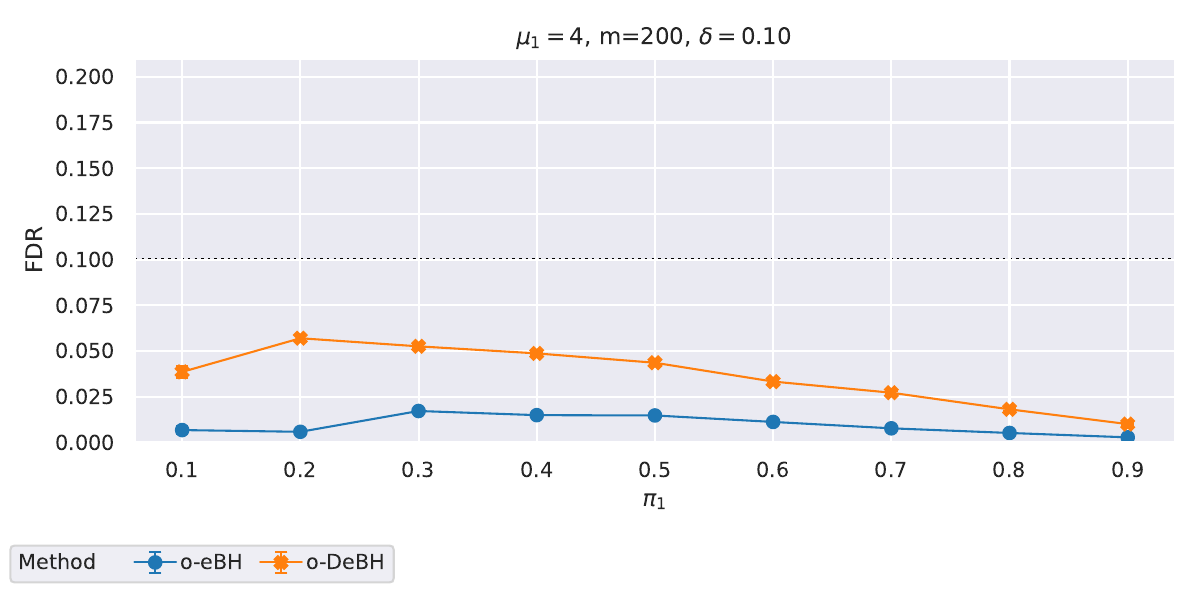}
    \caption{Empirical error diagnostics for online e-BH and donation online e-BH in ARC. Both methods stay controlled at the target level $\delta = 0.1$.}
    \label{fig:supp-arc-ebh-sim-fdr}
\end{figure}

\subsection{Donation e-TOAD for decision deadlines}
On the other hand, \citet{fisher_online_control_2022} studies an intermediate regime where each hypothesis $t$ has a deterministic deadline $d_t \geq t$, i.e., one must make a rejection decision at time $d_t$ for the $t$th hypothesis, or else the null hypothesis will be accepted and remain unrejected permanently. Let $\Acal_t \coloneqq \{i \in [t]: d_i \leq t\}$ denote the set of hypotheses whose deadlines have not yet passed by time $t$. The ARC model corresponds to setting $d_t = \infty$ for each $t \in \naturals$, while the classical online multiple testing setting corresponds to making $d_t = t$ for each $t \in \naturals$.

\paragraph{The e-TOAD procedure} Fisher's TOAD rule \citep{fisher_online_control_2022} now restricts rejection sets at certain deadlines. Let $m_t \coloneqq |\Acal_t|$ be the number of hypotheses whose deadlines have arrived by time $t$. Now, we let $\gamma_{(i):\Acal_t}$ and $E_{(i):\Acal_t}$ denote the values of $\gamma_i$ and $E_i$ corresponding to the $i$th largest $\gamma_i E_i$ among $i \in \Acal_t$. Define
\begin{align}
    r_t^{\textnormal{eTOAD}} \coloneqq \max\left\{r \in \{|R_{t - 1} \setminus \Acal_t|, \dots, m_t\}: \sum_{i \in \Acal_t} \ind\left\{E_i \geq \frac{1}{\delta \gamma_i r}\right\} \geq r - |R_{t - 1} \setminus \Acal_t|\right\},
\end{align} where $r_t^{\textnormal{eTOAD}} = 0$ if no such $r$ exists. Then, we define the corresponding discovery set as 
\begin{align}
    R_t = R_{t - 1} \cup \left\{i \in \Acal_t: E_i \geq \frac{1}{\delta \gamma_i r_t^{\textnormal{eTOAD}}}\right\}.
\end{align}
When $d_t = t$ for all $t \in \naturals$, this is equivalent to e-LOND, and if $d_t = \infty$, then this is the same as online e-BH. We can then define \emph{online donation e-TOAD} using the following quantities:
\begin{align}
    \bar{W}_t^{\textnormal{DeTOAD}}(r)
    &\coloneqq
    \sum_{i \in  R_{t - 1} \setminus \Acal_t} (\gamma_i E_i - (\delta r)^{-1}) \wedge \gamma_i \notag\\
    &\quad + \sum_{i \in [t] \setminus (\Acal_t \cup R_{t - 1})} \gamma_i (E_i \wedge 1),\\
     r_t^{\textnormal{DeTOAD}}
     &\coloneqq
     \max\Bigg\{r \in \{|R_{t - 1} \setminus \Acal_t|, \dots, m_t\}: \notag\\
     &\quad \sum_{i \in [r - |R_{t - 1} \setminus \Acal_t|]} (\gamma_{(i):\Acal_t} E_{(i):\Acal_t} - (\delta r)^{-1}) \wedge \gamma_{(i):\Acal_t} \notag\\
     &\quad + \sum_{i \in \{r - |R_{t - 1} \setminus \Acal_t| + 1, \dots, m_t\}} \gamma_{(i):\Acal_t} (E_{(i):\Acal_t} \wedge 1) \notag\\
     &\quad + \bar{W}_t^{\textnormal{DeTOAD}}(r) \geq 0\Bigg\},\label{eq: donation-etoad}
\end{align}
with $r_t^{\textnormal{DeTOAD}} = 0$ if no such $r$ exists. Here, $\bar{W}_t^{\textnormal{DeTOAD}}(r)$ is the excessive ``wealth'' that can be donated to (or accounted for from) e-values in $\Acal_t$ if a total of $r$ discoveries are made at time $t$. The rejection set $R_t$ is then defined as making $r_t^{\textnormal{DeTOAD}} - |R_{t - 1}|$ new discoveries corresponding to the indices in $\Acal_t$ with the largest $\gamma_i E_i$ values.

\begin{theorem}[Donation e-TOAD controls SupFDR]\label{thm:donation-etoad}
    Donation e-TOAD with the aforementioned discovery sets satisfies $\SupFDR(\mathbf{R}) \leq \delta$ for arbitrarily dependent e-values, and strictly improves over e-TOAD.
\end{theorem}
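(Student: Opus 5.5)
The plan is to exhibit, at every time $t$, a single valid $\boldsymbol{\gamma}$-weighted donation $\mathbf{B}^{(t)}$ such that $R_t \in \Ccal(\mathbf{B}^{(t)})$. SupFDR control then follows from \Cref{prop: sup donation}, because
\[
\sup_t \FDP(R_t) \leq \sup_{\mathbf{B}\in\mathcal{B}}\sup_{R\in\Ccal(\mathbf{B})}\FDP(R).
\]
This mirrors the proof of \Cref{thm:donation-elond}. Fix $t$ and write $r = r_t^{\textnormal{DeTOAD}}$. The new discoveries are the $r - |R_{t-1}\setminus\Acal_t|$ indices of $\Acal_t$ with the largest $\gamma_i E_i$, which I call $N_t$. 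The final set is $R_t = (R_{t-1}\setminus \Acal_t)\cup N_t$; I will check that the paper's bookkeeping of $R_{t-1}$ versus $\Acal_t$ makes this union have size exactly $r$.

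I define $\mathbf{B}^{(t)}$ coordinatewise as follows.
\begin{enumerate}
\item For $i \in R_t$, take $B_i = ((\delta\gamma_i r)^{-1} - E_i)\vee(-1)$. This is the smallest shift that keeps weighted self-consistency $E_i + B_i \geq (\delta\gamma_i r)^{-1}$ and respects $B_i\geq -(E_i\wedge 1)$. Its contribution is $-\gamma_i B_i = (\gamma_iE_i - (\delta r)^{-1})\wedge\gamma_i$, which may be negative; in that case the index is a recipient.
\item For unrejected $i\in[t]\setminus R_t$, take $B_i = -(E_i\wedge 1)$.
\item For $i>t$, take $B_i = 0$.
\end{enumerate}
The defining inequality in \eqref{eq: donation-etoad} says precisely that $\sum_{i\in[t]}\gamma_iB_i\leq 0$, since the three sums there are the corresponding $-\gamma_iB_i$ contributions over $N_t$, $\Acal_t\setminus N_t$, and the old indices via $\bar W_t^{\textnormal{DeTOAD}}(r)$. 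With that identity, $R_t\in\Ccal(\mathbf{B}^{(t)})$ holds directly.

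The main obstacle is the other half of the donation definition: the partial sums $\sum_{i\in[s]}\gamma_iB_i\leq 0$ must hold for every $s$, not only $s=t$. Negative-contribution (recipient) indices may appear early in time while donors appear late. I would handle this by reordering the donation. Proposition~\ref{prop: sup donation} only uses the bound $\sum_{i\in\Ncal}\gamma_i(E_i+B_i)\leq \sum_{i\in\Ncal}\gamma_iE_i+\sum_{i\notin\Ncal}\gamma_i(E_i\wedge 1)$, which needs only the total-sum constraint together with $B_i\geq -(E_i\wedge1)$. So I would either note that the time-ordering of partial sums is inessential, or restate the result for a relaxed class $\mathcal B'$ that requires only $\sum_{i\in[t]}\gamma_iB_i\leq 0$ with $B_i=0$ for $i>t$; the same pathwise argument bounds the supremum over all $t$ and all such $\mathbf B$. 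If the strict ordering must be kept, I would instead argue that there is always a valid ordering-independent bound, since the pathwise proof never uses the order.

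For strict improvement, I argue inductively that $R_{t-1}^{\textnormal{DeTOAD}}\supseteq R_{t-1}^{\textnormal{eTOAD}}$, and that any $r$ feasible for e-TOAD is feasible for donation e-TOAD. If the count condition holds at $r$, the top $r-|R_{t-1}\setminus\Acal_t|$ indices each have $\gamma_iE_i\geq(\delta r)^{-1}$, so every term in the first sum is nonnegative, and the remaining terms are nonnegative by construction. Hence $r_t^{\textnormal{DeTOAD}}\geq r_t^{\textnormal{eTOAD}}$. Monotonicity in $r$ of the rejected sets then gives containment, though a larger prior $R_{t-1}$ shifts the range of admissible $r$, and this needs a short check. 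Strictness follows from the case $d_t=t$, which reduces to e-LOND versus donation e-LOND, so the two-step example in the proof of \Cref{thm:donation-elond} applies.
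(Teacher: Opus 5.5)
Your SupFDR argument is the paper's. It uses the same $\mathbf{B}^{(t)}$: each $i\in R_t$ is shifted to exactly $(\delta\gamma_i r)^{-1}$, capped at $-1$, and each unrejected index donates $E_i\wedge 1$. It identifies the defining inequality in \eqref{eq: donation-etoad} with $\sum_{i\in[t]}\gamma_iB_i^{(t)}\le 0$ and then appeals to \Cref{prop: sup donation}. Your prefix-sum concern is well founded, and the paper's proof passes over it. Take $\delta=1$, $\gamma_1=\gamma_2=1/2$, $d_1,d_2\ge 2$, $E_1=0.6$, $E_2=1.8$. Donation e-TOAD rejects both hypotheses at time $2$, and $\gamma_1B^{(2)}_1=0.2>0$, so $\mathbf{B}^{(2)}$ violates \Cref{def:online-compound-and-donation}. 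Your fix is correct: the proof of \Cref{prop: sup donation} is pathwise and uses only the total-sum constraint and $B_i\ge -(E_i\wedge 1)$.

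The gap is in the domination step. You claim that, once the new rejections satisfy $\gamma_iE_i\ge(\delta r)^{-1}$, ``the remaining terms are nonnegative by construction.'' This is false in the donation run. $\bar W_t^{\textnormal{DeTOAD}}(r)$ contains $(\gamma_iE_i-(\delta r)^{-1})\wedge\gamma_i$ for each $i\in R_{t-1}\setminus\Acal_t$, and a past rejection made possible only by donations makes such a term negative.

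A counterexample already exists with $d_t=t$. Take $\delta=1$, $(\gamma_1,\gamma_2,\gamma_3)=(0.9,0.05,0.05)$ and $(E_1,E_2,E_3)=(1,2,20)$. At time $2$, donation e-TOAD rejects $H_2$ using $H_1$'s donation, while e-TOAD does not. At time $3$, e-TOAD rejects $H_3$ with $r=1$, the matching donation count is $r'=2$, and the $H_2$ term equals $0.1-0.5=-0.4$. So feasibility of $r'$ is a statement about the aggregate budget, not termwise. With deadlines it is not automatic, because hypotheses that donated while active may be rejected later.

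You need an argument of roughly this kind:
\begin{itemize}
\item Let $N$ be the active, not-yet-rejected indices with $\gamma_iE_i\ge(\delta r_t^{\textnormal{eTOAD}})^{-1}$, and set $r'=|R_{t-1}|+|N|\ge r_t^{\textnormal{eTOAD}}$.
\item Any $j\in R_{t-1}$ with $\gamma_jE_j<(\delta r')^{-1}$ was, at its rejection time $s_j$, ranked above every hypothesis then active and unrejected. Hence every $i\in N$ arrived after $s_j$.
\item Let $s^\ast$ be the last such rejection time; if there is no such $j$, every term is nonnegative. Then no index of $[s^\ast]$ that was unrejected at time $s^\ast$ belongs to $R_{t-1}\cup N$.
\item The budget at $(t,r')$ is therefore at least the budget certified nonnegative at time $s^\ast$, whose rejected terms only grow with $r$, plus nonnegative terms.
\item So $r_t^{\textnormal{DeTOAD}}\ge r'$, and the selected top set contains $N$.
\end{itemize}
The paper itself asserts domination in one line, but your stated justification has to be replaced by something like this. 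Your strictness example via $d_t=t$ is fine once domination is established.
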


\begin{proof}[Proof of \Cref{thm:donation-online-ebh,thm:donation-etoad}]
    The proofs of both of these theorems are similar to that of donation e-LOND: at each time $t$, there exists a $\boldsymbol{\gamma}$-weighted donation sequence $\mathbf{B}^{(t)}$ with $B_i^{(t)} \geq -(E_i \wedge 1)$ and $\sum_{i \in [t]} \gamma_i B_i \leq 0$ such that the compound e-values $\tilde{E}_i^{(t)} = E_i + B_i$ satisfy $\gamma_i \tilde{E}_i^{(t)} \geq (\delta |R_t|)^{-1}$ for all $i \in R_t$. By \Cref{prop: donation compound}, $(\tilde{E}_i^{(t)})_{i \in [t]}$ are valid $\boldsymbol{\gamma}$-weighted compound e-values. The weighted self-consistency collection $\Ccal(\mathbf{B}^{(t)})$ from \eqref{eq:online weighted self-consistency} applied to these compound e-values contains the rejection set $R_t$ for both procedures. Hence, we get SupFDR control via \Cref{prop: sup donation}.

    For online donation e-BH, set $r_t^\star=r_t^{\textnormal{o-DeBH}}$; for donation e-TOAD, set $r_t^\star=r_t^{\textnormal{DeTOAD}}$. The choice of $B_i^{(t)}$ for both procedures can be chosen as follows:
    \begin{align}
        B_i^{(t)} \coloneqq \begin{cases}
            ((\delta \gamma_i r_t^\star)^{-1} - E_i) \vee -1, & i \in R_t,\\
            -(E_i \wedge 1), & i \not\in R_t.
        \end{cases}
    \end{align}
    In the e-TOAD case, already rejected hypotheses in $R_{t-1}\setminus\Acal_t$ are included in $R_t$ and therefore use the first branch, while unrejected inactive hypotheses use the second branch. By definition of each procedure this $\mathbf{B}^{(t)}$ is a $\boldsymbol{\gamma}$-weighted donation sequence in both cases. Thus, we have shown our desired results.
\end{proof}

Similar to donation e-LOND, both donation online e-BH and donation e-TOAD strictly improve their non-donation counterparts since they consider a superset of rejection sets at each time $t$. Set-inclusion comparisons assume a fixed deterministic tie-breaking convention for hypotheses tied in $\gamma_iE_i$; alternatively, one may reject all hypotheses tied at the selected threshold in both the baseline and donation procedures.

\subsection{Donation e-BH for offline multiple testing}\label{sec: offline donation}

We treat the offline batch as the $t = m$ snapshot of the ARC model with all deadlines at $\infty$ and with $\gamma_1 = \dots = \gamma_m = 1/m$ and $\gamma_{t} = 0$ for all $t > m$. Thus, we can order hypotheses directly by e-values $E_i$, writing $E_{(1)} \geq \dots \geq E_{(m)}$. In this case, we can view eBH and donation eBH as taking $R_m$ of online eBH and online donation eBH, respectively.

\paragraph{Baseline e-BH.} The classical e-BH rule has the same weighted self-consistency form as \eqref{eq: online-ebh}:
\begin{align}
    r^{\textnormal{eBH}}
    &\coloneqq \max\left\{r \in [m]: \sum_{i \in [m]} \ind\!\left\{E_i \geq \frac{m}{\delta r}\right\} \geq r\right\}, \notag\\
    R^{\textnormal{eBH}}
    &= \left\{i \in [m]: E_i \geq \frac{m}{\delta r^{\textnormal{eBH}}}\right\}.
\end{align}

\paragraph{Donation-derived compound e-values.} Here, we only need to consider vectors $\mathbf{B} = (B_i)_{i \in [m]}$. A vector is a valid donation sequence if $B_i \geq -(E_i \wedge 1)$ and $\sum_{i = 1}^m B_i \leq 0$. Let $\hat{E}_i \coloneqq E_i + B_i$.
\begin{proposition}\label{prop:donation-compound-evalues}
    If $(E_i)_{i \in [m]}$ are e-values and $\mathbf{B}$ is a valid donation sequence, then $(\hat{E}_i)_{i \in [m]}$ are compound e-values, i.e., $\sum_{i \in \Ncal} \expect[\hat{E}_i] \leq m$.
\end{proposition}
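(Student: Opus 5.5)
The plan mirrors the proof of \Cref{prop: donation compound}, specialized to uniform weights $\gamma_i = 1/m$ on $[m]$. We may either invoke that proposition with $\gamma_i = 1/m$ for $i \le m$ and $\gamma_t = 0$ for $t > m$, extending $\mathbf{B}$ by $B_t = 0$ for $t > m$, or redo the short computation directly. The direct route is cleaner, so the proof will expand linearity of expectation and split the sum over all indices into nulls and non-nulls.

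First, the donation budget $\sum_{i \in [m]} B_i \le 0$ holds pathwise. Splitting it gives
\[
\sum_{i \in \Ncal} B_i \le -\sum_{i \in [m]\setminus \Ncal} B_i \le \sum_{i \in [m]\setminus\Ncal} (E_i \wedge 1) \le m - |\Ncal|,
\]
where the middle step uses the lower bound $B_i \ge -(E_i \wedge 1)$ and the last uses $E_i \wedge 1 \le 1$.

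Next, take expectations of $\sum_{i\in\Ncal}\hat E_i = \sum_{i\in\Ncal}E_i + \sum_{i\in\Ncal}B_i$. The first sum has expectation at most $|\Ncal|$ because each $E_i$ with $i \in \Ncal$ is an e-value. The second sum has expectation at most $m - |\Ncal|$ by the pathwise bound above. Adding these gives $\sum_{i\in\Ncal}\expect[\hat E_i] \le m$. Since $\hat E_i \ge E_i - (E_i \wedge 1) \ge 0$, the $\hat E_i$ are nonnegative, which the definition of compound e-values requires.

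The only mild subtlety is justifying linearity when some $\expect[B_i]$ might be infinite or undefined, since $B_i$ can be arbitrarily dependent on $\mathbf{E}$. This is handled by working with the pathwise inequality $\sum_{i\in\Ncal}\hat E_i \le \sum_{i\in\Ncal}E_i + (m-|\Ncal|)$, whose left side consists of nonnegative terms, and then taking expectations once. This avoids splitting the expectation over the individual $B_i$.
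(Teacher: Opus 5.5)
Your proof is correct and is essentially the paper's argument. The paper just cites \Cref{prop: donation compound} with $\gamma_i = 1/m$ on $[m]$, and that proposition's proof uses your decomposition: the nulls' own e-value mass, plus donations bounded by the non-nulls' withdrawals of at most $E_i \wedge 1 \leq 1$ each. Your direct pathwise version is slightly more careful in two ways. It uses only the total budget $\sum_{i \in [m]} B_i \leq 0$; the online definition of a $\boldsymbol{\gamma}$-weighted donation also requires the partial-sum constraints for $t < m$, which the offline definition does not impose, so citing the online proposition literally needs the remark that its proof only uses the constraint at the horizon. And it never takes expectations of individual $B_i$.
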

This follows from \Cref{prop: donation compound}.

\paragraph{Donation e-BH.} The offline donation method can be seen as using \eqref{eq: online donation ebh} with $t = m$:
\begin{align}
    r^{\textnormal{DeBH}} \coloneqq \max\Big\{r \in [m]: \sum_{i \in [r]} (\gamma_{(i)} E_{(i)} - (\delta r)^{-1}) \wedge \gamma_{(i)} + \sum_{i = r + 1}^m \gamma_{(i)} (E_{(i)} \wedge 1) \geq 0 \Big\}. \label{eq:donation-ebh-discovery-cond}
\end{align} Then, we let the discovery set $R^{\textnormal{DeBH}}$ reject the $r^{\textnormal{DeBH}}$ largest e-values.
\begin{theorem}
    Donation \eBH{} satisfies $\expect[\FDP_{\Ncal}(R^{\textnormal{DeBH}})] \leq \delta$ for arbitrarily dependent e-values and strictly improves over \eBH.
\end{theorem}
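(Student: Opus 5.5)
The plan is to reduce FDR validity to the offline specialization of the compound e-value framework. Then strict improvement follows from comparing the two threshold conditions and exhibiting one instance where they differ.

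\textbf{FDR control.} Write $r^\star = r^{\textnormal{DeBH}}$ and let $R^{\textnormal{DeBH}}$ consist of the indices of $E_{(1)},\dots,E_{(r^\star)}$, where $\gamma_i = 1/m$. Define the donation vector
\[
B_i \coloneqq \left(\frac{m}{\delta r^\star} - E_i\right)\vee(-1) \text{ for } i \in R^{\textnormal{DeBH}}, \qquad B_i \coloneqq -(E_i\wedge 1) \text{ otherwise}.
\]
Both branches satisfy $B_i \geq -(E_i\wedge 1)$. For $i \in R^{\textnormal{DeBH}}$ this holds because $(m/(\delta r^\star) - E_i)\vee(-1) \geq -E_i \wedge \ldots$; more precisely, it is at least $\max(-E_i,-1) = -(E_i\wedge 1)$.

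Multiplying by $\gamma_i = 1/m$, we get $-\gamma_i B_i = (\gamma_i E_i - (\delta r^\star)^{-1}) \wedge \gamma_i$ on the rejected set and $-\gamma_i B_i = \gamma_i(E_i\wedge 1)$ off it. The defining inequality \eqref{eq:donation-ebh-discovery-cond} at $r = r^\star$ is therefore exactly $-\sum_i \gamma_i B_i \geq 0$, i.e.\ $\sum_i B_i \leq 0$. So $\mathbf{B}$ is a valid donation, and $\hat E_i = E_i + B_i \geq m/(\delta r^\star)$ for every $i \in R^{\textnormal{DeBH}}$. This is the offline self-consistency condition, namely $\gamma_i\hat E_i \geq 1/(\delta |R|)$.

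Pathwise, this gives
\[
\FDP_{\Ncal}(R^{\textnormal{DeBH}}) \leq \frac{\delta}{m}\sum_{i\in\Ncal}\hat E_i .
\]
Taking expectations and applying \Cref{prop:donation-compound-evalues}, so that $\sum_{i\in\Ncal}\expect[\hat E_i]\leq m$, yields $\FDR \leq \delta$. Since $\mathbf{B}$ depends on the data, the step needing care is that \Cref{prop:donation-compound-evalues} allows arbitrary dependence between $\mathbf{B}$ and $\mathbf{E}$. This is true because its proof (via \Cref{prop: donation compound}) only uses the pathwise constraints on $\mathbf{B}$. Alternatively, one can invoke \Cref{prop: sup donation} with $\gamma_i = 1/m$ for $i\le m$ and $\gamma_i = 0$ beyond, which handles the supremum over $\mathbf{B}$ directly.

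\textbf{Dominance.} Let $r_0 = r^{\textnormal{eBH}}$. The e-BH rejection set is the top $r_0$ e-values, all satisfying $E_{(i)} \geq m/(\delta r_0)$, i.e.\ $\gamma_{(i)}E_{(i)} - (\delta r_0)^{-1} \geq 0$. Hence every summand in the first sum of \eqref{eq:donation-ebh-discovery-cond} at $r=r_0$ is nonnegative, and the second sum is also nonnegative. So $r_0$ is feasible, giving $r^{\textnormal{DeBH}} \geq r_0$. Because both procedures reject top-ordered e-values (with a common tie-breaking convention), this gives $R^{\textnormal{DeBH}} \supseteq R^{\textnormal{eBH}}$ almost surely.

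\textbf{Strictness.} I would take $m=2$ with a distribution placing positive mass on the following configuration. Choose $E_{(1)} = 2/\delta$, so it alone is rejected by e-BH since $2/(\delta\cdot 1) \le E_{(1)}$. Choose $E_{(2)}$ slightly below the e-BH threshold $m/(2\delta) = 1/\delta$, so e-BH has $r_0 = 1$. At $r=2$ the donation condition reads
\[
\left(\tfrac{1}{2}E_{(1)} - \tfrac{1}{2\delta}\right)\wedge \tfrac12 + \left(\tfrac12 E_{(2)} - \tfrac{1}{2\delta}\right)\wedge\tfrac12 \geq 0 .
\]
With $\delta$ small, the first term equals $\tfrac12$, and the second is $\geq -\tfrac12$ whenever $E_{(2)} \geq 1/\delta - 1$. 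So donation e-BH rejects both hypotheses on this event. The null means can be kept at most one by placing the remaining mass near zero.

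The main obstacle is only the bookkeeping in identifying \eqref{eq:donation-ebh-discovery-cond} with the donation budget constraint under the $\gamma$ scaling. The rest is routine.
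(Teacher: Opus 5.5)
Your proof is correct and is essentially the paper's argument specialized to the offline case. The paper deduces this theorem from \Cref{thm:donation-online-ebh}, whose proof uses exactly your donation vector ($B_i = ((\delta\gamma_i r^\star)^{-1}-E_i)\vee(-1)$ on the rejected set, $-(E_i\wedge 1)$ elsewhere) plus the compound e-value bound, and it gets dominance from the same ``superset of feasible $r$'' observation that you make explicit by checking feasibility of $r^{\textnormal{eBH}}$. Your version is slightly more careful than the paper's one-line reduction in two respects: you give an explicit $m=2$ example, and your main route relies on \Cref{prop:donation-compound-evalues}, which needs only $\sum_{i\in[m]} B_i\leq 0$. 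Your fallback to \Cref{prop: sup donation} is formally looser, since its class $\mathcal{B}$ requires nonpositive partial sums, which your data-dependent $\mathbf{B}$ need not have.
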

The proof follows from \Cref{thm:donation-online-ebh} since the offline method can be seen as a special case of the online ARC model.

\begin{remark}
    While we have primarily discussed e-value based methods in this section, our results directly imply improvements of p-value methods. This includes the online BH method of \citet{fischer_online_generalization_2025} and the offline Su method \citep{su_fdr-linking_2018} using the calibrator developed in \citet{xu_bringing_closure_2025a}, as well as the Benjamini-Yekutieli \citep{benjamini_control_false_2001} methods via the calibrator specified in \eqref{eq: by calibrator}. We will not go into details here; the improvements follow directly from calibrating p-values to e-values and then applying one of the online or offline donation e-BH procedures, or the e-TOAD procedure if one is to use p-values in the decision deadlines setting.
\end{remark}

\section{Randomization for donation algorithms}\label{sec: randomized donation}
\citet{xu_more_powerful_2023} introduced the notion of using randomization in the form of stochastic rounding to improve the power of a variety of multiple testing procedures, and \citet{xu_online_multiple_2023} was able to show that randomized versions of e-LOND and r-LOND can be derived using this technique. We extend the use of randomization to donation procedures, and show that one can utilize randomization to further improve the power of donation e-LOND and donation r-LOND. The key idea we recognize here is that while donation procedures utilize part of the excess wealth of e-values over the rejection threshold, it does not use it completely. Thus, when there is excess wealth remaining even after donation (e.g., an e-value is just slightly below the threshold of rejection even after donating), we can use it to improve the power of a procedure.

A stochastically rounded e-value (or compound e-value) is one where we have an e-value (or any nonnegative random variable) $X$ and a test level $\hat\alpha \in (0, 1]$ that might be arbitrarily dependent on $X$, and we produce the following random variable
\begin{align}
    S_{\hat\alpha}(X) \coloneqq \ind\left\{U \leq \hat\alpha X\right\}\hat\alpha^{-1}.
\end{align} Here, $U$ is a uniform random variable on $[0, 1]$ that is independent of both $X$ and $\hat\alpha$, i.e., produced through external randomness. It is easy to see that $\expect[S_{\hat\alpha}(X)] \leq \expect[X]$. Thus, we can replace a compound e-value or an e-value with its stochastically rounded version and still maintain the validity properties of interest. 

As a result of the flexibility of stochastic rounding, there can be many ways to incorporate it into the donation framework. We will focus on improving donation e-LOND as an example, and show there is a simple way that will allow it to strictly improve over donation e-LOND.

We first define a restricted version of stochastic rounding, where we only round the part of an unrejected e-value that cannot be utilized by the donation framework, i.e., $(E_t - 1)$ if $E_t \geq 1$ and 0 otherwise. Thus, for $\hat\alpha\in(0,1)$ we define the \emph{restricted stochastic rounding} of $X$ at level $\hat\alpha$ as
\begin{align}
    \bar{S}_{\hat\alpha}(X) \coloneqq \begin{cases}
        X & X \leq 1 \text{ or }X \geq \hat\alpha^{-1},\\
        1 + \ind\left\{U\leq \frac{\hat\alpha (X- 1)}{1 - \hat\alpha}\right\} (\hat\alpha^{-1} - 1), & X \in (1, \hat\alpha^{-1}).
    \end{cases}
\end{align}
If $\hat\alpha\geq1$, the hypothesis is already rejected by any e-value $X\geq1$, and we use the convention $\bar S_{\hat\alpha}(X)=X$ rather than applying the fractional rounding formula.

Thus, we can apply the donation e-LOND sequence of test levels $\boldsymbol{\alpha}$ in \eqref{eq:donation-alpha} to e-values $\bar{S}_{\alpha_1}(E_1), \bar{S}_{\alpha_2}(E_2), \dots$ and refer to this as \emph{randomized donation e-LOND}. We first define the following quantities.
\begin{align}
    \hat{\alpha}_t &\coloneqq \frac{\delta \gamma_t(|R_{t - 1}| + 1)}{1 - (\delta(|R_{t - 1}| + 1)\bar{R}_t \wedge 1)}.\\
    \bar{R}_t &\coloneqq \sum_{i \in R_{t - 1}} (\gamma_i\bar{S}_{\hat{\alpha}_i}(E_i) - \delta(|R_{t - 1}| + 1)^{-1}) \wedge \gamma_i + \sum_{i \in [t - 1] \setminus R_{t - 1}} \gamma_i(\bar{S}_{\hat\alpha_i}(E_i) \wedge 1).\\
    &= \sum_{i \in R_{t - 1}} (\gamma_i\bar{S}_{\hat{\alpha}_i}(E_i) - \delta(|R_{t - 1}| + 1)^{-1}) \wedge \gamma_i + \sum_{i \in [t - 1] \setminus R_{t - 1}} \gamma_i(E_i \wedge 1).
\end{align}
$\hat{\alpha}_t$ is the threshold for the $t$th hypothesis such that it will be deterministically rejected (i.e., doesn't rely on randomization). $\bar{R}_t$ is the analog of $\bar{W}_t$ for donation e-LOND, but utilizes the stochastically rounded e-values instead. We then observe that we reject the $t$th hypothesis when
\begin{align}
    \alpha_t = \frac{\hat\alpha_t}{U_t(1 - \hat\alpha_t) + \hat\alpha_t}.
\end{align}
where $U_1, U_2, \dots$ are uniform random variables on $[0, 1]$ independent of $\mathbf{E}$.
\begin{theorem}
The randomized donation e-LOND algorithm ensures control of SupFDR, and strictly improves over donation e-LOND.
\end{theorem}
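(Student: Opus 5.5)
The plan is to show that randomized donation e-LOND is exactly donation e-LOND run on the restricted stochastically rounded stream $\bar S_{\hat\alpha_1}(E_1),\bar S_{\hat\alpha_2}(E_2),\dots$. This stream is a valid e-value stream, so SupFDR control transfers from \Cref{thm:donation-elond}. Dominance comes from comparing rejection thresholds.

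\textbf{Step 1 (validity of rounded e-values).} Fix $t$. The level $\hat\alpha_t$ is a function of $E_1,\dots,E_{t-1}$ and $U_1,\dots,U_{t-1}$, hence $\Fcal_{t-1}$-measurable in the enlarged filtration that includes the external uniforms. On the region $X\in(1,\hat\alpha^{-1})$, condition on $(E_t,\hat\alpha_t)$ and use that $U_t$ is independent of both:
\[
\expect\big[\bar S_{\hat\alpha_t}(E_t)\mid E_t,\hat\alpha_t\big]=1+\frac{\hat\alpha_t(E_t-1)}{1-\hat\alpha_t}\,(\hat\alpha_t^{-1}-1)=E_t .
\]
Off that region, and when $\hat\alpha_t\ge 1$, the map is the identity. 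Hence $\expect[\bar S_{\hat\alpha_t}(E_t)]=\expect[E_t]\le 1$ for $t\in\Ncal$, so $(\bar S_{\hat\alpha_t}(E_t))_t$ are e-values. Since \Cref{thm:donation-elond}, through \Cref{prop: sup donation}, needs only marginal e-value validity under arbitrary dependence, donation e-LOND applied to this stream controls SupFDR.

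\textbf{Step 2 (identify the algorithm).} The wealth is unaffected by rounding on unrejected indices: if $i\notin R_{t-1}$, either $\bar S=E_i$ or $\bar S=1$ with $E_i>1$, so $\bar S\wedge 1=E_i\wedge 1$. This gives the second form of $\bar R_t$. Consequently $\hat\alpha_t$, the donation e-LOND level built from the rounded stream, is determined by the past. Rejection then means $\bar S_{\hat\alpha_t}(E_t)\ge\hat\alpha_t^{-1}$. If $\hat\alpha_t<1$, this holds iff $E_t\ge\hat\alpha_t^{-1}$, or $E_t>1$ and $U_t\le \hat\alpha_t(E_t-1)/(1-\hat\alpha_t)$. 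Solving the latter for $E_t$ gives $E_t\ge 1+U_t(1-\hat\alpha_t)/\hat\alpha_t=\alpha_t^{-1}$ with $\alpha_t$ as displayed. Because $\alpha_t^{-1}\le\hat\alpha_t^{-1}$, this single condition covers both cases. So the stated algorithm coincides with donation e-LOND on the rounded stream, and Step 1 gives SupFDR control.

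\textbf{Step 3 (strict improvement; the main obstacle).} Dominance is not immediate, because rounding can lower a rejected e-value's excess. In fact, rejected e-values are never rounded downward: they satisfy $E_i\ge\alpha_i^{-1}$, and either $\bar S\ge\hat\alpha_i^{-1}>E_i$ or $\bar S=E_i$. Unrejected e-values keep the same $\wedge 1$ contribution. I would then argue by induction on $t$ that $R_{t-1}\supseteq R^{\textnormal{d-eLOND}}_{t-1}$ and $\hat\alpha_t\ge\alpha^{\textnormal{d-eLOND}}_t$. The difficulty is that a larger $R_{t-1}$ changes $\bar W_t$ nonmonotonically. I would handle this as in the e-LOND comparison, using the feasibility characterization: the deterministic donation vector $\mathbf B^{(t)}$ of the baseline remains feasible for the rounded stream, because the rounded values pointwise dominate the originals on rejected indices and agree in $\wedge 1$ elsewhere. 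The newly rejected extra indices contribute nonnegative excess at the larger $|R_{t-1}|$. If this gap is not fully resolvable, I would fall back to the pointwise bound $\alpha_t\ge\hat\alpha_t$.

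\textbf{Step 4 (strictness).} Strictness then follows at $t=1$. Here $\hat\alpha_1=\delta\gamma_1<1$, so with positive probability $E_1\in(\alpha_1^{-1},(\delta\gamma_1)^{-1})$, e.g. from a two-point distribution, and the randomized method rejects $H_1$ while donation e-LOND does not.
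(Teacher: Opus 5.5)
Your Steps 1, 2 and 4 are fine. The SupFDR half is the paper's argument made more careful. Conditioning on $(E_t,\hat\alpha_t)$ shows that the restricted rounding preserves means. Your check that $\bar S_{\hat\alpha_i}(E_i)\wedge 1=E_i\wedge 1$ on unrejected indices, which makes the displayed randomized rule exactly donation e-LOND on the rounded stream, is something the paper only asserts.

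The problem is Step 3, and it cannot be repaired. The inclusion $R_t\supseteq R_t^{\textnormal{d-eLOND}}$ that you induct on is false, because the donation level is not monotone in the past rejection set. Both claims you lean on fail:
\begin{itemize}
\item The baseline vector $\mathbf{B}^{(t)}$ sets $B_j=-(E_j\wedge 1)$ on any index $j$ that the randomized method rejected but donation e-LOND did not. So it certifies nothing about $j$.
\item Such a $j$, having been rejected with donated wealth, can have \emph{negative} excess $\gamma_j\bar S_{\hat\alpha_j}(E_j)-(\delta(|R_{t-1}|+1))^{-1}$, even at the larger cardinality.
\end{itemize}
Your fallback $\alpha_t\ge\hat\alpha_t$ only compares the method with its own deterministic part. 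The inequality $\hat\alpha_t\ge\alpha_t^{\textnormal{d-eLOND}}$ holds when both procedures share the same $R_{t-1}$. That shared history is exactly what the paper's ``via construction'' silently assumes, so the paper's short proof has the same hole.

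Here is a concrete failure. Take $\delta=1/2$, $(\gamma_1,\gamma_2,\gamma_3)=(1/2,3/10,1/10)$, and non-null hypotheses with $(E_1,E_2,E_3)=(100,3/2,5/2)$.
\begin{itemize}
\item At $t=1$, both methods reject $H_1$.
\item At $t=2$, both use $\hat\alpha_2=\alpha_2^{\textnormal{d-eLOND}}=3/5$. Donation e-LOND does not reject $H_2$, since $3/2<5/3$. The randomized method rejects it on $\{U_2\le 3/4\}$, where $\bar S_{3/5}(E_2)=5/3$.
\item At $t=3$, donation e-LOND has $\bar W_3=1/2+3/10=4/5$, hence $\alpha_3^{\textnormal{d-eLOND}}=1/2$, and it rejects $H_3$.
\item At $t=3$, the randomized method has $|R_2|=2$. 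Index $2$ contributes $\frac{1}{2}-\frac{2}{3}=-\frac{1}{6}$, so $\bar R_3=1/3$ and $\hat\alpha_3=3/10<1/2$. It rejects $H_3$ iff $1+\frac{7}{3}U_3\le\frac{5}{2}$, i.e. iff $U_3\le 9/14$.
\end{itemize}
On an event of probability $\frac{3}{4}\cdot\frac{5}{14}=\frac{15}{56}$ the final sets are $\{1,2\}$ versus $\{1,3\}$, so neither procedure dominates the other. The same mechanism gives instances at $\delta=0.1$ with a few strong early rejections.

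What can be proved is SupFDR control, a one-step comparison from a common history, and positive-probability extra rejections as in your Step 4. Almost-sure dominance in the sense defined in the paper does not hold. The ``strictly improves'' part of the statement therefore has to be weakened rather than proved.
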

\begin{proof}
We note SupFDR control arises from the fact that $\bar{S}_{\hat\alpha_t}(E_1), \dots$ are valid e-values due to the definition of restricted stochastic rounding. Thus, \Cref{thm:donation-elond} immediately implies SupFDR control. We can see the strict improvement via the fact that $U_t$ has nonzero chance of increasing $\alpha_t$ over $\hat\alpha_t$, which also is at least as large as $\alpha_t$ of donation e-LOND defined in \eqref{eq:donation-alpha} via construction. Thus, we have shown our desired result.
\end{proof}

\subsection{Simulation results}
We compare donation e-LOND and randomized donation e-LOND under the same
local dependence simulation setup as in the main simulations section.
\begin{figure}[t]
    \centering
    \includegraphics[width=.6\textwidth]{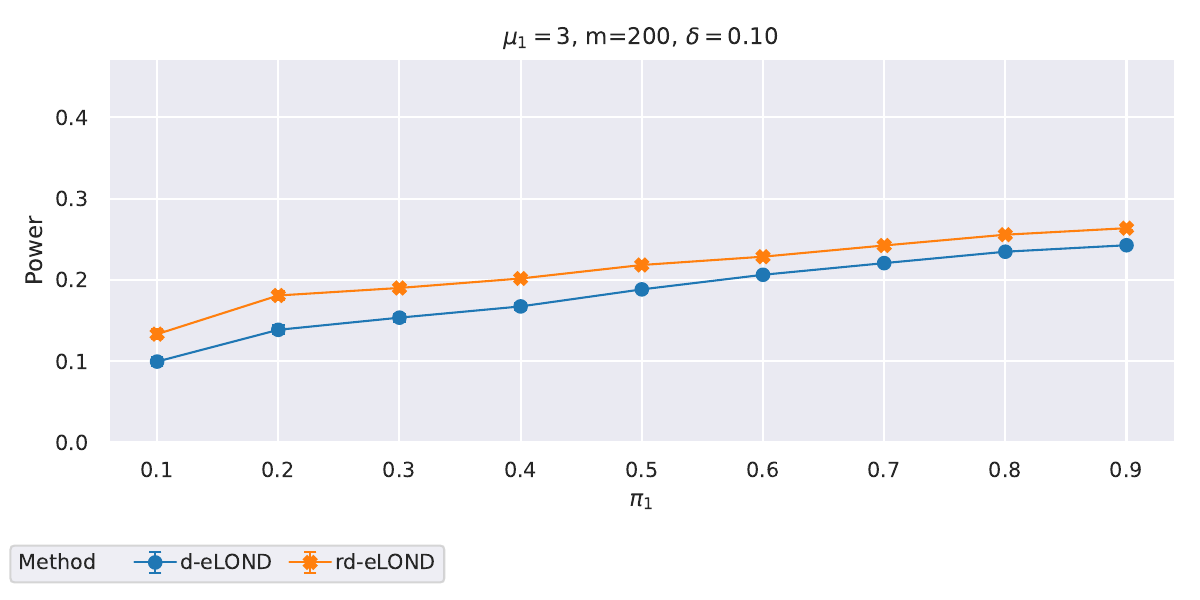}

    \includegraphics[width=.6\textwidth]{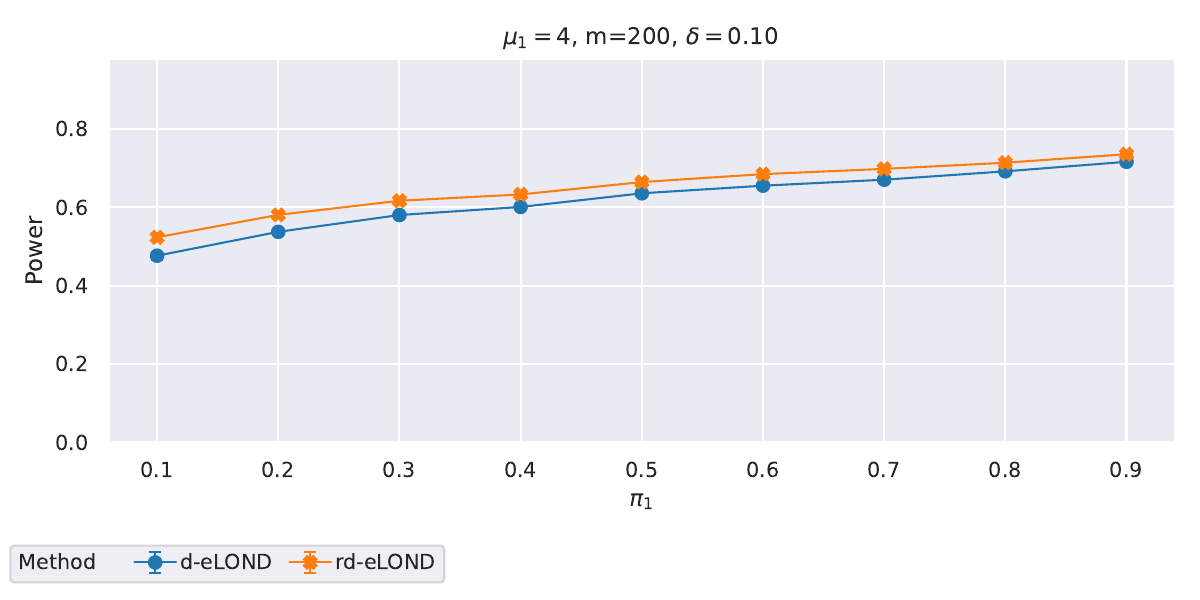}
    \caption{Power for donation e-LOND and randomized donation e-LOND under the same setup as the main simulations section.}
    \label{fig:supp-randomized-donation-sim-power}
\end{figure}
\begin{figure}[t]
    \centering
    \includegraphics[width=.6\textwidth]{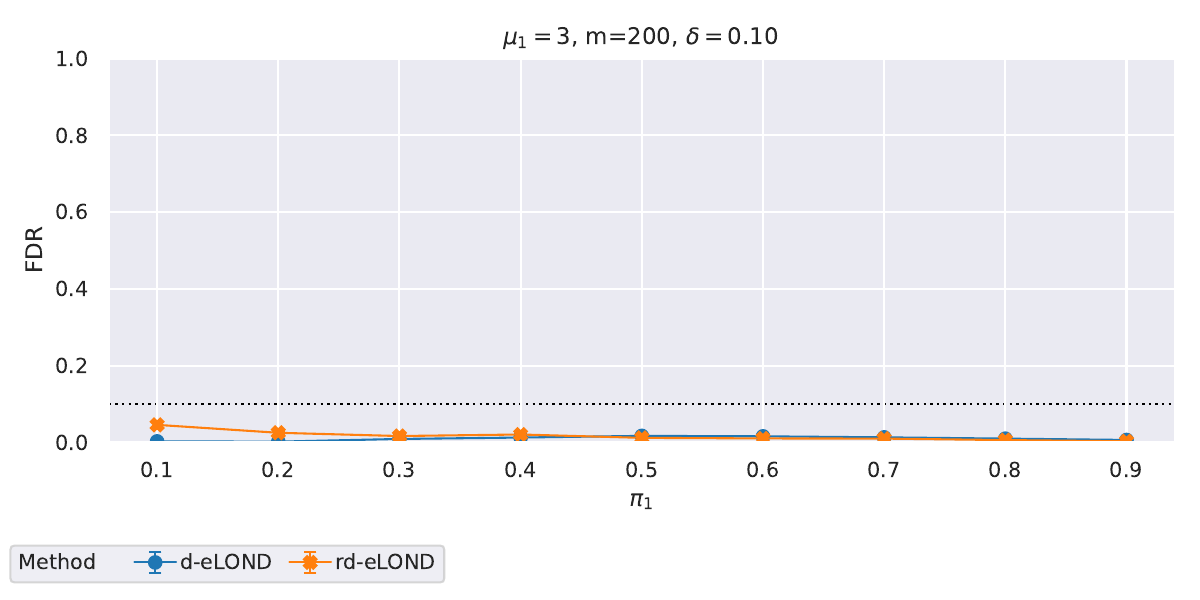}

    \includegraphics[width=.6\textwidth]{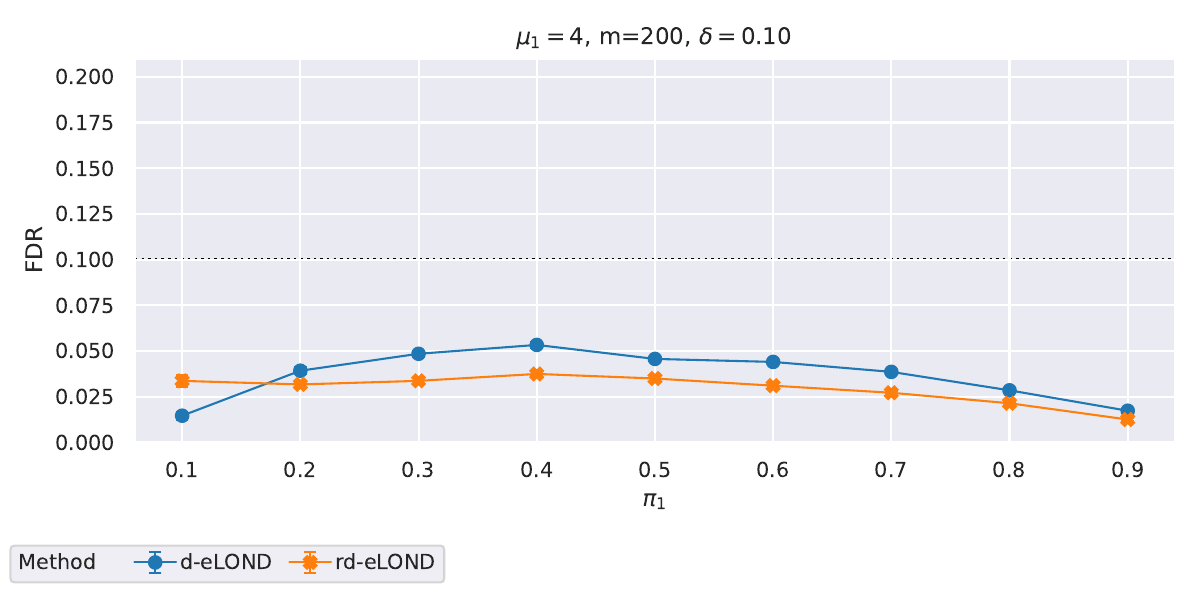}
    \caption{Empirical error diagnostics for donation e-LOND and randomized donation e-LOND. Both methods stay controlled at the target level $\delta = 0.1$ in this simulation.}
    \label{fig:supp-randomized-donation-sim-fdr}
\end{figure}
 
\end{document}